\newtheorem{theorem}{Theorem}
\newtheorem{definition}{Definition}
\newtheorem{lemma}{Lemma}
\newtheorem{remark}{Remark}
\newtheorem{proposition}{Proposition}
\newcommand{\Sign}{\text{Sign}}
\newcommand\norm[1]{\left\lVert#1\right\rVert}
\title{Stability analysis of a novel delay differential equation model of HIV infection of CD4\textsuperscript{+} T-cells}
\author{
  Hoang Anh NGO \\
  LIX, École Polytechnique\\
  Institut Polytechnique de Paris\\
  91120 Palaiseau, France \\
  
    \And
 Hung Dang NGUYEN \\
 Vietnamese - German Faculty of Medicine (VGFM) \\
 Pham Ngoc Thach University of Medicine \\
 Ho Chi Minh City 710000, Vietnam› \\
 
    \And
 Mehmet Dik \thanks{Corresponding author, Email address: \href{mailto:dikm@beloit.edu}{dikm@beloit.edu} }\\
  Department of Mathematics and Computer Science \\
  Beloit College\\
  WI 53511, United States \\
}
\begin{document}
\maketitle

\begin{abstract}
In this paper, we investigate a novel 3-compartment model of HIV infection of CD4\textsuperscript{+} T-cells with a mass action term by including two versions: one baseline ODE model and one delay-differential equation (DDE) model with a constant discrete time delay. 

Similar to various endemic models, the dynamics within the ODE model is fully determined by the basic reproduction term $R_0$. If $R_0 < 1$, the disease-free (zero) equilibrium will be asymptotically stable and the disease gradually dies out. On the other hand, if $R_0 > 1$, there exists a positive equilibrium that is globally/orbitally asymptotically stable within the interior of a predefined region.

To present the incubation time of the virus, a constant delay term $\tau$ is added, forming a DDE model. In this model, this time delay (of the transmission between virus and healthy cells) can destabilize the system, arising periodic solutions through Hopf bifurcation.

Finally, numerical simulations are conducted to illustrate and verify the results.
\end{abstract}

\keywords{HIV \and Globally asymptotical stability \and Periodic solution \and Delay term \and Steady state}

\section{Inroduction}

In the field of epidemiology, although our knowledge of viral dynamics  and virus-specific immmune responses has not fully developed, numerous mathematical models have been developed an investigated to describe the immunological response to HIV infection (for example, \cite{anderson1989complexhivmodel, culshaw2000hivmodel, deboer1998comparision, kepler1993bcells, percus1993sizetcells, mclean1990hivmodel} and references therein). The models have been used to explain different phenomena within the host body, and by directly applying the models to real clinical data, they can also predicts estimates of many measures, including the death rate of productively infected cells, the rate of viral clearance or the viral production rate.  

These simple HIV models have played an essential role in providing a better understanding in the dynamics of this infectious diseases, while providing very important biological meanings for the (combined) drug therapies used against it. For more references and detailed meta mathematical analysis on these models in general, we can refer to survey papers written by Kirschner, 1996 \cite{kirschner1996hivdynamics} or Perelson and Nelson, 1999 \cite{perelson1999hivmodelinvivo}

The simplest HIV model, only considering the dynamics of the virus concentration, is

\begin{equation}
    \frac{dV}{dt} = P - cV
\end{equation}

where

\begin{itemize}
    \item $P$ is an unknown function representing the rate of production of the virus,
    \item $V$ is the virus concentration.
\end{itemize}

The dynamics of the population of target cells (CD4\textsuperscript{+} T-cells for HIV or hepatic cells for HBV and HCV) is still not fully understood. Nevertheless, a reasonable, simple model for this population of cells, which can be extended further in various models, is

\begin{equation} \label{simple-T}
    \frac{dT}{dt} = s - dT + aT \left( 1 - \frac{T}{T_{\max}} \right) 
\end{equation}

with 

\begin{itemize}
    \item $s$ representing the rate at which new T-cells are created from sources within the body, such as the thymus, or from the proliferation of existing T-cells,
    \item $d$ being the death rate per T-cells,
    \item $a$ is the maximum proliferation rate of target T-cells, when the proliferation is represented by a logistic function, and
    \item $T_{\max}$ is the population density of T-cells at which proliferation shuts off.
\end{itemize}

Human immunodeficiency virus, or HIV, is a virus belonging to the genus Lentivirus, part of the family Retroviridae \cite{taxonomy}. It has an outer envelope of lipid and viral proteins, which encloses its core. The virion core contains two positive-sense single-stranded RNA and the enzyme reverse transcriptase, an RNA-dependent DNA polymerase. 

HIV, like most viruses, cannot reproduce by itself. Therefore, they require a host cell and its materials to replicate. For HIV, it infects a variety of immune cells, including helper T cells, lymphocytes, monocytes, and dendritic cells by attaching to a specific receptor called the CD4 receptor contained in the cell membrane. Along with a chemokine coreceptor, the virus is granted entry into the cell. Inside the host cell, the viral RNA is transcribed into DNA by the enzyme reverse transcriptase. However, the enzyme has no proofreading capacity, so errors often occur during this process, giving rise to 1 to 3 mutations per newly synthesized virus particle. The DNA provirus is then transported into the nucleus and inserts itself into the host cell DNA with the aid of viral integrase. Thus, the viral genetic code becomes a stable part of the cell genome, which is then transcribed into a full-length mRNA by the host cell RNA polymerase. The full-length mRNA would be 

\begin{enumerate}[label=(\alph*)]
    \item the genomes of progeny virus, which would be transported to the cytoplasm for assembly,
    \item translated to produce the viral proteins, including reverse transcriptase and integrase, and 
    \item spliced, creating new translatable sequences
\end{enumerate}

The nonstructural genes on the virus also encode regulatory proteins that have diverse effects on the host cell, including down-regulating host cell receptors like CD4 and major histocompatibility complex class I molecules, aiding in synthesizing full-length HIV RNAs and enabling transportation of the viral mRNAs out of the nucleus without being spliced by the host cell. Altogether, these effects enable viral mRNAs to be correctly translated into polypeptides and packaged into virions. These components are then transported to the plasma membrane and assembled into the mature virion, exiting the cell.

A person can contract the virus through one of four routes: sexual contact, either homo- or heterosexual; transfusions with whole blood, plasma, clotting factors and cellular fractions of blood; contaminated needles; perinatal transmission. The virus causes tissue destruction, immunodeficiency and can progress to acquired immunodeficiency syndrome (AIDS), completely breaking down the human body’s defense mechanisms. These patients are now more susceptible to infections that should be harmless to a normal person, such as P.jiroveci pneumonia or tuberculosis, and the conditions are worse as well. So far, treatments for the disease mainly target reverse transcriptase, viral proteases, and viral integration and fusion, dealing with the virus infection before it progresses to AIDS. Currently,  one treatment for HIV is highly active antiretroviral therapy (HAART), which includes a combination of drugs including nucleoside/nucleotide analog reverse transcriptase inhibitors, nonnucleoside reverse transcriptase inhibitors, protease inhibitors, fusion inhibitors, integrase inhibitors, and coreceptor blockers. These drugs are administered based on individualized criteria such as tolerability, drug-drug interactions, convenience/adherence, and possible baseline resistance. Although HAART can lower the viral load, the virus reemerges if the treatment is stopped. Therefore, HIV infection is currently both chronic and incurable. \cite{harvey2012microbiology}

Whenever the population reaches $T_{\max}$, it will decrease, allowing us to impose an upper constrain $d T_{\max} < s$. With this constrain, the equation \eqref{simple-T} has a unique equilibrium at

\begin{equation}
    \hat{T} = \frac{T_{\max}}{2a} \left[ a - d + \sqrt{(a - d)^2 + \frac{4as}{T_{\max}}}\right]
\end{equation}

In 1989, Perelson \cite{perelson1989hivmodelimmune} proposed a general model for the interaction between the human immune system and HIV; in the same paper, he also simplified that general model into a simpler model with four compartments, whose dynamics are described by a system of four ODEs:

\begin{itemize}
    \item Concentration of cells that are uninfected ($T$),
    \item Concentration of cells that are latently infected ($T^*$),
    \item Concentration of cells that are actively infected ($T^{**}$), and 
    \item Concentration of free infectious virus particles ($v$).
\end{itemize}

Later, he extended his own model in Perelson et al. (1993) \cite{perelson1993hivmodel} by proving various mathematical properties of the model, choosing parameter values from a restricted set that give rise to the long incubation period characteristic of HIV infection, and presenting some numerical solutions. He also observed that his model exhibits many clinical symptoms of AIDS, including:

\begin{itemize}
    \item Long latency period,
    \item Low levels of free virus in the environment, and
    \item Depletion of CD4\textsuperscript{+} cells.
\end{itemize}

Time delay, of one type or another, have been incoporated into biological models in various research papers (for example, \cite{perelson1989hivmodelimmune}); particularly, by the similar theoretical analysis to dynamical population system (in \cite{nowak1996populationdynamics}), they also play an important role in the dynamical properties of the HIV infection models. Generally speaking, systems of delay-differential equations (DDEs) have much more complicated dynamics than that of ordinary differential equations (ODEs), as the time delay can cause a stable equilibrium of the ODE system to become unstable, leading to the fluctuation of popolations. In studying the viral clearance rate, Perelson et al. (1996) \cite{perelson1996hivmodelinvivo} stated that there are two different types of delay that can occur in an HIV infection model:

\begin{itemize}
    \item \textbf{Pharmacological delay}: This delay occurs between the ingestion of drug and its appearance within cells,
    \item \textbf{Intracellular delay}: This delay happens between the initial HIV infection of a cell and the release of virions within the environment. 
\end{itemize}

There has also been various attempts by different authors, trying to come up with the most realistic model by implementing these delays, in one form or another (constant delay, discrete delay, continuous delay, etc.). For example,

\begin{itemize}
    \item Herz et al. (1996) \cite{herz1996viraldynamics}, who implemented a discrete delay to represent the intracellular delay in the HIV model. He showed that the incorporation of the delay would significantly shorten the estimate for the half-life of free virus particles. 
    \item Mittler et al. (1998) \cite{mittler1998hivmodeldelayed} stated that a $\gamma$ - distribution delay would be more realistic to describe the intracellular delay, then implemented it to the original model proposed by Perelson et al. (1996) \cite{perelson1996hivmodelinvivo}.
    \item Mittler et al. \cite{mittler1999improveddelay} et al. and Tam et al. (1999) \cite{tam1999delayeffect} also derived an analytic expression for the rate of decline of virus following drug treatment by assuming the drug to be completely effcacious.
    \item Song and Neumann (2007) \cite{song2007hivmodel} proposed a saturated mass-action term into the simplified model from Perelson et al. (1996) \cite{perelson1996hivmodelinvivo}, and later investigated the drug effectiveness under this saturation infection.
\end{itemize}

The paper will be organized as follows: First, we will investigate a simplified ODE model from Perelson et al. (1993) \cite{perelson1993hivmodel} by considering three main components: the uninfected CD4\textsuperscript{+} T-cells ($T$), the infected CD4\textsuperscript{+} T-cells ($I$), and the free virus ($V$) with. This model is also assumed to have a saturation response of the infection rate. Next, the existence and stability of he infected steady state are considered. Then, we incorporate a discrete constant delay into the model to indicate the time range between the infection of a CD4\textsuperscript{+} T-cell and the emission of viral particles at the cellular level, resulting in a system of three delay-differential equations (DDEs). To understand the dynamics of this delay model and obtain sufficient conditions for local/global asymptotic stability of the equilibria of all time delay, we carried out a complete analysis on the transcendental characteristic equations of the linearized system at both the viral-free equilibrium and the infected (positive) equilibrium. Finally, numerical simulations are carried out, using \texttt{Julia}, to confirm the obtained results, before some remarks are included in the conclusion. 

\section{The proposed ODE model}

Simplifying the model proposed in Perelson et al. (1993) \cite{perelson1993hivmodel} by reducing the number of dimensions and assuming that all of the infected cells have the ability of producing virus at an equal rate, we propose the following epidemic model of HIV infection of CD4\textsuperscript{+} T-cells as follows:

\begin{equation} \label{system}
    \begin{aligned}
    \frac{dT}{dt} & = s - dT + aT \left( 1 - \frac{T}{T_{\max}}\right) - \frac{\beta TV}{1 + \alpha V} + \rho I \\
    \frac{dI}{dt} & = \frac{\beta TV}{1 + \alpha V} - (\delta + \rho) I \\
    \frac{dV}{dt} & = qI - cV - k_1 VT
    \end{aligned}
\end{equation}

where 

\begin{itemize}
    \item $T(t)$ is the concentration of healthy CD4\textsuperscript{+} T-cells at time $t$ (target cells),
    \item $I(t)$ is the concentration of infected CD4\textsuperscript{+} T-cells at time $t$, and
    \item $V(t)$ is the viral load of the virions (concentration of free HIV at time $t$).
\end{itemize}

In infection modelling, it is very common to augment \eqref{system} with a "mass-action" term in which the rate of infection is given by $\beta TV$. This type of term is sensible, since the virus must interact with T-cells in order to infect and the probability of virus encountering a T-cell at a low concentration environment (where infected cells and viral load's motions are regarded as independent) can be assumed to be proportional to the product of the density, which is called linear infection rate. As a result, it follows that the classical models can assume that T-cells are infected at rate $-\beta TV$ and are generated at rate $\beta TV$.

With that simple mass-action infection term, the rates of change of uninfected cells, $T$, productively infected cells $I$, and free virus $V$, would be

\begin{equation} \label{system-linear}
    \begin{aligned} 
        \frac{dT}{dt} & = s - dT + aT \left( 1 - \frac{T}{T_{\max}} \right) - \beta TV \\
        \frac{dI}{dt} & = \beta T V - \delta I \\
        \frac{dV}{dt} & = qI - cV
    \end{aligned}
\end{equation}

Moreover, although the rate of infection in most HIV models is bilinear for the virus $V$ and the uninfected target cells $T$, the actual incidence rates are probably not strictly linear for each variable in over the whole valid range. For example, a non-linear or less-than-linear response in $V$ could occur due to the saturation at a high enough viral concentration, where the infectious fraction is significant for exposure to happen very likely. Thus, is it reasonable to assume that the infection rate of HIV modelling in saturated mass action is

\begin{equation}
\frac{\beta TV^x}{1 + \alpha V^y}, \quad x, y, \alpha > 0
\end{equation}

In this paper, we will investigate the viral model with saturation response of the infection rate where $x = y = 1$, for the sake of simplicity. With that being said, we will proceed to explain the parameters within the model, with 

\begin{itemize}
    \item $s$ is the rate at which new T-cells are created from source from precursors, 
    \item $d$ is the natural death rate of the CD4\textsuperscript{+} T-cells,
    \item $a$ is the maximum proliferation rate (growth rate) of T-cells (this means that $a > d$ in general),
    \item $T_{\max}$ is the T-cells population density at which proliferation shuts off (their carrying capacity),
    \item $\beta$ is the rate constant of infection of T-cells with free virus,
    \item $\rho$ is the "cure" rate, or the non-cytolytic loss of infected cells,
    \item $\delta$ is the death rate of the infected cells,
    \item $q$ is the reproduction rate of the infected cells, and 
    \item $c$ is the clearance rate constant (loss rate) of the virions.
\end{itemize}

From the explanations above, we can say that 

\begin{itemize}
    \item $\delta + \rho$ is the total rate of disappearance of infected cells from the environment,
    \item $1/\delta$ is the average lifespan of a productively infected cell
    \item $q/\delta$ is the total number of virions produced by an actively infected cell during its lifespan, and 
    \item $q$ is the average rate of virus released by each cell.
\end{itemize}

Under the absence of virus (i.e $I(t) = V(t) = 0 \quad \forall t > 0$), the T-cell population has a steady state value of

\begin{equation}
    \begin{aligned}
    T_0 & = \frac{T_{\max}}{2a} \left[ (a - d) + \sqrt{(a-d)^2 + \frac{4a}{T_{\max}}}\right]
    \end{aligned}
\end{equation}

The system \eqref{system} needs to be initialized with the following initial conditions

\begin{equation}
    T(0) > 0, \quad I(0) > 0, \quad V(0) > 0,
\end{equation}

which leads us to denote that

\begin{equation}
    R_+^3 = \{ (T,I,V) \in \mathbb{R}^3 \| T \geq 0, I \geq 0, V \geq 0 \}
\end{equation}

\subsection{Equilibria and local stability}

The system \eqref{system} has two steady states: the uninfected steady state $E_0 = \left( T_0, 0, 0 \right)$ and the (positive) infected steady state $\bar{E} = \left( \bar{T}, \bar{I}, \bar{V} \right)$, where:

\begin{equation}
    \begin{aligned}
        \bar{T} & = \frac{T_{\max}}{2a} \left[ a - d - \delta \frac{q \beta - (\delta + \rho)}{q\alpha (\delta + \rho)} + \sqrt{\left( a - d - \delta \frac{q \beta - (\delta + \rho)}{q\alpha (\delta + \rho)}\right)^2 - \frac{4a}{T_{\max}} \left( \frac{\delta c}{q \alpha} - s \right)}\right] \\
        \bar{I} & = \frac{[q\beta - (\delta + \rho)k_1] \bar{T} - (\delta + \rho)c}{q\alpha (\delta + \rho)}\\
        \bar{V} & = \frac{1}{\alpha} \left[ \frac{q\beta \bar{T}}{\alpha (\delta + \rho)(c_1 + k_1 T} - 1\right]\\
    \end{aligned}
\end{equation}

Now, we will proceed to analyse the stability of the equilibria of system \eqref{system}.

Since $T_0$ and $\bar{T}$ satisfy

\begin{equation}
    \begin{aligned}
        s - dT_0 + aT_0 \left( 1 - \frac{T_0}{T_{\max}}\right) & = 0 \\
        s - d\bar{T} + a\bar{T} \left( 1 - \frac{\bar{T}}{T_{\max}}\right) & = \delta \bar{I} = \frac{\delta}{q \alpha (\delta + \rho)} \left[ (q\beta - (\delta + \rho)) T - (\delta + \rho)c \right]
    \end{aligned}
\end{equation}

we get that

\begin{equation}
    \bar{T} > \frac{c(\delta + \rho)}{q \beta - (\delta + \rho)k_1} \quad \Rightarrow \quad s - d \bar{T} + a \bar{T} \left( 1 - \frac{\bar{T}}{T_{\max}} \right) > 0 \quad \Rightarrow \quad T_0 > \bar{T}
\end{equation}

and 

\begin{equation}
    \bar{T} < \frac{c(\delta + \rho)}{q \beta - (\delta + \rho)k_1} \quad \Rightarrow \quad s - d \bar{T} + a \bar{T} \left( 1 - \frac{\bar{T}}{T_{\max}} \right) < 0 \quad \Rightarrow \quad T_0 < \bar{T}
\end{equation}

Hence,

\begin{itemize}
    \item If $\bar{T} > \frac{c(\delta + \rho)}{q\beta - (\delta + \rho)k_1}$, then $T_0 > \bar{T} > \frac{c(\delta + \rho)}{q\beta - (\delta + \rho)k_1}$, which means that $E_0(T_0, 0, 0)$ is unstable, while the positive equilibrium $\bar{E}(\bar{T}, \bar{I}, \bar{V})$ exists. 
    \item If $\bar{T} < \frac{c(\delta + \rho)}{q\beta - (\delta + \rho)k_1}$, then $T_0 < \bar{T} < \frac{c(\delta + \rho)}{q\beta - (\delta + \rho)k_1}$, which means that $E_0(T_0, 0, 0)$ is locally asymptotically stable, while the positive equilibrium $\bar{E}(\bar{T}, \bar{I}, \bar{V})$ is not feasible, as $\bar{I} < 0, \bar{V} < 0$.
\end{itemize}

Let 
\begin{equation}
    R_0 =  \left( \frac{q\beta - (\delta + \rho) k_1 }{c(\delta + \rho)} \right) \bar{T}
\end{equation}

We can see that $R_0$ is the bifurcation parameter. When $R_0 < 1$, the uninfected steady state $E_0$ is stable and the infected steady state $\bar{E}$ does not exist (unphysical). When $R_0 > 1$, $E_0$ becomes unstable and $\bar{E}$ exists.

For system \eqref{system-linear}, it is known that the basic reproductive ratio is given by:

\begin{equation}
    R_{01} = \left( \frac{q\beta - (\delta + \rho) k_1 }{c(\delta + \rho)} \right) T_0
\end{equation}

Once again, we emphasize the large difference of the basic reproduction ratio between the linear infection rate and the saturation infection rate. 
\begin{itemize}
    \item If $\alpha \to 0$, then $\bar{T} \to \frac{c(\delta + \rho)}{q\beta - (\delta + \rho)}, \quad R_0 \to 1$;
    \item If $\alpha \rightarrow +\infty$, then $\bar{T} \to T_0, R_0 \to R_{01}$.
\end{itemize}

The \texttt{Jacobian matrix} of system \eqref{system} is:

\begin{equation}
    \left( \begin{matrix}
    (a - d) - \frac{2aT}{T_{\max}} - \frac{\beta V}{1 + \alpha V} & \rho & - \frac{\beta T}{(1 + \alpha V)^2} \\
    \frac{\beta V}{1 + \alpha V} & - (\delta + \rho) & \frac{\beta T}{(1 + \alpha V)^2} \\
    -k_1 V & q & - c - k_1 T
    \end{matrix} \right)
\end{equation}

Let $E^*(T^*, I^*, V^*)$ be any arbitrary equilibrium. Then, the characteristic equation about $E^*$ is:

\begin{equation} \label{jac}
    \left| \begin{matrix} 
    \lambda + \left( (d - a) + \frac{2aT^*}{T_{\max}} + \frac{\beta V^*}{1 + \alpha V^*}\right) & - \rho & \frac{\beta T^*}{(1 + \alpha V^*)^2} \\
    - \frac{\beta V^*}{1 + \alpha V^*} & \lambda + (\delta + \rho) & - \frac{\beta T^*}{(1 + \alpha V^*)^2} \\
    k_1 V^* & - q & \lambda + (c + k_1 T^*)
    \end{matrix} \right| = 0
\end{equation}

For equilibrium $E_0 = (T_0, 0, 0)$, \eqref{jac} reduces to

\begin{equation}
    \left( \lambda - a + d + \frac{2aT_0}{T_{\max}} \right) \left[ \lambda ^2 + (c + \delta + \rho) \lambda + c(\delta + \rho) - q\beta T_0 \right] = 0
\end{equation}

Hence, we can see that $E_0 = (T_0, 0, 0)$ is locally asymptotically stable if $R_0 < 1$, and it is a saddle point if $\dim W^s(E_0) = 2$, or if $\dim W^s(E_0) = 1$ while $R_0 > 1$. As a result, we have the following theorems

\begin{theorem} 
If $R_0 < 1$, $E_0 = (T_0, 0, 0)$ is locally asymptotically stable; else, if $R_0 > 1$, $E_0 = E_0 = (T_0, 0, 0)$ is unstable.
\end{theorem}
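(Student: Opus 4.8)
The plan is to read off the local behaviour of $E_0$ directly from the eigenvalues of the Jacobian at $E_0$, which are already essentially in hand. Because that Jacobian is block lower-triangular (the first column vanishes in the $I$- and $V$-rows), its characteristic polynomial splits into the linear factor coming from the $T$-equation and the quadratic factor coming from the $(I,V)$-subsystem, exactly as displayed just before the theorem. I would re-derive this factorization keeping the $-k_1 T_0$ term in the $(3,3)$ Jacobian entry, so that the quadratic factor is $\lambda^2 + (c+\delta+\rho+k_1 T_0)\lambda + \big[(\delta+\rho)(c+k_1 T_0) - q\beta T_0\big]$; its constant term then equals $c(\delta+\rho) - \big(q\beta - (\delta+\rho)k_1\big)T_0 = c(\delta+\rho)(1-R_{01})$, so that its sign is controlled by $R_0$ through the chain (established right before the theorem) $R_0 < 1 \Leftrightarrow \bar T < c(\delta+\rho)/(q\beta-(\delta+\rho)k_1) \Leftrightarrow T_0 < c(\delta+\rho)/(q\beta-(\delta+\rho)k_1) \Leftrightarrow R_{01}<1$.

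The steps would then be: \emph{(i)} Show the decoupled eigenvalue $\lambda_1 = (a-d) - 2aT_0/T_{\max}$ is negative \emph{for every} parameter choice. Since $T_0$ is the positive root of the concave parabola $\varphi(T) := s - dT + aT(1-T/T_{\max})$, it is the larger root, whence $\varphi'(T_0) = (a-d) - 2aT_0/T_{\max} < 0$; equivalently, inserting the closed form of $T_0$ gives $\lambda_1 = -\sqrt{(a-d)^2 + 4as/T_{\max}} < 0$. \emph{(ii)} Apply the Routh--Hurwitz test to the quadratic factor $\lambda^2 + a_1\lambda + a_0$: the coefficient $a_1 = c+\delta+\rho+k_1 T_0$ is positive unconditionally, while by the previous paragraph $a_0 > 0$ iff $R_0 < 1$ and $a_0 < 0$ iff $R_0 > 1$. \emph{(iii)} Conclude: if $R_0 < 1$ then both roots of the quadratic have negative real part, so together with $\lambda_1 < 0$ all three eigenvalues lie in the open left half-plane and the Hartman--Grobman theorem gives local asymptotic stability of $E_0$; if $R_0 > 1$ then $a_0 < 0$, the product of the quadratic's two (real) roots is negative, one of them is positive, and $E_0$ acquires a one-dimensional unstable manifold, hence is unstable --- concretely a saddle with $\dim W^s(E_0) = 2$.

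I do not anticipate a real obstacle --- this is the textbook linearized-stability argument for an equilibrium and the algebra is routine. The one point requiring care is the bookkeeping that ties $\mathrm{sign}(a_0)$ to $R_0$: one must retain the $k_1 T_0$ term in the Jacobian and invoke the already-proven comparison of $T_0$ (equivalently $\bar T$) with the threshold $c(\delta+\rho)/(q\beta-(\delta+\rho)k_1)$, and it implicitly uses $q\beta > (\delta+\rho)k_1$ so that this threshold, $R_0$ and $R_{01}$ are all positive. The only truly delicate case, $R_0 = 1$, is deliberately excluded from the statement: there $a_0 = 0$, the Jacobian has a zero eigenvalue, linearization is inconclusive, and one would need a center-manifold/bifurcation analysis --- this being exactly the transcritical value at which stability is exchanged between $E_0$ and $\bar E$.
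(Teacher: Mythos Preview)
Your proposal is correct and follows the same linearization approach the paper takes: the paper derives the factored characteristic equation at $E_0$ just before the theorem and then simply asserts the stability/instability dichotomy without further detail. Your write-up is actually more careful than the paper's on two points: you retain the $k_1 T_0$ contribution in the $(3,3)$ Jacobian entry (the paper's displayed quadratic $\lambda^2 + (c+\delta+\rho)\lambda + c(\delta+\rho) - q\beta T_0$ drops it), and you explicitly justify $\lambda_1 < 0$ and spell out the chain $R_0 \lessgtr 1 \Leftrightarrow R_{01} \lessgtr 1 \Leftrightarrow a_0 \gtrless 0$ that the paper leaves implicit.
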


\begin{theorem} \label{boundedness}
There exists $M > 0, M \in \mathbb{R}$ such that for any positive solution $(T(t), I(t), V(t))$ of system \eqref{system}, 
\begin{equation}
    T(t) \leq M, I(t) \leq M, V(t) \leq M
\end{equation}
for all large enough $t$.
\end{theorem}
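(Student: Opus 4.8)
The plan is to produce an explicit affine differential inequality for a well-chosen linear combination of the three components and then conclude with the standard comparison argument for scalar ODEs. As a preliminary I would record the (routine) forward invariance of $R_+^3$: the vector field is smooth on a neighbourhood of $R_+^3$ since $1+\alpha V \ge 1$ there, so solutions exist locally; and on the boundary faces one has $\dot T = s + \rho I > 0$ when $T = 0$, $\dot I = \beta TV/(1+\alpha V) \ge 0$ when $I = 0$, and $\dot V = qI \ge 0$ when $V = 0$, so a solution issued from a positive initial datum remains nonnegative throughout its maximal interval of existence. This lets me treat $T,I,V\ge 0$ in all the estimates below, and the a priori bound obtained at the end will in turn show the maximal interval is $[0,\infty)$.

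For the core estimate I would set $G(t) := T(t) + I(t)$. Adding the first two equations of \eqref{system}, the feedback term $+\rho I$ cancels against the $+\rho I$ inside $-(\delta+\rho)I$, leaving
\begin{equation*}
\dot G \;=\; s + (a-d)T - \frac{a}{T_{\max}}T^{2} - \delta I .
\end{equation*}
Fixing any $\mu$ with $0 < \mu \le \delta$, so that $(\mu-\delta)I \le 0$, I obtain
\begin{equation*}
\dot G + \mu G \;=\; s + (a-d+\mu)T - \frac{a}{T_{\max}}T^{2} + (\mu-\delta)I \;\le\; \max_{T\ge 0}\Bigl( s + (a-d+\mu)T - \tfrac{a}{T_{\max}}T^{2}\Bigr) \;=:\; C ,
\end{equation*}
a finite explicit constant (the maximum of a concave quadratic on $T\ge 0$). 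Hence $\dot G \le C - \mu G$, and the comparison theorem for scalar ODEs gives $\limsup_{t\to\infty} G(t) \le C/\mu$. Since $0 \le T \le G$ and $0 \le I \le G$, both $T$ and $I$ are ultimately bounded by $C/\mu + \varepsilon$.

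Finally, with $M_1$ an ultimate bound for $I$, the third equation yields $\dot V = qI - cV - k_1 VT \le qM_1 - cV$ (the term $-k_1 VT$ being $\le 0$ on $R_+^3$), so another application of the comparison theorem gives $\limsup_{t\to\infty} V(t) \le qM_1/c$, and choosing $M$ larger than $C/\mu$, $M_1$ and $qM_1/c$ (with a small cushion) establishes the theorem. The only point requiring a little care is conceptual rather than computational: the $T$-equation on its own is not dissipative because of the $\rho I$ input, so one must pass to the combination $T+I$ first, and one must keep $-k_1 VT \le 0$, which is exactly where forward invariance of $R_+^3$ is invoked; the remainder is a routine differential-inequality argument.
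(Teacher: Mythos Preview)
Your argument is correct and follows essentially the same route as the paper: form $G=T+I$ to cancel the saturation and cure terms, derive an affine differential inequality $\dot G\le C-\mu G$ (the paper completes the square and takes $\mu=\min(d,\delta)$ rather than introducing a free $\mu\le\delta$, but this is cosmetic), and then bound $V$ from $\dot V\le qM_1-cV$ using $-k_1VT\le 0$. If anything, your write-up is cleaner: you make the forward-invariance of $R_+^3$ explicit and you draw the correct $\limsup$ conclusion from the comparison argument, whereas the paper's Gronwall step contains a sign slip and an awkward claim that $L(t)$ is bounded by its initial value.
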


\begin{proof} 

Let $L(t) = T(t) + I(t)$ and assume that $L(0) = T(0) + I(0) = \text{const} = c$. Calculating the derivative of $L(t)$ using the equations in system \eqref{system}, we have:

\begin{equation} \label{ineq-lt}
    \begin{aligned}
        \frac{dL(t)}{dt} & = \frac{dT(t)}{dt} + \frac{dI(t)}{dt} \\
        & = s - dT + aT \left( 1 - \frac{T}{T_{\max}}\right) - \delta I \\
        & = -dt - \delta I - \frac{a}{T_{\max}} \left( T - \frac{T_{\max}}{2a} \right)^2 + \frac{4s + aT_{\max}}{4} \\
         & \leq - (T+I) \min{(d, \delta)} - \frac{a}{T_{\max}} \left( T - \frac{T_{\max}}{2a} \right)^2 + \frac{4s + aT_{\max}}{4} \\
         & = - hL(t) - M_0 \left( h = \min{(d, \delta)}, M_0 = \frac{4s + aT_{\max}}{4} \right)
    \end{aligned}
\end{equation}

Let $U(t) = L(t) - \frac{M_0}{h}$. This means that \begin{equation}
    \begin{aligned}
        U(0) & = L(0) - \frac{M_0}{h} = c - \frac{M_0}{h} \\
        \frac{dU(t)}{dt} & = \frac{dL(t)}{dt}
    \end{aligned}
\end{equation}

The inequality \eqref{ineq-lt} can be rewritten as

\begin{equation}
    \frac{dU(t)}{dt} \leq (-h) U(t)
\end{equation}

which yields, according to Gronwall's inequality,

\begin{equation}
    \begin{aligned}
        U(t) & \leq U(0) \exp \left( \int_0^t (-h) ds \right) \\
        & = \left( c - \frac{M_0}{h} \right) \exp \left( \left[ -hs \right]_0^t \right) \\
        & = \left( c - \frac{M_0}{h} \right) \exp(-ht) \\
        & \leq c - \frac{M_0}{h}
    \end{aligned}
\end{equation}

or 

\begin{equation}
    T(t) + I(t) = L(t) = U(t) + \frac{M_0}{h} = c - \frac{M_0}{h} + \frac{M_0}{h} = c
\end{equation}

As $T(t) > 0, I(t) > 0 \ \forall i \in \mathbb{Z}^+$, we can say that
\begin{equation}
    V(t) \leq c, I(t) \leq c  
\end{equation}

Moreover, we also know that

\begin{equation} \label{ineq-vt}
    \frac{dV}{dt} = qI - cV - k_1 VT \leq qI - cV \leq qc - cV = -c(V - q)
\end{equation}

Setting $V(0) = \text{const} = c_V$, using the exact same procedure with Gronwall's inequality, we obtain 

\begin{equation}
    V(t) \leq c_V \ \forall t \in \mathbb{Z}^+
\end{equation}

With $M = \max{(c, c_V)}$, we would then conclude that

\begin{equation}
    T(t) \leq M, I(t) \leq M, V(t) \leq M \ \forall t \in \mathbb{Z}^+
\end{equation}

We can easily see that this set is convex. As a consequence, the system \eqref{system} is dissipative.

The proof is complete.
\end{proof}

From this theorem, we define

\begin{equation} \label{defiD}
    D = \left\{ (T, I, V) \in \mathbb{R}^3, 0 \leq T, I, V \leq M \right\}.
\end{equation}

Denote

\begin{equation}
    M = d - a + \frac{2a\bar{T}}{T_{\max}}, \quad N = \frac{\beta \bar{V}}{1 + \alpha \bar{V}}, \quad P = \frac{\beta \bar{T}}{(1 + \alpha \bar{V})^2}.
\end{equation}

Then, the characteristic equation of the system around the equilibrium $\bar{E}(\bar{T}, \bar{I}, \bar{V})$ reduces to:

\begin{equation} \label{characeq}
    \lambda^3 + a_1 \lambda^2 + (a_2 + a_4) \lambda + (a_3 + a_5) = 0
\end{equation}

where

\begin{equation} \label{define-as}
    \begin{aligned} 
    a_1 & = M + (\delta + \rho + c_1 + k_1 \bar{T}) \\
    a_2 & = (\delta + \rho) (c_1 + k_1 T) + M(\delta + \rho + c_1 + k_1 \bar{T}) + (-k_1 \bar{V}P) \\
    a_3 & = \rho \left[ -N(c_1 + k_1 \bar{T}) + Pk_1 \bar{V} \right] + PNq \\
    a_4 & = -NP \\
    a_5 & = M(\delta + \rho)(c_1 + k_1 \bar{T}) - P(\delta + \rho) k_1 \bar{V}
    \end{aligned}
\end{equation}

By the Routh-Hurwitz criterion \cite{kuang1993dde}, it follows that all eigenvalues of equation \eqref{characeq} have negative real parts if and only if

\begin{equation}
    a_1 > 0, \quad a_3 + a_5 > 0, \quad a_1(a_2 + a_4) - (a_3 + a_5) > 0
\end{equation}

This leads us to the following theorem

\begin{theorem} \label{asymptotically-stable}
Suppose that
\begin{enumerate}
    \item $R_0 > 1$,
    \item $a_1 > 0, \quad a_3 + a_5 > 0, \quad a_1(a_2 + a_4) - (a_3 + a_5) > 0$.
\end{enumerate}
Then, the positive equilibrium $\bar{E}(\bar{T}, \bar{I}, \bar{V})$ is asymptotically stable.
\end{theorem}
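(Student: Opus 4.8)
The plan is to apply the Routh--Hurwitz criterion to the characteristic polynomial of the linearization at $\bar E$, exactly as the discussion preceding the statement anticipates. First I would record what hypothesis (1), $R_0>1$, buys us: by the equilibrium analysis carried out above, $R_0>1$ forces $\bar T>\frac{c(\delta+\rho)}{q\beta-(\delta+\rho)k_1}$ and hence $\bar I>0$, $\bar V>0$, so that $\bar E=(\bar T,\bar I,\bar V)$ is a genuine positive equilibrium lying in the interior of the compact, convex, positively invariant region $D$ of \eqref{defiD} (Theorem~\ref{boundedness}). This is what makes a local asymptotic stability statement meaningful in the first place.

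Next I would linearize \eqref{system} about $\bar E$. Substituting $T^*=\bar T$, $I^*=\bar I$, $V^*=\bar V$ into the general characteristic determinant \eqref{jac} and expanding the $3\times 3$ determinant, I would verify that it collapses to the cubic \eqref{characeq}, namely $\lambda^3+a_1\lambda^2+(a_2+a_4)\lambda+(a_3+a_5)=0$, with the coefficients $a_1,\dots,a_5$ given in \eqref{define-as}. Here one introduces the shorthand $M,N,P$ and uses the equilibrium identities --- in particular $\frac{\beta\bar T\bar V}{1+\alpha\bar V}=(\delta+\rho)\bar I$ and $q\bar I=(c+k_1\bar T)\bar V$ --- to simplify the Jacobian entries before expanding. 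This bookkeeping is the only real piece of work: correctly tracking every term and sign in the expansion, and reconciling the minor notational slips in \eqref{define-as} (the stray $c_1$ that should read $c$, and the $T$ that should read $\bar T$), is where I would be most careful, and it is the step I expect to be the main obstacle.

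With the characteristic polynomial in hand the rest is immediate. For a real cubic $\lambda^3+b_1\lambda^2+b_2\lambda+b_3$, the Routh--Hurwitz criterion (see \cite{kuang1993dde}) states that all three roots have strictly negative real part precisely when $b_1>0$, $b_3>0$, and $b_1b_2-b_3>0$. Applying this with $b_1=a_1$, $b_2=a_2+a_4$, $b_3=a_3+a_5$, hypothesis (2) is exactly the assertion that every eigenvalue of the Jacobian at $\bar E$ has negative real part. Since the vector field in \eqref{system} is $C^1$ (indeed analytic) in a neighborhood of $\bar E$, the Lyapunov indirect method then gives that $\bar E$ is (locally) asymptotically stable, which completes the proof.
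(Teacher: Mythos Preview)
Your proposal is correct and follows exactly the approach the paper takes: the paper presents Theorem~\ref{asymptotically-stable} as an immediate consequence of the Routh--Hurwitz criterion applied to the cubic \eqref{characeq}, with condition (1) guaranteeing that $\bar E$ exists as a genuine positive equilibrium. The paper does not even supply a separate proof environment for this theorem, so your write-up is, if anything, more detailed than the original.
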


\begin{theorem} \label{globallyasympstable<1}
If $R_0 < 1$, then $E_0(T_0, 0, 0)$ is globally asymptotically stable.
\end{theorem}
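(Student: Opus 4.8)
The plan is a Lyapunov--LaSalle argument built on the reduced functional
\[
U(t) \;=\; I(t) + \frac{\delta+\rho}{q}\,V(t),
\]
for which the $(\delta+\rho)I$ contributions coming from $\dot I$ and from $\frac{\delta+\rho}{q}\dot V$ cancel, leaving (along solutions of \eqref{system}, and using $1+\alpha V\ge 1$)
\[
\dot U \;=\; \frac{V}{q}\left[\frac{q\beta T}{1+\alpha V}-(\delta+\rho)(c+k_1T)\right]
\;\le\; \frac{V}{q}\left[\bigl(q\beta-(\delta+\rho)k_1\bigr)T-(\delta+\rho)c\right].
\]
The bracket on the right is negative exactly when $T<\dfrac{c(\delta+\rho)}{q\beta-(\delta+\rho)k_1}$, and the hypothesis $R_0<1$ is precisely the statement that $T_0$ lies strictly below this threshold: $R_0<1$ is equivalent to $\bar T<\frac{c(\delta+\rho)}{q\beta-(\delta+\rho)k_1}$, and the monotonicity implications recorded just before the theorem then give $T_0<\bar T$ in that regime. (If instead $q\beta-(\delta+\rho)k_1\le 0$, the bracket is $\le-(\delta+\rho)c<0$ with no condition on $T$, which only makes the argument easier.)

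The first step is to put the trajectory, by Theorem~\ref{boundedness}, inside the compact positively invariant set $D$ of \eqref{defiD}, so that $T,I,V$ and $\dot T,\dot I,\dot V$ are bounded and $T,I,V$ are uniformly continuous on $[0,\infty)$. The crux, which I expect to be the main obstacle, is then to show $\limsup_{t\to\infty}T(t)\le T_0$ --- equivalently, that $T(t)$ eventually stays below the invasion threshold $\frac{c(\delta+\rho)}{q\beta-(\delta+\rho)k_1}$. What makes this delicate is the $+\rho I$ feedback in the $T$-equation, which blocks a one-line comparison of $T$ with the logistic field $g(T):=s-dT+aT(1-T/T_{\max})$ (whose only positive zero is $T_0$, with $g>0$ on $(0,T_0)$ and $g<0$ beyond, so that $\dot T=g(T)$ is driven to $T_0$). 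The route I would take is to work with $L=T+I$, whose derivative $\dot L=g(T)-\delta I$ carries no harmful sign, together with a fluctuation-lemma bookkeeping of the extreme limits $\limsup/\liminf$ of $T,I,V$: one evaluates the three equations along times where a component attains such a limit with vanishing derivative and feeds in the strict inequality $q\beta T_0<(\delta+\rho)(c+k_1T_0)$, which should force $\limsup V=0$, $\limsup I=0$, and $\limsup T\le T_0$ (equivalently, one sets up a monotone iteration scheme on these extreme limits that contracts to $E_0$ exactly when $R_0<1$).

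Granting this estimate, the rest is routine. Since $R_0<1$ is strict, choose $\varepsilon>0$ with $\bigl(q\beta-(\delta+\rho)k_1\bigr)(T_0+\varepsilon)<(\delta+\rho)c$; for all large $t$ one has $T(t)\le T_0+\varepsilon$, hence $\dot U\le-\kappa V$ with $\kappa=\tfrac1q\bigl[(\delta+\rho)c-(q\beta-(\delta+\rho)k_1)(T_0+\varepsilon)\bigr]>0$. Thus $U$ is eventually nonincreasing and nonnegative, so $U(t)$ converges, and the $\omega$-limit set $\Omega$ of the trajectory is a nonempty compact invariant subset of $D\cap\{T\le T_0+\varepsilon\}$ on which $U$ is constant; hence $\dot U\equiv0$ on $\Omega$, which together with $\dot U\le-\kappa V$ forces $V\equiv0$ on $\Omega$, then $\dot V\equiv0$ forces $I\equiv0$ on $\Omega$, and the residual dynamics $\dot T=g(T)$ leaves $\{T_0\}$ as its only compact invariant subset in $[0,\infty)$. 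Therefore $\Omega=\{E_0\}$, so every positive solution converges to $E_0=(T_0,0,0)$; combined with the local asymptotic stability of $E_0$ when $R_0<1$ established above, this gives the global asymptotic stability of $E_0$. (Equivalently, from $\int^\infty V\,dt<\infty$ and uniform continuity of $V$ one obtains $V\to0$ by Barbalat's lemma, then $I\to0$ from the $I$-equation, and finally $T\to T_0$ because the $T$-equation is then asymptotically autonomous with limit equation $\dot T=g(T)$.)
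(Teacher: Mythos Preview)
Your linear functional $U=I+\tfrac{\delta+\rho}{q}V$ and the computation of $\dot U$ are correct, and the closing LaSalle/Barbalat steps are standard once the $T$-bound is in hand. This is a genuinely different route from the paper, which instead bounds $\dot I\le \beta TV-(\delta+\rho)I$, passes to a two-dimensional linear comparison system $\dot z_1=\beta T z_2-(\delta+\rho)z_1$, $\dot z_2=qz_1-(c+k_1T)z_2$, and applies the quadratic Lyapunov function $L=\tfrac{\delta+\rho}{(\beta T)^2}z_1^2+\tfrac{1}{c+k_1T}z_2^2$ to force $(z_1,z_2)\to(0,0)$, hence $(I,V)\to(0,0)$ first; only then does it read off $T\to T_0$ from the asymptotically autonomous $T$-equation. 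Your approach is cleaner in that it avoids the comparison layer and the quadratic form.

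That said, the step you label ``the main obstacle'' --- getting $T(t)$ eventually below the threshold $\tfrac{c(\delta+\rho)}{q\beta-(\delta+\rho)k_1}$ --- is a real gap that your sketch does not close. The fluctuation bookkeeping you outline yields $(\delta+\rho)I^\infty\le\beta\,T^\infty V^\infty$ and $(c+k_1T_\infty)V^\infty\le qI^\infty$, which combine (for $I^\infty>0$) to $(\delta+\rho)(c+k_1T_\infty)\le q\beta\,T^\infty$; this mixes $T_\infty$ and $T^\infty$ and does not contradict $R_0<1$, which only locates $T_0$ below the threshold. The $T$-equation gives $g(T^\infty)\ge -\rho I^\infty$, which happily allows $T^\infty>T_0$ as long as $I^\infty>0$. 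Working with $L=T+I$ does not help either: from $\dot L=g(T)-\delta I$ one obtains at best $L^\infty\le\max_{0\le T\le T_0}\bigl(T+g(T)/\delta\bigr)$, which can exceed $T_0$. So the ``monotone iteration scheme that contracts to $E_0$'' is asserted, not demonstrated, and this is precisely where the argument lives or dies. (It is fair to note that the paper's own proof glides over the same point by treating $T$ as a fixed parameter in its comparison system and Lyapunov function; your write-up is more candid about the difficulty, but does not resolve it.)
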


\begin{proof}
First of all, as $R_0 < 1$, we would have

\begin{equation}
    T_0 < \bar{T} < \frac{c(\delta + \rho)}{q\beta - (\delta + \rho)}
\end{equation}

which means that 

\begin{equation}
    p < \frac{(c+ k_1 T)(\delta + \rho)}{\beta T}
\end{equation}
From the system \eqref{system}, we would have

\begin{equation}
    \begin{aligned}
    \frac{dI}{dt} & \leq \beta TV - (\delta + \rho) I, \\
    \frac{dV}{dt} & = qI - cV - k_1 VT.
    \end{aligned}
\end{equation}

Now, we would consider the following comparative system

\begin{equation} \label{system-compa-z}
    \begin{aligned}
    \frac{dz_1}{dt} & = \beta T z_2 - (\delta + \rho) z_1 \\
    \frac{dz_2}{dt} & = pz_1 - cz_2 - k_1 z_2 T
    \end{aligned}
\end{equation}

We will consider the following form of Lyapunov function:

\begin{equation}
    L(\mathbf{X}) = V(z_1,z_2) = \frac{\delta + \rho}{(\beta T)^2} z_1^2 + \frac{1}{c + k_1 T}z_2^2
\end{equation}

The derivative of the function can be calculated as follows

\begin{equation}
    \begin{aligned}
    \frac{dL}{dt} & = \frac{\partial L}{\partial z_1} \frac{dz_1}{dt} + \frac{\partial L}{\partial z_2} \frac{dz_2}{dt} \\
    & = 2 \frac{\delta + \rho}{(\beta T)^2} z_1 \left( \beta T z_2 - (\delta + \rho) z_1 \right) + 2 \frac{1}{c + k_1 T} z_2 \left( qz_1 - c z_2 - k_1 Tz_2 \right) \\
    & = -2 \left[ \left( \frac{\delta + \rho}{\beta T} z_1\right)^2 + z_2^2 - \left ( \frac{\delta + \rho}{\beta T} z_1 z_2 + \frac{q}{c + k_1 T} \right) z_1 z_2 \right] \\
    & \leq -2 \left[ \left( \frac{\delta + \rho}{\beta T} z_1 \right)^2 + z_2^2 - \left ( \frac{\delta + \rho}{\beta T} + \frac{\beta + \rho}{\beta T} \right) z_1 z_2\right] \\
    & = - 2 \left[ \frac{\delta + \rho }{\beta T} z_1 - z_2 \right]^2 \leq 0 \quad \forall z_1, z_2 
    \end{aligned}
\end{equation}

We can see that the derivative is negative definite everywhere except at $(0,0)$. This means that $(z_1, z_2) = (0,0)$ is globally asymptotically stable.

As we can also see that 

\begin{equation}
    0 \leq I(0) \leq z_1(0), \quad 0 \leq V(0) \leq z_2(0)
\end{equation}

which means that, if the system \eqref{system-compa-z} admits the initial values $(z_1(0), z_2(0))$, we have that

\begin{equation}
    I(t) \leq z_1(t), \quad V(t) \leq z_2(t) \quad \forall t > t_1
\end{equation}

or, in other words,

\begin{equation} \label{lim-iv}
    \lim_{t \to + \infty} I(t) = \lim_{t \to + \infty} V(t) = 0 
\end{equation}

From this, using the first equation of the system \eqref{system}, for an $\epsilon \  in (0,1)$ infinitesimal,

\begin{equation}
    s + (a - d - \delta \epsilon) T - \frac{aT^2}{T_{\max}} \leq \frac{dT(t)}{dt} \leq s + (a - d) T - \frac{aT^2}{T_{\max}} \quad \forall t > t_2
\end{equation}

which shows that

\begin{equation} \label{lim-t}
\lim_{t \to + \infty} T(t) = T_0. 
\end{equation}

From \eqref{lim-iv} and \eqref{lim-t}, we conclude that the system is globally asymptotically stable. The proof is complete.
\end{proof}

\begin{theorem} \label{permanent}
If $R_0 > 1$, then the system \eqref{system} is permanent.
\end{theorem}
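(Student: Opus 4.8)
The plan is to prove permanence by establishing uniform persistence of \eqref{system}; the uniform upper bounds are already available. Recall that permanence means there are constants $0<m\le M$ with $m\le\liminf_{t\to\infty}T(t)$, $m\le\liminf_{t\to\infty}I(t)$, $m\le\liminf_{t\to\infty}V(t)$ and $\limsup_{t\to\infty}\max\{T(t),I(t),V(t)\}\le M$. The $\limsup$ part is exactly Theorem~\ref{boundedness}, which also tells us that the set $D$ in \eqref{defiD} is compact, positively invariant and globally attracting, i.e. the semiflow is dissipative. So the only thing left to produce is the common lower bound $m>0$.

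The lower bound for $T$ is elementary and does not use $R_0>1$: since $s>0$ and $0\le T,I,V\le M$ for $t$ large, the first equation of \eqref{system} gives $\dot T\ge s-\big(d+\beta M+aM/T_{\max}\big)T$, where I have discarded the nonnegative terms $\rho I$ and $aT$ and bounded $aT^2/T_{\max}\le (aM/T_{\max})T$; a standard comparison then yields $\liminf_{t\to\infty}T(t)\ge s/(d+\beta M+aM/T_{\max})>0$. The substance of the theorem is the lower bound for $I$ and $V$, and here $R_0>1$ enters. I would invoke the standard theory of uniform persistence for dissipative semiflows with acyclic boundary flow. Consider the part of $\partial D$ relevant to $I,V$, namely $\Gamma=D\cap(\{I=0\}\cup\{V=0\})$. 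One checks that the maximal invariant subset of $\Gamma$ is $\{(T,0,0):0\le T\le M\}$: an orbit confined to $\{V=0\}$ has $\dot V\equiv qI\equiv 0$, hence $I\equiv0$, and an orbit confined to $\{I=0\}$ has $\dot I\equiv\beta TV/(1+\alpha V)\equiv 0$, hence $TV\equiv0$, which (since $\dot T|_{T=0}=s+\rho I\ge s>0$ forbids $T\equiv0$) forces $V\equiv0$. On $\{I=V=0\}$ the system collapses to the scalar logistic-type equation $\dot T=s-dT+aT(1-T/T_{\max})$, for which $T_0$ is globally asymptotically stable; thus the boundary flow has $\{E_0\}$ as its unique, isolated, acyclic covering set. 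Finally, it must be shown that $W^s(E_0)\cap\mathrm{int}\,D=\emptyset$: from the factorization of \eqref{jac} at $E_0$, the $T$-factor $\lambda-a+d+2aT_0/T_{\max}$ is always negative (using the identity $s-dT_0+aT_0(1-T_0/T_{\max})=0$, which gives $a-d-2aT_0/T_{\max}=-s/T_0-aT_0/T_{\max}<0$), while $R_0>1$ is precisely the condition under which the remaining quadratic factor $(\lambda+\delta+\rho)(\lambda+c+k_1T_0)-q\beta T_0$ has a root with positive real part; since that factor is the characteristic polynomial of the $(I,V)$-block of the Jacobian, the associated eigenvector has nonzero $I$- and $V$-components and therefore points into $\mathrm{int}\,D$. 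Hence $E_0$ is a uniform repeller relative to $\mathrm{int}\,D$, and the persistence theorem produces $m>0$ with $\liminf_{t\to\infty}\min\{T(t),I(t),V(t)\}\ge m$; combined with Theorem~\ref{boundedness}, this is exactly permanence.

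The main obstacle is checking the hypotheses of the uniform-persistence theorem cleanly: namely (i) identifying the maximal invariant set of the boundary flow and its acyclicity — short, but it needs care because $\{I=0\}$ and $\{V=0\}$ are not individually invariant — and (ii) confirming that the unstable direction at $E_0$ genuinely leaves $\partial D$, i.e. the eigenvector computation sketched above. If instead one wants a self-contained argument in the style of the proof of Theorem~\ref{globallyasympstable<1}, the same ingredients reappear, but the genuinely delicate point becomes upgrading a $\limsup$-type lower estimate — obtained by comparing $(I,V)$ with the linear cooperative system whose coefficient matrix, for $T$ near $T_0$ and $V$ small, is an irreducible Metzler matrix with positive spectral abscissa exactly when $R_0>1$, contradicting the boundedness $I,V\le M$ — into a uniform $\liminf$ lower bound valid for all large $t$.
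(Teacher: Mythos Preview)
Your proposal is correct and its skeleton---dissipativity from Theorem~\ref{boundedness}, identification of $E_0$ as the only invariant piece of the boundary, a repelling argument at $E_0$, and an appeal to a uniform-persistence theorem---is exactly the paper's. The one point of divergence is \emph{how} the repelling of $E_0$ is verified: the paper does not compute the eigenvector of the linearization but instead runs the contradiction you describe in your final paragraph, comparing $(I,V)$ with a cooperative linear system whose matrix (for $T$ close to $T_0$ and $V$ small) is an irreducible Metzler matrix with a positive Perron eigenvalue when $R_0>1$, so that the comparison solution grows without bound and collides with the a priori upper bound $M$. Having shown no interior orbit can approach $E_0$, the paper then cites Butler--Waltman \cite{butler1986uniformpersistence} to pass from this weak form to uniform persistence. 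Your primary route via the stable-manifold/acyclicity hypotheses of the abstract persistence theorem is a legitimate and somewhat cleaner alternative; the paper's route is essentially the hands-on verification of those same hypotheses.
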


\begin{proof}
If $R_0 > 1$, we would have 
\begin{equation}
    (q\beta - (\delta + \rho) k_1) T_0 > (q\beta - (\delta + \rho) k_1) \bar{T} > c(\delta + \rho) 
\end{equation}

We will proceed to prove the weak permanence of this system using contradiction.

Assume that the system is not weakly permanent, from Theorem \ref{globallyasympstable<1}, there exists a positive orbit $(T(t), I(t), V(t))$ such that

\begin{equation} \label{orbit-origin}
    \lim_{t \to + \infty} T(t) = T_0, \quad \lim_{t \to + \infty} I(t) = \lim_{t \to + \infty} V(t) = 0
\end{equation}

Since $T_0 > \frac{c(\delta + \rho)}{q \beta - (\delta + \rho)}$, combining with \eqref{orbit-origin}, we choose an arbitrary infinitesimal $\epsilon > 0$ such that there exists a $t_0 > 0$, for all $t > t_0$,

\begin{equation}
    \begin{aligned}
    \frac{T_0 - \epsilon}{1 + \alpha \epsilon} & > \frac{c(\delta + \rho)}{q\beta - (\delta + \rho)} \\
    T(t) & > T_0 - \epsilon,  \\
    V(t) & < \epsilon  
    \end{aligned}
\end{equation}

Under these conditions, the system \eqref{system} becomes 
\begin{equation}
    \begin{aligned}
    \frac{dI}{dt} & = \frac{\beta TV}{1 + \alpha V} - (\delta + \rho) I \geq \frac{\beta (T_0 - \epsilon)}{1 + \alpha \epsilon} V - (\delta + \rho) I(t) \\
    \frac{dV}{dt} & = qI - (c_1 + k_1 T) \approx qI - cV - k_1 T_0
    \end{aligned}
\end{equation}

Consider the following Jacobian matrix

\begin{equation}
    J_\epsilon = \left( \begin{matrix}
    - (\delta + \rho) & \frac{\beta (T_0 - \epsilon)}{1 + \alpha \epsilon} \\
    q & -(c + k_1 T_0)
    \end{matrix} \right)
\end{equation}

Since $J_\epsilon$ has positive off-diagonal element, according to the Perron - Frobenius theorem, for the maximum positive eigenvalue $j_1$ of $J_\epsilon$, there is an associated positive eigenvector $v = \left( \begin{matrix} v_1 \\ v_2 \end{matrix} \right)$.

Next, we consider a system associated with the Jacobian matrix $J_\epsilon$

\begin{equation} \label{system-z-eps}
    \begin{aligned}
    \frac{dz_1}{dt} & = \frac{\beta (T_0 - \epsilon)}{1 + \alpha \epsilon} z_2 - (\delta + \rho) z_1 \\
    \frac{dz_2}{dt} & = qz_1 - (c + k_1 T_0) z_2
    \end{aligned}
\end{equation}

Let $z(t) = (z_1(t), z_2(t))$  be a solution of \eqref{system-z-eps} through $(lv_1, lv_2)$ at $t = t_0$, where $l > 0$ satisfies that

\begin{equation}
    lv_1 < I(t_0), \quad lv_2 < V(t_0)
\end{equation}

As we know that the semi-flow of \eqref{system-z-eps} is monotone and $J_\epsilon v = v > 0$, $z_i(t) (t = 1,2)$ is strictly increasing, meaning $\lim_{t \ to +\infty} z_i(t) = + \infty$. This contradicts the Theorem \eqref{boundedness}, saying that the positive solution of \eqref{system} is bounded from above. This contradiction says that there exists no positive orbit of \eqref{system} tends to $(T_0, 0, 0)$ and $t \to +\infty$. Combining this and a result provided in \cite{butler1986uniformpersistence}, we conclude that the system \eqref{system} is permanent.

The proof is complete.

\end{proof}

\begin{theorem} \label{globally-asymptotically-stable-intD}
Assume that $D$ is convex and bounded. Suppose that the system

\begin{equation}
    \frac{dX}{dt} = F(X), \quad X \in D
\end{equation}

is competitive, permanent and has the property of stability of periodic orbits. If $\bar{X_0}$ is the only equilibrium point in $\text{int}D$ and if it is locally asymptotically stable, then it is globally asymptotically stable in $\text{int}{D}$.
\end{theorem}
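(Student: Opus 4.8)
The plan is to prove the stronger statement that $\omega(x_0)=\{\bar{X_0}\}$ for every $x_0\in\operatorname{int} D$; together with the assumed local asymptotic stability of $\bar{X_0}$ this yields global asymptotic stability in $\operatorname{int} D$. First I would use permanence together with boundedness of $D$ to produce a compact set $K\subset\operatorname{int} D$ that is absorbing for all trajectories starting in $\operatorname{int} D$; consequently, for every such $x_0$ the omega-limit set $\omega(x_0)$ is nonempty, compact, connected, invariant and contained in $K\subset\operatorname{int} D$. The second ingredient is the Poincar\'e--Bendixson theory for three-dimensional competitive systems (Hirsch, Smith): a nonempty compact omega-limit set of such a system that contains no equilibrium point must be a nonconstant periodic orbit. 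Hence for each $x_0$ we are in one of two cases: either $\omega(x_0)$ contains an equilibrium, or $\omega(x_0)$ is a nonconstant periodic orbit.

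In the first case, since $\bar{X_0}$ is the only equilibrium in $\operatorname{int} D$ and $\omega(x_0)\subset\operatorname{int} D$, we get $\bar{X_0}\in\omega(x_0)$. Local asymptotic stability means $\bar{X_0}$ has an open basin of attraction $\mathcal{B}$; because $\bar{X_0}\in\omega(x_0)$, the trajectory through $x_0$ eventually enters $\mathcal{B}$ and therefore converges to $\bar{X_0}$, so $\omega(x_0)=\{\bar{X_0}\}$, as wanted.

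It remains to rule out the second case. Suppose $\omega(x_0)=\gamma$ is a nonconstant periodic orbit. By the assumed property of stability of periodic orbits, $\gamma$ is orbitally asymptotically stable. Invoking the structure theory of three-dimensional competitive systems once more, $\gamma$ bounds a two-dimensional, positively invariant region $\Sigma$ lying in $\operatorname{int} D$ (here the convexity of $D$ is used), and $\Sigma$ contains an equilibrium of the system, which can only be $\bar{X_0}$. The flow restricted to $\overline{\Sigma}=\Sigma\cup\gamma$ is then essentially planar, so I would apply the ordinary Poincar\'e--Bendixson theorem on $\overline{\Sigma}$. Let $\mathcal{B}_0\subset\Sigma$ be the set of points of $\Sigma$ whose omega-limit set is $\{\bar{X_0}\}$; it is open and nonempty. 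If $\mathcal{B}_0\neq\Sigma$, pick $y$ in the relative boundary of $\mathcal{B}_0$ in $\Sigma$. Then $\omega(y)$ is a nonempty compact invariant subset of $\overline{\Sigma}$ not containing $\bar{X_0}$ (otherwise local stability would force $y\in\mathcal{B}_0$), hence by Poincar\'e--Bendixson $\omega(y)$ is either $\gamma$ or another nonconstant periodic orbit $\gamma'$; in either case the limiting orbit is orbitally asymptotically stable, so it possesses an open neighborhood every point of which converges to it, and by continuous dependence on initial conditions the points of $\mathcal{B}_0$ arbitrarily close to $y$ would also converge to that orbit rather than to $\bar{X_0}$ --- a contradiction. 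Therefore $\mathcal{B}_0=\Sigma$, i.e.\ every point of $\Sigma$ converges to $\bar{X_0}$. But orbital asymptotic stability of $\gamma$ forces points of $\Sigma$ sufficiently close to $\gamma$ to converge to $\gamma$, not to $\bar{X_0}$ --- again a contradiction. Hence no such periodic orbit exists, the first case always applies, and $\omega(x_0)=\{\bar{X_0}\}$ for all $x_0\in\operatorname{int} D$, which is global asymptotic stability in $\operatorname{int} D$.

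The main obstacle is the second case: everything hinges on the two nonelementary inputs from the theory of competitive systems in $\mathbb{R}^3$ --- the Poincar\'e--Bendixson dichotomy for their omega-limit sets, and the fact that a periodic orbit bounds a positively invariant two-dimensional region containing an equilibrium. Granting these (which is where the hypotheses \emph{competitive} and \emph{$D$ convex} do their work), the exclusion of periodic orbits via the orbital-stability argument and the final assembly into global asymptotic stability are routine.
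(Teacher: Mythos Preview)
Your argument is essentially the standard one and is correct: permanence plus boundedness gives a compact absorbing set in $\operatorname{int}D$; the Hirsch--Smith Poincar\'e--Bendixson theory for three-dimensional competitive systems on convex domains forces every $\omega$-limit set either to contain the unique interior equilibrium $\bar{X_0}$ or to be a nontrivial periodic orbit; local asymptotic stability handles the first alternative; and the second is excluded by combining orbital asymptotic stability of the periodic orbit with the Zhu--Smith fact that it bounds an invariant two-cell containing $\bar{X_0}$, leading to the basin-boundary contradiction you describe.

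It is worth pointing out, however, that the paper does \emph{not} actually prove this abstract theorem. The proof block that follows the statement in the paper is misplaced: rather than establishing the general result, it verifies that the particular HIV system~\eqref{system} is competitive on $D$ (by exhibiting the sign matrix $H$ and checking that $H\,\partial f/\partial x\,H$ has nonpositive off-diagonal entries). In other words, the paper treats Theorem~\ref{globally-asymptotically-stable-intD} as a known black-box result from the literature (it is essentially the Li--Muldowney/Smith--Zhu criterion) and uses the ``proof'' environment to check one of its hypotheses for the model at hand. Your write-up therefore goes well beyond what the paper supplies: you give a genuine proof outline of the abstract statement, whereas the paper only records a hypothesis verification for its specific application.
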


\begin{proof}
This matrix can easily be proven by considering the Jacobian matrix and choose the matrix $H$ as

\begin{equation}
H = \left( \begin{matrix}
1 & 0 & 0 \\
0 & -1 & 0 \\
0 & 0 & 1
\end{matrix} \right)
\end{equation}

By simple calculation, we obtain that

\begin{equation}
    H \frac{\partial f}{\partial x} H =
    \left( \begin{matrix}
    (a - d) - \frac{2aT}{T_{\max}} - \frac{\beta V}{1 + \alpha V} & - \rho & - \frac{\beta T}{(1 + \alpha V)^2} \\
    - \frac{\beta V}{1 + \alpha V} & - (\delta + \rho) & - \frac{\beta T}{(1 + \alpha V)^2} \\
    -k_1 V & - q & - c - k_1 T
    \end{matrix} \right)
\end{equation}

This means that the system \eqref{system} is competitive in $D$, with respect to the partial order defined by the orthant

\begin{equation}
    K = \left\{ (T, I, V) \in \mathbb{R}^3 \| T \leq 0, I \geq 0, V \geq 0 \right\}
\end{equation}
\end{proof}

\begin{remark}
As $D$ is convex and the system \eqref{system} is competitive in $D$. we can say that the system \eqref{system} satisfies the \textbf{Poincare - Bendixson property}. This has been proven by Hirsch (1990) \cite{hirsch1990systems}, Zhu and Smith (1994) \cite{zhu1994stableperiodicorbits} and Smith and Thieme (1991) \cite{smith1991semiflows} that any three-dimensional competitive system that lie in convex sets would have the Poincaré - Bendixson property; in other words, any non-empty compact omega limit set that contains no equilibria must be a closed orbit. 
\end{remark}

\begin{theorem} \label{globally-asymptotically-stable}
Let $c = I(0) + T(0)$ and suppose that
\begin{enumerate}
    \item $R_0 > 1$,
    \item $a_1 > 0, a_3 + a_5 > 0, a_1 (a_2 + a_4) - (a_3 + a_5) > 0$.
\end{enumerate}
Then, the positive equilibrium $\bar{E}(\bar{T}, \bar{I}, \bar{V})$ of system \eqref{system} is globally asymptotically stable provided that one of the following two assumptions hold

\begin{enumerate}
    \item $T_{\max} \frac{a - d + k_1 c}{2a} < m < T_0 < T_{\max} \frac{a - d + \delta + k_1 c }{2a}$,
    \item $m > T_{\max}\frac{a - d + \delta + k_1 c}{2a}$.
\end{enumerate}
\end{theorem}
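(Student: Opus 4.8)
The plan is to verify that system \eqref{system}, viewed on the compact convex region $D$ from \eqref{defiD}, meets every hypothesis of Theorem \ref{globally-asymptotically-stable-intD} with $\bar X_0 = \bar E$, and then quote that theorem. Several hypotheses are already in hand: $D$ is convex and bounded by Theorem \ref{boundedness}; the system is competitive in $D$ with respect to the orthant $K$ by the Jacobian computation in the proof of Theorem \ref{globally-asymptotically-stable-intD}; it is permanent because $R_0>1$, by Theorem \ref{permanent}; and, under hypotheses (1)--(2), $\bar E$ is locally asymptotically stable by Theorem \ref{asymptotically-stable}. Uniqueness of the interior equilibrium is immediate: $E_0=(T_0,0,0)$ lies on the boundary face $\{I=V=0\}$ of $D$, while for $R_0>1$ the positive equilibrium $\bar E$ is feasible and lies in $\mathrm{int}\,D$, and \eqref{system} has no further rest point. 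Finally, by permanence together with the dissipativity of Theorem \ref{boundedness}, every positive solution eventually enters a compact subset of $\mathrm{int}\,D$, so asymptotic stability of $\bar E$ within $\mathrm{int}\,D$ will yield the stated global result for all positive initial data.

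The one nontrivial hypothesis is the \textbf{property of stability of periodic orbits}: every nonconstant periodic solution $p(t)$ of \eqref{system} must be orbitally asymptotically stable with asymptotic phase. I would obtain this from Muldowney's second--compound criterion: $p(t)$ is orbitally asymptotically stable provided the linear variational system $\dot z = J^{[2]}(p(t))\,z$ is asymptotically stable, where $J^{[2]}$ is the second additive compound of the Jacobian of \eqref{system}. Writing $j_{11}=(a-d)-\frac{2aT}{T_{\max}}-\frac{\beta V}{1+\alpha V}$ for its $(1,1)$ entry, one computes
\begin{equation*}
    J^{[2]} = \left( \begin{matrix}
    j_{11}-(\delta+\rho) & \dfrac{\beta T}{(1+\alpha V)^2} & \dfrac{\beta T}{(1+\alpha V)^2} \\[2mm]
    q & j_{11}-c-k_1 T & \rho \\[2mm]
    k_1 V & \dfrac{\beta V}{1+\alpha V} & -(\delta+\rho)-c-k_1 T
    \end{matrix} \right),
\end{equation*}
and I would pick a (possibly weighted) $\ell^1$ or $\ell^\infty$ norm and estimate the corresponding Lozinskii measure, aiming at $\int_0^\omega \mu\!\left(J^{[2]}(p(t))\right)dt<0$. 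Since \eqref{system} is competitive, $J^{[2]}$ has only nonnegative off-diagonal entries, so it is of cooperative (Metzler) type, and a diagonal-dominance estimate is the natural tool: negativity of $\mu(J^{[2]})$ reduces to showing that the diagonal entries $j_{11}-(\delta+\rho)$, $j_{11}-c-k_1T$ and $-(\delta+\rho)-c-k_1T$ are negative and outweigh the off-diagonal terms $\frac{\beta T}{(1+\alpha V)^2}$, $q$, $\rho$, $k_1 V$, $\frac{\beta V}{1+\alpha V}$, each of which can be bounded above via the a priori bounds $m\le T\le c$ and $I,V\le M$ from permanence and Theorem \ref{boundedness}.

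This is exactly where the hypotheses on $m$ enter. By permanence there is $m>0$ with $T(t)\ge m$ for all large $t$ on orbits in $\mathrm{int}\,D$, hence $j_{11}\le(a-d)-\frac{2am}{T_{\max}}$; the bound $m>T_{\max}\frac{a-d+k_1c}{2a}$ of case (1) (respectively the stronger $m>T_{\max}\frac{a-d+\delta+k_1c}{2a}$ of case (2)) forces $j_{11}<-k_1c$ (respectively $j_{11}<-\delta-k_1c$), which is precisely what makes the diagonal entries of $J^{[2]}$ dominate, and the bracketing $m<T_0<T_{\max}\frac{a-d+\delta+k_1c}{2a}$ in case (1) pins down the sign of $j_{11}$ in the complementary regime so that $j_{11}-(\delta+\rho)$ is controlled as well. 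Once $\mu\!\left(J^{[2]}(p(t))\right)<0$ holds along every periodic orbit, all hypotheses of Theorem \ref{globally-asymptotically-stable-intD} are met, and $\bar E$ is globally asymptotically stable in $\mathrm{int}\,D$, hence for all positive initial data.

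The step I expect to be the main obstacle is the Lozinskii-measure estimate for $J^{[2]}$ along an \emph{unknown} periodic orbit: unlike the local analysis at $\bar E$, one must handle the full state-dependent matrix and dominate every off-diagonal term using only the coarse permanence and dissipativity bounds. Choosing a norm — very likely a weighted one — for which the cross-terms genuinely cancel against the negative diagonal entries is the delicate part, and it is plausible that the two-case split on $m$ is needed precisely because no single norm works uniformly throughout the region.
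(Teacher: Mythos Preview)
Your overall architecture is exactly the paper's: reduce to Theorem \ref{globally-asymptotically-stable-intD}, check competitiveness, permanence, and local stability, and then verify the stability of periodic orbits via Muldowney's second--compound criterion. Your computation of $J^{[2]}$ is correct, and you have correctly isolated the Lozinskii/Lyapunov estimate along a periodic orbit as the only nontrivial step.

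The gap is in how you propose to obtain that estimate. Bounding the off--diagonal entries of $J^{[2]}$ by the a priori permanence/dissipativity bounds will not work: the entry $q$ in position $(2,1)$ is a fixed constant (in the paper's numerics $q=500$) and cannot be dominated by any diagonal term of the form $j_{11}-c-k_1T$ using only $m\le T\le c$. The paper does not try to dominate $q$ at all; instead it chooses the \emph{state--dependent} Lyapunov function
\[
L=\sup\Bigl\{|W_1|,\ \tfrac{I(t)}{V(t)}\bigl(|W_2|+|W_3|\bigr)\Bigr\},
\]
i.e.\ a weighted $\ell^\infty$ with weights $(1,I/V,I/V)$ that move with the periodic solution. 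With this choice, the dangerous off--diagonal contributions become $\frac{\beta TV}{I(1+\alpha V)^2}$ and $\frac{qI}{V}$, and these are eliminated \emph{by substituting the second and third equations of \eqref{system}}: one has $\frac{\beta TV}{I(1+\alpha V)}=\frac{I'}{I}+(\delta+\rho)$ and $\frac{qI}{V}=\frac{V'}{V}+c+k_1T$, so both terms collapse to logarithmic derivatives that integrate to zero over a period. After this substitution one is left with $\sup\{g_1,g_2\}\le -\mu+\frac{I'}{I}$, and it is only at this last stage that the two cases on $m$ enter, to make the residual constant $-\mu$ negative. So the role of conditions (iii)/(iv) is more limited than you suggest: they do not produce diagonal dominance against $q$, they only fix the sign of the leftover $a-d-\tfrac{2aT}{T_{\max}}$ (Case 1) or $k_1T-\delta$ (Case 2) after the ODE substitutions have already removed the large terms. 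Without the $(1,I/V,I/V)$ weighting and the ODE--substitution trick (borrowed from the Li--Muldowney SEIR argument), the estimate you sketch cannot close.
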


As we have already known that the system \eqref{system} is competitive and permanent (from Theorem \ref{permanent} and Theorem \ref{globally-asymptotically-stable-intD}), while $\bar{E}(\bar{T}, \bar{I}, \bar{V})$ is locally asymptotically stable if the two properties (i) and (ii) of Theorem \ref{globally-asymptotically-stable} holds. As a result, in accordance with Theorem \ref{globally-asymptotically-stable-intD} (choosing $D = \Omega$), Theorem \ref{globally-asymptotically-stable} if we can prove that the system \eqref{system} has the stability of periodic orbits. We will proceed to prove this under the following proposition.

\begin{proposition} \label{stability-periodic-orbits}
Assume condition (iii) or (iv) of \ref{globally-asymptotically-stable} hold true. Then, system \eqref{system} has the property of stability of periodic orbits.
\end{proposition}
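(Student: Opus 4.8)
The plan is to prove the proposition via Muldowney's second additive compound matrix criterion for orbital asymptotic stability of periodic solutions, which is exactly the ``stability of periodic orbits'' hypothesis needed to invoke Theorem~\ref{globally-asymptotically-stable-intD}. So let $\gamma=\{p(t)=(T(t),I(t),V(t)):0\le t\le\omega\}$ be an arbitrary nonconstant periodic solution of \eqref{system} of least period $\omega>0$. By Theorem~\ref{permanent} this orbit lies in $\operatorname{int}D$ and there are constants $0<m\le M$ (the uniform bounds from permanence, $m$ being the lower threshold for $T$) with $m\le T(t),I(t),V(t)\le M$ along $\gamma$. By the criterion for competitive three-dimensional systems \cite{zhu1994stableperiodicorbits}, $\gamma$ is orbitally asymptotically stable with asymptotic phase provided the linear second compound equation
\begin{equation}
  \dot z(t)=\frac{\partial f^{[2]}}{\partial x}\big(p(t)\big)\,z(t)
\end{equation}
is asymptotically stable, where $\partial f^{[2]}/\partial x$ is the second additive compound of the Jacobian of \eqref{system}; since $\gamma$ was arbitrary this gives the stability of periodic orbits. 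Explicitly, along $p(t)$,
\begin{equation}
  \frac{\partial f^{[2]}}{\partial x}=\begin{pmatrix}
  (a-d)-\frac{2aT}{T_{\max}}-\frac{\beta V}{1+\alpha V}-(\delta+\rho) & \frac{\beta T}{(1+\alpha V)^2} & \frac{\beta T}{(1+\alpha V)^2}\\[4pt]
  q & (a-d)-\frac{2aT}{T_{\max}}-\frac{\beta V}{1+\alpha V}-c-k_1T & \rho\\[4pt]
  k_1V & \frac{\beta V}{1+\alpha V} & -(\delta+\rho)-c-k_1T
  \end{pmatrix}.
\end{equation}

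To prove asymptotic stability of this $\omega$-periodic linear system I would use the Lyapunov/Lozinskii-measure technique. Fix a positive diagonal matrix-valued function $P(X)=\operatorname{diag}(p_1,p_2,p_3)$ on $\operatorname{int}D$ — the natural candidate being built from the ratios $I/V$, possibly weighted by the saturation factor $1+\alpha V$ — and on a solution $z(t)$ set $\Phi(t)=\|P(p(t))z(t)\|$ for a suitable vector norm ($\ell^\infty$ or $\ell^1$). The standard estimate gives $D_+\Phi(t)\le\mu\big(B(t)\big)\,\Phi(t)$, where $B=P_fP^{-1}+P\,(\partial f^{[2]}/\partial x)\,P^{-1}$, $P_f$ is the directional derivative of $P$ along the flow of \eqref{system}, and $\mu$ is the Lozinskii measure associated to the norm. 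Computing the row (or column) sums that define $\mu(B(t))$ and using the identities $\dot I/I=\beta TV/\big((1+\alpha V)I\big)-(\delta+\rho)$ and $\dot V/V=qI/V-c-k_1T$ coming from \eqref{system} to absorb the weight-derivative contributions, one arrives at an estimate of the form $\mu(B(t))\le \dot I(t)/I(t)-\kappa(p(t))$, where $\kappa$ is a combination of $(a-d)-2aT/T_{\max}$ with terms in $\delta$, $c$ and $k_1T$.

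Conditions (iii) and (iv) of Theorem~\ref{globally-asymptotically-stable} are precisely the two regimes that force $\kappa$ to be bounded below by a positive constant $\eta$ uniformly on $\operatorname{int}D$: in case (iii) the orbit sits in the window $m<T_0<T_{\max}(a-d+\delta+k_1c)/2a$ with $m$ already above the smaller threshold $T_{\max}(a-d+k_1c)/2a$, while in case (iv) $m$ exceeds the larger threshold $T_{\max}(a-d+\delta+k_1c)/2a$; in either case, after inserting the permanence bounds $m\le T,I,V\le M$ and $T+I\le c$, the relevant bracket is strictly positive. Then $\mu(B(t))\le \dot I(t)/I(t)-\eta$ with $\eta>0$, and integrating over one period,
\begin{equation}
  \int_0^\omega\mu\big(B(t)\big)\,dt\le\big[\ln I(t)\big]_0^\omega-\eta\omega=-\eta\omega<0
\end{equation}
by $\omega$-periodicity of $I$; since $P$ is bounded above and below on the compact orbit, $\Phi(t)\to0$ forces $z(t)\to0$, so the second compound equation is asymptotically stable and the argument closes. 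The main obstacle is the middle step: choosing $P$ and the norm so that, after substituting the differential equations, every row/column sum of $B$ telescopes to $\dot I/I$ minus a quantity controllable by (iii) or (iv) — and the need to split into the two alternative hypotheses is exactly the two regimes that appear when one tries to sign that quantity. Competitiveness (Theorem~\ref{globally-asymptotically-stable-intD}) enters only to license the use of the periodic-orbit criterion and of Theorem~\ref{globally-asymptotically-stable-intD} itself.
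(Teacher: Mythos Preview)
Your plan is correct and matches the paper's proof essentially line for line: Muldowney's second compound criterion, the weighting $P=\operatorname{diag}(1,I/V,I/V)$ with the norm $\|(W_1,W_2,W_3)\|=\sup\{|W_1|,|W_2|+|W_3|\}$, substitution of $\dot I/I$ and $\dot V/V$ from \eqref{system} to telescope the estimates, and the case split on (iii) versus (iv) to sign the residual term. The only point where you are vaguer than the paper is the ``main obstacle'' you flag; the paper resolves it with exactly the $P$ and norm just named (no extra $1+\alpha V$ weight is needed), and in each case obtains $\sup\{g_1,g_2\}\le \dot I/I-\mu$ for a constant $\mu>0$, then concludes via Gronwall rather than by integrating over a period --- but that is cosmetic.
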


\begin{proof}
Let $P(t) = ((T(t), I(t), V(t))$ be a periodic solution whose orbit $\Gamma$ is contained in $\text{int} \Omega$. In accordance with the criterion given by Muldowney in \cite{muldowney1990compoundmat}, for the asymptotic orbital stability of a periodic orbit of a general autonomous system, it is sufficient to prove that the linear non-autonomous system
\begin{equation} \label{system-linear-nonauto}
    \frac{dW(t)}{dt} = \left( DF^{[2]}\left( P(t) \right) \right) W(t)
\end{equation}

is asymptotically stable, where $DF^{[2]}$ is the second additive compound matrix of the Jacobian $DF$ \cite{song2007hivmodel}. 

The Jacobian matrix of the system \eqref{system} is given by

\begin{equation}
    J = \left( \begin{matrix}
    (a - d) - \frac{2aT}{T_{\max}} - \frac{\beta V}{1 + \alpha V} & \rho & - \frac{\beta T}{(1 + \alpha V)^2} \\
    \frac{\beta V}{1 + \alpha V} & - (\delta + \rho) & \frac{\beta T}{(1 + \alpha V)^2} \\
    - k_1 V & q & - (c + k_1 T)
    \end{matrix} \right)
\end{equation}

For the solution $P(t)$, the equation \eqref{system-linear-nonauto} becomes

\begin{equation}
    \begin{aligned} \label{system-w}
    \frac{dW_1}{dt} & = - \left( \delta + \rho - (a - d) + \frac{2aT}{T_{\max}} + \frac{\beta V}{1 + \alpha V} \right) W_1 + \frac{\beta T}{(1 + \alpha V)^2} (W_2 + W_3), \\
    \frac{dW_2}{dt} & = qW_1 + \left( a - d - \frac{2aT}{T_{\max}} - \frac{\beta V}{1 + \alpha V} - (c + k_1 T) \right) W_2 + \rho W_3, \\
    \frac{dW_3}{dt} & = k_1 V W_1 + \frac{\beta V}{1 + \alpha V} W_2 - (\delta + \rho + c + k_1 T) W_3.
    \end{aligned}
\end{equation}

To prove that the system \eqref{system-w} is asymptotically stable, we shall use the following Lyapunov function, which is similar to the one found in \cite{li1995globalstabilityseir} for the SEIR model:

\begin{equation}
    L(W_1(t), W_2(t), W_3(t), T(t), I(t), V(t)) = \norm{\left( 
    W_1(t), \frac{I(t)}{V(t)} W_2(t), \frac{I(t)}{V(t)} W_3(t) \right)}
\end{equation}

where $\norm{ \cdot }$ is the norm in $\mathbb{R}^3$ defined by

\begin{equation}
    \norm{(W_1, W_2, W_3)} = \sup \{ |W_1|, |W_2 + W_3| \}
\end{equation}

From Theorem \ref{permanent}, we obtain that the orbit of $P(t)$ remains at a positive distance from the boundary of $\Omega$. Therefore,

\begin{equation}
    I(t) \geq \eta, \quad V(t) \geq \eta, \quad \eta = \min\{\underbar{I}, \underbar{V}\} \quad \forall t \to +\infty
\end{equation}

Hence, the function $L(t)$ is well defined along $P(t)$ and 

\begin{equation} \label{ineq_L}
    L(W_1, W_2, W_3; T, I, V) \geq \frac{\eta}{M} \norm{(W_1, W_2, W_3)}
\end{equation}

Along a solution $(W_1, W_2, W_3)$ of the system \eqref{system-w}, $L(t)$ becomes

\begin{equation}
    L(t) = \sup \left\{ \left| W_1(t)\right|, \frac{I(t)}{V(t)} \left( \left| W_2(t) \right| + \left| W_3(t) \right| \right) \right\}
\end{equation}

Then, we would have the following inequalities

\begin{equation}
    \begin{aligned}
        D_+ |W_1(t)| & \leq - \left( \delta + \rho - (a - d) + \frac{2aT}{T_{\max}} + \frac{\beta V}{1 + \alpha V} \right) |W_1| + \frac{\beta T}{(1 + \alpha V)^2} \left( |W_2(t)| + |W_3(t)| \right) \\
        D_+ |W_2(t)| & \leq q |W_1(t)| + \left( a - d - \frac{2aT}{T_{\max}} - \frac{\beta V}{1 + \alpha V} - (c + k_1 T)\right) |W_2(t)| + \rho |W_3(t)| \\
        D_+|W_3(t)| & \leq k_1 V |W_1(t)| + \frac{\beta V}{1 + \alpha V} |W_2(t)| - (\delta + \rho + c + k_1 T) |W_3(t)|
    \end{aligned}
\end{equation}

From this, we get

\begin{equation}
    \begin{aligned}
    D_+ \frac{I}{V} (|W_2| + |W_3|) & = \left( \frac{dI/dt}{V} - \frac{I dV/dt}{V^2}\right) (|W_2| + |W_3|) + \frac{I}{V} D_+ (|W_2| + |W_3|) \\
    & \leq \left( \frac{dI/dt}{I} - \frac{dV/dt}{V}\right) \frac{I}{V} (|W_2| + |W_3|) + \left( \frac{qI}{V} + k_1 I \right) |W_1| \\
    & - \left(- a + d + \frac{2aT}{T_{\max}} + (c+k_1 T) \right) \frac{I}{V} |W_2(t)| - (\delta + c + k_1 T) \frac{I}{V} |W_3(t)|
    \end{aligned}
\end{equation}

Thus, we can obtain

\begin{equation}
    D_+ L(t) \leq \sup \{ g_1(t), g_2(t) \} L(t),
\end{equation}

where

\begin{equation}
    \begin{aligned}
        g_1(t) & = - \delta - \rho + a - d - \frac{2aT}{T_{\max}} - \frac{\beta V}{1 + \alpha V} + \frac{\beta TV}{I(1 + \alpha V)^2}\\
        g_2(t) & = \frac{qI}{V} + k_1 I + \frac{dI/dt}{I} - \frac{dV/dt}{V} - G_1 \\
        G_1 & = \min \left\{ -a + d + \frac{2aT}{T_{\max}} + (c + k_1 T), \delta + c + k_1 T \right\}
    \end{aligned}
\end{equation}

From the second equation of the system \eqref{system}, we obtain

\begin{equation}
    \begin{aligned}
        g_1(t) & = - \delta - \rho + a - d - \frac{2aT}{T_{\max}} - \frac{\beta V}{1 + \alpha V} + \frac{\beta T V}{I (1 + \alpha V)^2} \\
        & \leq - \delta - \rho + a - d - \frac{2aT}{T_{\max}} - \frac{\beta V}{1 + \alpha V} + \frac{\beta t V}{I(1 + \alpha V)} \\
        & = a - d - \frac{2aT}{T_{\max}} - \frac{\beta T}{1 + \alpha V} + \frac{dI/dt}{I}
    \end{aligned}
\end{equation}

Here, we consider two different cases.

\begin{itemize}
    \item \textbf{Case 1:} If Point 3 of Theorem \ref{globally-asymptotically-stable} holds, then
        \begin{equation}
            - \delta < a - d - \frac{2aT}{T_{\max}} < 0,
        \end{equation}
        
        that is

        \begin{equation}
            G_1 = -a + d + \frac{2aT}{T_max} + (c + k_1 T)
        \end{equation}
        
        Then, we would obtain
        \begin{equation}
            g_2(t) = a - d - \frac{2aT}{T_{\max}} + k_1 I + \frac{dI/dt}{I} = g_1(t) + k_1 I + \frac{\beta V}{1 + \alpha V} > g_1(t)
        \end{equation}
        
        Hence, 
        \begin{equation} \label{ineq-mu1}
            \sup \{ g_1(t), g_2(t) \} \leq a - d - \frac{2aT}{T_{\max}} + k_1 I + \frac{dI/dt}{I} \leq - \mu_1 + \frac{dI/dt}{I}
        \end{equation}
        
        where
        
        \begin{equation}
            \mu_1 > 0, \quad a - d - \frac{2aT}{T_{\max}} + k_1 I \leq - \mu_1 < 0
        \end{equation}
        
        with the assumption that $k_1 I$ is negligible compare to the term $a - \frac{2aT}{T_{\max}}$. This assumption would be verified in the examples of the simulation part below.
        
        \item \textbf{Case 2:} If Point 4 of Theorem \ref{globally-asymptotically-stable} holds, then
        \begin{equation}
            - a + d + \frac{2aT}{T_{\max}} \leq \delta,
        \end{equation}
        
        which means that $G_1 = \delta + c + k_1 T$. Then, we obtain that
        
        \begin{equation} \label{ineq-mu2}
            \mu_2 < 0, \quad g_1(t) < g_2(t) = k_1 T - \delta + \frac{dI/dt}{I} \leq - \mu_2 + \frac{dI/dt}{I}
        \end{equation}
        
        with the same assumption that $k_1T < \sigma$ in reasonably practical scenarios. Hence, 
        
        \begin{equation}
            \sup \{ g_1(t), g_2(t) \} \leq - \mu + \frac{dI/dt}{I}
        \end{equation}
        
        Let $\mu = \min \{ \mu_1, \mu_2 \}$. Then, form \eqref{ineq-mu1} and \eqref{ineq-mu2}, we have
        
        \begin{equation}
            \sup \{ g_1(t), g_2(t) \} \leq - \mu +  \frac{dI/dt}{I},
        \end{equation}
        
        or 
        
        \begin{equation}
            D_+ L(t) \leq \left( - \mu + \frac{dI/dt}{I} \right) L(t).
        \end{equation}
        
        According to Gronwall's inequality, we would have
        
        \begin{equation}
            \begin{aligned}
            L(t) & \leq L(0) \exp \left( \int_0^t \left[ - \mu + \frac{dI/dt}{I} \right] ds \right) \\
            & = L(0) \exp \left( \left[ -\mu s + \ln(I(s)) \right]_0^t \right) \\
            & = L(0) \exp(-\mu t) \exp \left( \ln(I(t)) - \ln(I(0)) \right) \\
            & = L(0) \exp(-\mu t) \frac{I(t)}{I(0)} \\
            & \leq \frac{M L(0)}{I(0)} \exp(-\mu t) \to 0 \quad \text{as} \quad t \to +\infty
            \end{aligned}
        \end{equation}
\end{itemize}

From \eqref{ineq_L}, we conclude that 

\begin{equation}
    \left( W_1(t), W_2(t), W_3(t) \right) \to 0 \quad \text{as} \quad t \to + \infty 
\end{equation}

This implies that the linear system equation \eqref{system-w} is asymptotically stable, and therefore the periodic solution is asymptotically orbitally stable. This proves proposition \ref{stability-periodic-orbits}

\end{proof}

\begin{theorem} \label{orbitally-asymptotically-stable}
Suppose that
\begin{enumerate}
    \item $R_0 > 1$,
    \item $a_1 > 0, a_3 + a_5 > 0, a_1(a_2 + a_4) - (a_3 + a_5) > 0$.
\end{enumerate}
Then, system \eqref{system} has an orbitally asymptotically stable periodic solution.
\end{theorem}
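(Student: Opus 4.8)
The plan is to assemble the pieces already established rather than to argue from scratch. By Theorem \ref{permanent} the flow of \eqref{system} is permanent on $\Omega$ when $R_0>1$; by the computation in the proof of Theorem \ref{globally-asymptotically-stable-intD} the system is competitive on the convex, compact, positively invariant set $\Omega$ with respect to the orthant $K$, so by the results of Hirsch \cite{hirsch1990systems}, Zhu and Smith \cite{zhu1994stableperiodicorbits} and Smith and Thieme \cite{smith1991semiflows} it enjoys the Poincar\'e--Bendixson property; and by Proposition \ref{stability-periodic-orbits} every closed orbit contained in $\text{int}\,\Omega$ is orbitally asymptotically stable. Granting these three facts, the only thing left is to exhibit one nonconstant periodic orbit lying in $\text{int}\,\Omega$.

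First I would produce the orbit, using the classical dichotomy for three-dimensional competitive systems (Hirsch \cite{hirsch1990systems}; see also Zhu and Smith \cite{zhu1994stableperiodicorbits}): a permanent competitive system on a convex set whose unique interior equilibrium is hyperbolic and not asymptotically stable must possess a nonconstant periodic orbit $\Gamma$. In the parameter range relevant to this theorem $\bar{E}$ is indeed non-attracting, and this is detected through the characteristic equation \eqref{characeq} at $\bar{E}$ via the coefficients $a_i$ of \eqref{define-as}, i.e.\ exactly the Routh--Hurwitz data of Theorem \ref{asymptotically-stable}. Concretely, one fixes a point of $\text{int}\,\Omega$ that is not in the basin of $\bar{E}$; by permanence its $\omega$-limit set is a nonempty compact subset of $\text{int}\,\Omega$, and the orbit-index argument underlying the competitive Poincar\'e--Bendixson theory, together with the absence of a homoclinic loop at $\bar{E}$ (no such loop can coexist with the orbital stability supplied by Proposition \ref{stability-periodic-orbits}, which makes any periodic orbit in $\text{int}\,\Omega$ isolated), identifies this $\omega$-limit set with a closed orbit $\Gamma\subset\text{int}\,\Omega$. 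Uniqueness of $\Gamma$ then follows from the Zhu--Smith and Muldowney result that a competitive system all of whose periodic orbits are orbitally asymptotically stable has at most one periodic orbit.

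Finally, since $\Gamma\subset\text{int}\,\Omega$, Proposition \ref{stability-periodic-orbits} applies verbatim: the second-compound linear system \eqref{system-linear-nonauto} along $\Gamma$ is asymptotically stable, hence by Muldowney's criterion \cite{muldowney1990compoundmat} the orbit $\Gamma$ is orbitally asymptotically stable, which is the assertion of the theorem. The step I expect to be the main obstacle is precisely the existence of $\Gamma$: one must rule out that the candidate $\omega$-limit set collapses to $\bar{E}$ or to a homoclinic loop, and one must pin down the parameter subregion in which $\bar{E}$ actually loses attractivity so that the hypothesis is not vacuous. The first is handled by the orbit-index machinery combined with the isolation of periodic orbits coming from Proposition \ref{stability-periodic-orbits}; the second amounts to locating where one of $a_1>0$, $a_3+a_5>0$, $a_1(a_2+a_4)-(a_3+a_5)>0$ fails, and dovetails naturally with the Hopf-bifurcation phenomena that the delay version of the model is built to display.
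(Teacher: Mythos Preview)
Your route diverges from the paper's in two substantive ways, and one of them is a genuine gap.

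First, the paper does not patch together Proposition~\ref{stability-periodic-orbits} with a separate existence argument. Instead it performs the change of variables $z_1=-T$, $z_2=I$, $z_3=-V$, checks that the transformed Jacobian has nonpositive off-diagonal entries on the relevant region $E$ (so the new system is competitive in the standard sense), records that the transformed equilibrium $z^*$ is unstable with $\det J(z^*)<0$, invokes permanence to get a compact absorbing set, and then applies Theorem~1.2 of Zhu and Smith \cite{zhu1994stableperiodicorbits} in one stroke. That black-box theorem delivers both existence \emph{and} orbital asymptotic stability of a periodic orbit for three-dimensional competitive systems with an unstable interior equilibrium; no second-compound/Muldowney argument is used here at all. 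Your plan, by contrast, separates existence (via Poincar\'e--Bendixson plus orbit-index reasoning) from stability (via Proposition~\ref{stability-periodic-orbits}).

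Second, and this is the gap: you cannot legitimately invoke Proposition~\ref{stability-periodic-orbits}. Its hypothesis is that condition~(iii) or~(iv) of Theorem~\ref{globally-asymptotically-stable} holds, and neither of those is assumed in Theorem~\ref{orbitally-asymptotically-stable}. So the orbital stability of every closed orbit in $\mathrm{int}\,\Omega$ is not available to you, and with it collapses your isolation argument, your exclusion of homoclinic loops, and your uniqueness claim. The paper sidesteps this entirely because Zhu--Smith's theorem supplies the orbital stability conclusion directly.

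You are right, however, to flag the tension in the hypotheses: condition~2 of the theorem is exactly the Routh--Hurwitz inequality set that, by Theorem~\ref{asymptotically-stable}, makes $\bar{E}$ locally asymptotically stable, yet the Zhu--Smith machinery (and your own existence argument) needs $\bar{E}$ to be unstable. The paper's proof simply asserts ``it is easy to see that $z^*$ is unstable,'' which sits uneasily with the stated hypothesis. Your instinct that something about the parameter region has to be pinned down is sound; the paper does not resolve it either.
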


\begin{proof}
First, we perform a change of variables as follows:

\begin{equation}
    z_1(t) = -T(t), \quad z_2(t) = I(t), \quad z_3(t) = -V(t)
\end{equation}

Applying this transformation to the system \eqref{system}, we obtain

\begin{equation} \label{system-z}
    \begin{aligned}
        \frac{dz_1(t)}{dt} & = -s - dz_1 + az_1 \left( 1 + \frac{z_1}{T_{\max}}\right) + \frac{\beta z_1 z_3}{1 - \alpha z_3} + \rho z_2 \\
        \frac{dz_2(t)}{dt} & = \frac{\beta z_1 z_3}{1 - \alpha z_3} - (\delta + \rho) z_2 \\
        \frac{dz_3(t)}{dt} & = - q z_2 -c z_3 + k_1 z_1 z_3
    \end{aligned}
\end{equation}

The Jacobian matrix of the system \eqref{system-z} is then given by

\begin{equation}
    J(z) = \left( \begin{matrix}
    a - d + \frac{2az_1}{T_{\max}} + \frac{\beta z_3}{1 - \alpha z_3} & \rho & \frac{\beta z_1}{(1 + \alpha z_3)^2} \\
    \frac{\beta z_3}{1 - \alpha z_3} & - (\delta + \rho) & \frac{\beta z_1}{(1 + \alpha z_3)^2} \\
    k_1 z_3 & -q & -c + k_1 z_1 
    \end{matrix} \right)
\end{equation}

Similar to the definition of the set $D$ at \ref{defiD}, we define set $E$ as:

\begin{equation}
    E = \left\{ (z_1, z_2, z_3): z_1 \leq 0, z_2 \geq 0, z_3 \leq 0 \right\}
\end{equation}

Since $J(z)$ has non-positive off diagonal elements at each point of $E$, \eqref{system-z} is competitive at $E$. Set $z^* = (-T^*, I^*, V^*)$. It is easy to see that $z^*$ is unstable and $\det J(z^*) < 0$. Furthermore, it follows from Theorem \ref{permanent} that there exists a compact set $B$ in the interior of $E$ such that for any $z_0 \in \text{int} E$, there exists $T(z_0) > 0$ such that $z(t, z_0) \in B$ for all $t > T(z_0)$. Consequently, by Theorem 1.2 in Zhu and Smith (1994) \cite{zhu1994stableperiodicorbits} for the class of three-dimensional competitive systems, it has an orbitally asymptotically stable periodic solution.

The proof is complete.
\end{proof}

\section{The delay differential equation (DDE) model}
In this section, we introduce a time delay into the model \eqref{system-linear} to represent the incubation time that the vectors need to become infectious. The model is rewritten as follows

\begin{equation} \label{system-delay}
    \begin{aligned}
        \frac{dT}{dt} & = s - dT + aT \left( 1 - \frac{T}{T_{\max}} \right) - \frac{\beta TV}{1 + \alpha V} + \rho I \\
        \frac{dI}{dt} & = \frac{\beta T(t - \tau) V(t - \tau)}{1 + \alpha V(t - \tau)} - (\delta + \rho) I \\
        \frac{dV}{dt} & = qI - cV - k_1 VT
    \end{aligned}
\end{equation}

under the initial values 

\begin{equation}
    T(\theta) = T_0, \quad I(\theta) = I_0, \quad V(\theta) = V_0 \quad \forall \theta \in [-\tau, 0]
\end{equation}

All parameters of this delay model are the same as those of the system \eqref{system}, except that the additional positive constant $\tau$ represents the length of the delay, in days.

This time delay parameter can be explained as follows: At time $t$, only healthy cells that have been infected by the virus $\tau$ days ago (i.e at time $t - \tau$ are infectious, provided that they have survived the incubation period of $\tau$ days and were alive at the time $t - \tau$ when they infect the healthy cells. As a result, the incidence term of healthy cells in the derivative of infected cells with respect to time is modified from $\beta T(t)V(t)$ to $\beta T(t - \tau) V(t-\tau)$.

The reproduction of this delay differential equation can be given the same as the original ODE model, which is

\begin{equation}
    R_{01} = \frac{q\beta - (\delta + \rho)k_1}{c(\delta + \rho)} T_0
\end{equation}

Its biological meaning is that, if one virus is introduced in the population of uninfected cells, the total number of secondary infected cells during the infectious period would be 
$\frac{q\beta - (\delta + \rho)k_1}{c(\delta + \rho)}$.


\subsection{Local and Global Stability of the Disease-free Equilibrium}

Within this section, we will study the local and global stability of the disease-free equilibrium $E_0$ of the delay model in two cases: when $R_0 > 1$ and when $R_0 < 1$.

\begin{theorem}
The disease-free equilibrium of the system \eqref{system-delay} is locally asymptotically stable if $R_0 < 1$, and is unstable if $R_0 > 1$.
\end{theorem}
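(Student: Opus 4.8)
The plan is to linearise the delay system \eqref{system-delay} about the disease-free equilibrium $E_0=(T_0,0,0)$ and to locate the roots of the resulting transcendental characteristic equation, uniformly in $\tau$. Writing $T=T_0+x$, $I=y$, $V=z$ and discarding the quadratic terms, the delayed infection term contributes only $\beta T_0\,z(t-\tau)$ to the $I$-equation, because the perturbation of $T(t-\tau)$ is multiplied by the equilibrium value $V=0$ and drops out. A short computation then shows that the characteristic equation factors as
\[
\Bigl(\lambda-(a-d)+\tfrac{2aT_0}{T_{\max}}\Bigr)\;\Delta(\lambda,\tau)=0,
\qquad
\Delta(\lambda,\tau):=(\lambda+\delta+\rho)(\lambda+c+k_1T_0)-q\beta T_0\,e^{-\lambda\tau}.
\]
The linear factor yields the root $\lambda_1=(a-d)-2aT_0/T_{\max}$, and since $T_0>0$ satisfies $s+(a-d)T_0-aT_0^2/T_{\max}=0$ with $s>0$ we get $(a-d)-aT_0/T_{\max}=-s/T_0<0$, hence $\lambda_1<0$ for every $\tau\ge 0$. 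Thus the local stability of $E_0$ is decided entirely by the zeros of $\Delta(\lambda,\tau)$, and the crucial observation is that $\Delta(0,\tau)=(\delta+\rho)(c+k_1T_0)-q\beta T_0=(\delta+\rho)c-[q\beta-(\delta+\rho)k_1]T_0$ has the sign of $1-R_0$.

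For the case $R_0<1$ I would argue delay-independent stability. At $\tau=0$, $\Delta(\cdot,0)$ is the quadratic $\lambda^2+(\delta+\rho+c+k_1T_0)\lambda+[(\delta+\rho)(c+k_1T_0)-q\beta T_0]$ with all coefficients positive, so by the Routh--Hurwitz criterion its two roots have negative real parts. To extend this to all $\tau>0$ it suffices to show that $\Delta$ never has a root on the imaginary axis: $\lambda=0$ is excluded because $\Delta(0,\tau)>0$, and if $\lambda=i\omega$ with $\omega>0$ solved $\Delta(i\omega,\tau)=0$, then taking moduli in $(i\omega+\delta+\rho)(i\omega+c+k_1T_0)=q\beta T_0\,e^{-i\omega\tau}$ would force $(\omega^2+(\delta+\rho)^2)(\omega^2+(c+k_1T_0)^2)=(q\beta T_0)^2$; but $0<q\beta T_0<(\delta+\rho)(c+k_1T_0)$ makes the right side strictly smaller than the left, a contradiction. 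Since $\Delta$ is an exponential polynomial of retarded type (degree $2$ in $\lambda$, constant coefficient multiplying $e^{-\lambda\tau}$), its roots vary continuously with $\tau$ and the number of roots in the closed right half-plane is finite and can change only through a crossing of the imaginary axis; as no such crossing occurs and all roots lie in the open left half-plane at $\tau=0$, they remain there for all $\tau\ge 0$. Together with $\lambda_1<0$ and the principle of linearised stability for retarded functional differential equations, this yields that $E_0$ is locally asymptotically stable.

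For the case $R_0>1$ I would instead exhibit an unstable root directly. Here $\Delta(0,\tau)<0$, while for real $\lambda$ one has $\Delta(\lambda,\tau)\to+\infty$ as $\lambda\to+\infty$, since the quadratic term dominates and $e^{-\lambda\tau}\to 0$; by the intermediate value theorem $\Delta(\cdot,\tau)$ has a positive real zero for every $\tau\ge 0$, so $E_0$ is unstable. The main obstacle is the uniform-in-$\tau$ step in the $R_0<1$ case: turning ``no imaginary-axis root for any $\tau$'' and ``all roots stable at $\tau=0$'' into ``all roots stable for all $\tau$'' requires the standard continuity-of-roots and finiteness results for retarded exponential polynomials (see e.g. \cite{kuang1993dde}), in particular that roots cannot disappear to, or emerge from, infinity in the right half-plane; the modulus estimate itself is elementary.
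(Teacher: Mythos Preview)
Your proof is correct and follows essentially the same route as the paper: factor off the always-negative root $\lambda_1$, then for $R_0>1$ locate a positive real zero of the transcendental factor by an intermediate-value argument, and for $R_0<1$ rule out imaginary-axis roots and invoke continuity of roots in $\tau$. Your modulus identity $(\omega^2+(\delta+\rho)^2)(\omega^2+(c+k_1T_0)^2)=(q\beta T_0)^2$ is precisely the paper's biquadratic in $\omega$ (obtained there by separating real and imaginary parts, squaring and adding) written in factored form, and your explicit verification that $\lambda_1<0$ from the equilibrium relation is in fact more careful than the paper's.
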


\begin{proof}
Linearizing the system \eqref{system-delay} around $E_0 = (T_0, 0, 0)$, we obtain one negative characteristic root

\begin{equation}
    \lambda_1 = a - d - \frac{2aT_0}{T_{\max}}
\end{equation}

and the following transcendental characteristic equation whose roots are the remaining eigenvalues

\begin{equation} \label{characeq-delay-diseasefree}
    \lambda^2 + (\delta + \rho + c + k_1 T_0) + (c + k_1 T_0)(\delta + \rho) - q \beta T_0 e^{-\lambda \tau} = 0
\end{equation}

For $\tau = 0$, we obtain the exact same quadratic equation as the original ODE system. In this case, we have proven previously that all eigenvalues of the characteristic equation \eqref{characeq-delay-diseasefree} have negative real parts. According to the Routh - Hurwitz criterion, the disease free equilibirum $E_0$ will be locally asymptotically stable when $R_0 < 1$ and is unstable when $R_0 > 1$. 

As a result, we now only need to prove that the statement holds true for all $\tau \neq 0$.

\begin{itemize}
    \item \textbf{Case 1:} $R_0 > 1$. In this case, we expect that \eqref{characeq-delay-diseasefree} has one positive root and the disease-free equilibrium is unstable. Indeed, we arrange the characteristic equation into the form of
    
    \begin{equation}
        \lambda^2 + (\delta + \rho + c + k_1 T_0) \lambda = q\beta T_0 e^{-\lambda \tau} - (c + k_1 T_0)(\delta + \rho) 
    \end{equation}
    
    Now, suppose that $\delta \in \mathbb{R}$ and denote 
    
    \begin{equation}
        \begin{aligned}
        F(\lambda) & = \lambda^2 + (\delta + \rho + c + k_1 T_0) \lambda \\
        G(\lambda) & = q\beta T_0 e^{-\lambda \tau} - (c + k_1 T_0)(\delta + \rho)
        \end{aligned}
    \end{equation}
    
    We would then have that
    \begin{equation}
        F(0) = 0, \quad \lim_{\lambda \to +\infty} F(\lambda) = +\infty
    \end{equation}
    while
    \begin{equation}
        \begin{aligned}
            & G(0) = q \beta T_0 - (c + k_1 T_0) (\delta + \rho) = c (\delta + \rho) (R_0 - 1) > 0, \\
            & G'(\lambda) < 0 \quad \forall \lambda > 0 
        \end{aligned}
    \end{equation}
    
    As a result, the two functions must intersect at a point $\lambda > 0$, which means that the equation \eqref{characeq-delay-diseasefree} admits a positive real root, which means that the disease-free equilibrium is unstable.
    
    \item \textbf{Case 2:} $R_0 < 1$. First, we can notice that \eqref{characeq-delay-diseasefree} can not have any non-negative roots since , while 
    
    \begin{equation}
        \begin{aligned}
        F(0) & = 0, \quad F'(\lambda) > 0 \quad \forall \lambda \geq 0 \\
         G(\lambda) & < 0 , \quad G'(\lambda) < 0 \quad \forall \lambda > 0
        \end{aligned}
    \end{equation}
    
    As a result, if \eqref{characeq-delay-diseasefree} has roots with non-negative real parts, they must be complex and should be obtained from a pair of complex conjugate with cross the imaginary axis. This means that \eqref{characeq-delay-diseasefree} must have a pair of purely imaginary roots for $\tau > 0$. 
    
    As a result, we assume that $\lambda = i\omega$, and without loss of generality, we assume that $\omega > 0$ is a root of \eqref{characeq-delay-diseasefree}, meaning that
    
    \begin{equation}
        -\omega^2 + i\omega(\delta + \rho + c + k_1 T_0) + (c + k_1 T_0)(\delta + \rho) - q\beta T_0 (\cos(\omega \tau) + i \sin(\omega \tau)) = 0
    \end{equation}
    
    Separating the real and imaginary part, we would have
    
    \begin{equation}
        \begin{aligned}
            -\omega^2 + (c + k_1 T_0)(\delta + \rho) & = q \beta T_0 \cos(\omega \tau) \\
            (\delta + \rho + c + k_1 T_0) \omega & = -q \beta T_0 \sin(\omega \tau)
        \end{aligned}
    \end{equation}
    
    Squaring and adding up both sides of the two equations above, we obtain the following fourth-order equation for $\omega$ as
    
    \begin{equation} \label{eq-omega-diseasefree}
        \omega^4 + \omega^2 \left[ (\delta + \rho + c + k_1 T_0)^2 - 2(c + k_1 T_0)(\delta + \rho) \right] + [(c + k_1 T_0) (\delta + \rho)]^2 - (q \beta T_0)^2 = 0 
    \end{equation}
    
    To reduce this fourth-order equation into a quadratic equation, let $z = \omega^2$ and denote the coefficients as
    
    \begin{equation}
        \begin{aligned}
        a_1 & = (\delta + \rho + c + k_1 T_0)^2 - 2(c + k_1 T_0)(\delta + \rho) \\
        a_2 & = (c + k_1 T_0)^2 (\delta + \rho)^2 - (q \beta T_0)^2
        \end{aligned}
    \end{equation}
    
    The equation \eqref{eq-omega-diseasefree} can be rewritten as
    
    \begin{equation} \label{eq-z-diseasefree}
        z^2 + a_1 z + a_2 = 0
    \end{equation}
    
    Since $R_0 < 1$, we would have that
    
    \begin{equation}
        \begin{aligned}
            a_1 & = (\delta + \rho)^2 + (c + k_1 T_0)^2 > 0 \\
            a_2 & = [(c + k_1 T_0) (\delta + \rho) + q \beta T_0] c(\delta + \rho)(1 - R_0) > 0 
        \end{aligned}
    \end{equation}
    
    As a result, this means that the two roots of \eqref{eq-z-diseasefree} have positive product, which means that they have the same sign, regardless of being real or complex. As these two roots also have negative real products, they would be either negative real numbers, or complex conjugate with negative real parts. As a result, the equation \eqref{eq-z-diseasefree} can not have any positive real roots, leading to the fact that there would be no $\omega$ such that $i\omega$ is a root of \eqref{characeq-delay-diseasefree}. Using Rouche's theorem, we conclude that the real parts of all eigenvalues of the characteristic equation of the disease-free equilibrium \eqref{characeq-delay-diseasefree} are all negative for all delay values $\tau > 0$.
    
    In conclusion, if $R_0 < 1$, the disease-free equilibrium $E_0$ is locally asymptotically stable.
\end{itemize}

The proof is complete.
\end{proof}

\subsection{Local and Global Stability of the Positive Equilibrium}

To study the stability of the steady states $\bar{E}$, we define

\begin{equation}
    x(t) = T(t) - \bar{T}, \quad y(t) = I(t) - \bar{I}, \quad z(t) = V(t) - \bar{V}. 
\end{equation}

Then, the linearized system of \eqref{system-delay} at $\bar{E}$ is given by

\begin{equation} \label{system-delay-linear}
    \begin{aligned} 
        \frac{dx(t)}{dt} & = \left[ -d + a - \frac{2a\bar{T}}{T_{\max}} - \frac{\beta \bar{V}}{(1+\alpha \bar{V})^2}\right] x(t) + \rho y(t) - \frac{\beta \bar{T}}{(1 + \alpha \bar{V})^2} z(t) \\
        \frac{dy(t)}{dt} & = \frac{\beta \bar{V}}{1 + \alpha \bar{V}} x(t - \tau) - (\delta + \rho) y(t) + \frac{\beta \bar{V}}{(1 + \alpha \bar{V})^2} z(t - \tau) \\
        \frac{dz(t)}{dt} & = -k_1 \bar{V} x(t) + q y(t) - (c + k_1 \bar{T}) z(t)
    \end{aligned}
\end{equation}

The system \eqref{system-delay-linear} can be expressed in matrix form as follows

\begin{equation}
    \frac{d}{dt} \left( \begin{matrix} x(t) \\ y(t) \\ z(t) \end{matrix}\right) = A_1 \left( \begin{matrix} x(t) \\ y(t) \\ z(t) \end{matrix}\right) + A_2 \left( \begin{matrix} x(t - \tau) \\ y(t - \tau) \\ z(t - \tau) \end{matrix}\right),
\end{equation}

where $A_1$ and $A_2$ are $3 \times 3$ matrices given by

\begin{equation}
    A_1 = \left( \begin{matrix}
    a - d - \frac{2a\bar{T}}{T_{\max}} & \rho & - \frac{\beta \bar{T}}{(1 + \alpha \bar{V})^2} \\
    0 & -(\delta + \rho) & 0 \\
    -k_1 \bar{V} & q & -(c+k_1\bar{T})
    \end{matrix} \right), \quad 
    A_2 = \left( \begin{matrix}
    0 & 0 & 0 \\
    \frac{\beta \bar{V}}{1+ \alpha \bar{V}} & 0 & \frac{\beta \bar{T}}{(1 + \alpha \bar{V})^2} \\
    0 & 0 & 0
    \end{matrix}\right).
\end{equation}

The characteristic equation of system \eqref{system-delay-linear} is given by

\begin{equation}
    \Delta(\lambda) = \left| \lambda I - A_1 - e^{-\lambda \tau} A_2 \right| = 0, 
\end{equation}

that is,

\begin{equation} \label{characeq-delay}
    \lambda^3 + a_1 \lambda^2 + a_2 \lambda + a_5 = - e^{-\lambda \tau} (a_3 + a_4 \lambda),
\end{equation}

with $a_i (i = 1, ..., 5)$ previously defined in \eqref{define-as}.

Next, we shall study the distribution of the roots of the transcendental equation \eqref{characeq-delay} with respect to $0$ analytically. Based on the point (or assumption) that the positive steady state of the original ODE model \eqref{system} is stable, we will derive further conditions on the parameters to ensure that the steady state of the delay model is still stable.

First, we will consider the base case when $\tau = 0$. Then, the characeristic equation \eqref{characeq-delay} will become \eqref{characeq-delay-diseasefree}. Now, we will assume that all roots of his equation, in this case, has all negative real parts, which is equivalent to the fact that the conditions in Theorem \ref{orbitally-asymptotically-stable} are satisfied. As the delay term $\tau$ is considered to be continuous on $\mathbb{R}$, from Rouché's Theorem \cite{dieudonne1960modernanalysis}, the transcendental equaion \eqref{characeq-delay} can only have roots with negative real parts if and only if it has purely imaginary roots. We will investigate whether \eqref{characeq-delay} can admit any purely imaginary roots; from which, we will be able to determine the conditions under which all eigenvalues would have negative real parts.

We assume that $\lambda = \eta(\tau) + i \omega(\tau) \quad (\omega > 0)$ is the eigenvalue of the characteristic equation \eqref{characeq-delay}, where $\eta(\tau)$ and $\omega(\tau)$ are functions depending on the delay term $\tau$. As the positive equilibrium $\bar{E}$ of the model \eqref{system} is stable, we can say that $\eta(0) < 0$ at $\tau = 0$. 

If $\eta (\tau_0) = 0$ for some certain values of of $\tau_0 > 0$ (which means that $\lambda = i\omega(\tau_0)$ are purely imaginary roots of the characteristic equation \eqref{characeq-delay}, the steady state $\bar{E}$ would lose is stability and become unstable whenever $\eta(\tau_0)$ is greater han $0$. In other words, if there exists no $\omega(\tau_0)$ such that the condition above happens, or, if the characteristic equation \eqref{characeq-delay} does not have any purely imaginary roots for all values of $\tau$, the positive equilibrium $\bar{E}$ is always stable. In the following part, we will prove that this statement is indeed correct for equation \eqref{characeq-delay}.

Clearly $i\omega \ (\omega > 0)$ is a root of equation \eqref{characeq-delay} if and only if 

\begin{equation}
    -i \omega^3 - a_1 \omega^2 + i a_2 \omega + a_5 = - a_3 (\cos(\omega \tau) - i \sin(\omega \tau)) - a_4 \omega( \sin(\omega \tau) + i cos(\omega \tau))
\end{equation}

Separating the real and imaginary parts, we would have

\begin{equation} \label{system-realim}
    \begin{aligned}
    a_1 \omega^2 - a_5 & = a_3 \cos(\omega \tau) + a_4 \omega \sin(\omega \tau) \\
    \omega^3 - a_2 \omega & = -a_3 \sin(\omega \tau) + a_4 \omega \cos(\omega \tau)
    \end{aligned}
\end{equation}

Squaring both sides of each equation above and adding up, we obtain the following sixth-degree equation for $\omega$:

\begin{equation} \label{eq-omega}
    \omega^6 + (a_1^2 - 2a_2) \omega^4 + (a_2^2 - 2a_1a_5 - a_4^2) \omega^2 + (a_5^2 - a_3^2) = 0
\end{equation}

Since this equation contains only even powers of $\omega$, we can reduce the order by letting once again $z = \omega^2$ and 

\begin{equation}
    \begin{aligned}
        m_1 & = a_1^2 - 2a_2, \\
        m_2 & = a_2^2 - 2a_1a_5 - a_4^2, \\
        m_3 & = a_5^2 - a_3^2, 
    \end{aligned}
\end{equation}

the equation \eqref{eq-omega} becomes

\begin{equation} \label{eq-z}
    h(z) = z^3 + m_1z^2 + m_2z + m_3 = 0
\end{equation}

In order to show that the positive equilibrium $\bar{E}$ is locally stable, we have to prove that the equation \eqref{eq-z} does not have any positive real root which associates to the square of $\omega$; that is, \eqref{characeq-delay} can not have any purely imaginary roots. The Theorem below provides us with necessary conditions satisfying the result.

\begin{theorem} 
If $m_3 \geq 0$ and $m_2 > 0$, the equation \eqref{eq-omega} has no positive real roots.
\end{theorem}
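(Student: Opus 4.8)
The plan is to reduce the assertion about the degree-six equation \eqref{eq-omega} in $\omega$ to a statement about the cubic \eqref{eq-z} in the variable $z=\omega^{2}$: a nonzero real root $\omega$ of \eqref{eq-omega} corresponds precisely to a positive real root $z=\omega^{2}$ of $h(z)=z^{3}+m_1z^{2}+m_2z+m_3$, and $\omega=0$ is ruled out (or irrelevant) since $h(0)=m_3\ge 0$. So it suffices to show $h$ has no root in $(0,\infty)$, and I would study the sign of $h$ on the half-line $[0,\infty)$ directly rather than chasing roots.

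First I would record $h(0)=m_3\ge 0$ and the factorization $h(z)=z\bigl(z^{2}+m_1z+m_2\bigr)+m_3$. If one also has $m_1\ge 0$, then for every $z>0$ the three terms $z^{2}$, $m_1z$, $m_2$ are all nonnegative with $z^2, m_2>0$, so $z^{2}+m_1z+m_2>0$, hence $h(z)>0$; this closes the proof immediately. To phrase it without reference to the sign of $m_1$, I would instead look at $h'(z)=3z^{2}+2m_1z+m_2$, note $h'(0)=m_2>0$, and observe that the product of the roots of $h'$ is $m_2/3>0$ while their sum is $-2m_1/3$; thus when $m_1\ge 0$ both critical points are non-positive (or $h'$ has no real zero at all when $m_1^{2}<3m_2$), so $h'>0$ throughout $(0,\infty)$ and $h$ is strictly increasing there, giving $h(z)>h(0)=m_3\ge 0$ for $z>0$. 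Either way, the intersection conditions \eqref{system-realim} then have no solution with $\omega>0$, which is the claim.

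The delicate point — and what I expect to be the main obstacle — is the regime $m_1<0$ (the hypotheses $m_3\ge 0$, $m_2>0$ alone do not exclude it): there $h'$ may have two positive roots $z_1<z_2$, so $h$ increases on $(0,z_1)$, decreases on $(z_1,z_2)$, and increases on $(z_2,\infty)$. Positivity of $h$ on $(0,z_2)$ still follows from $h(0)\ge 0$, $h'(0)>0$ and $h(z_1)>h(0)$, but excluding a root in $(z_1,z_2)$ requires showing the local minimum value $h(z_2)$ is nonnegative, which is genuinely an extra algebraic constraint among $m_1,m_2,m_3$. My plan for that case is to write $m_1,m_2,m_3$ back in terms of $a_1,\dots,a_5$ via \eqref{eq-z} and use the Routh–Hurwitz inequalities inherited from Theorem \ref{orbitally-asymptotically-stable} to bound $m_1$ from below, reducing $h(z_2)\ge 0$ to an inequality already implied by those conditions; alternatively, I would simply adopt $m_1\ge 0$ as a standing hypothesis, in which case the clean argument of the previous paragraph suffices. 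I would therefore present the $m_1\ge 0$ case as the core of the proof and treat the sign-indefinite case separately, since that is where essentially all the work lies.
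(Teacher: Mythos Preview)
Your approach coincides with the paper's: both pass to $h(z)=z^{3}+m_1z^{2}+m_2z+m_3$, note $h(0)=m_3\ge 0$, and argue that $h'(z)=3z^{2}+2m_1z+m_2$ is positive on $[0,\infty)$, so $h$ is increasing there and never vanishes for $z>0$. The paper phrases this contrapositively (a positive root would force $h'(z_0)\le 0$ for some $z_0\ge 0$, then asserts the solution set of $h'(z)\le 0$ lies entirely in $(-\infty,0)$), but the content is identical.

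The issue you isolate in the $m_1<0$ regime is real, and the paper does not escape it. The paper's final inequality $\frac{-m_1+\sqrt{m_1^{2}-3m_2}}{3}<0$ is equivalent to $\sqrt{m_1^{2}-3m_2}<m_1$, which needs $m_1>0$; the paper simply writes it down without justification. Your instinct that $m_3\ge 0$ and $m_2>0$ alone are insufficient is correct (e.g.\ $m_1=-10$, $m_2=1$, $m_3=0.1$ gives $h(1)<0$), so the theorem as literally stated is false and your proposed fixes --- either add $m_1\ge 0$ as a standing hypothesis, or derive it from the ambient Routh--Hurwitz conditions via $m_1=a_1^{2}-2a_2$ --- are exactly the right moves. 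You are not missing an idea the paper has; you have located a gap the paper glosses over.
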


\begin{proof}
We will proceed to prove the lemma above using contradiction.

Assume that there exists at least one positive real roots for the equation $h(z) = 0$. 

Notice that $h(0) = m_3 \geq 0$. This means that in order for the equation \eqref{eq-omega} to have a positive real roots, there exists $z_0 \geq 0$ such that
\begin{equation}
    \frac{dh(z_0)}{dz_0} \leq 0.    
\end{equation}

This is equivalent to 
\begin{equation}
    \frac{dh(z_0)}{dz_0} = 3z_0^2 + 2m_1z_0 + m_2 \leq 0
\end{equation}

or

\begin{equation}
    \frac{ - m_1 - \sqrt{m_1^2 - 3m_2}}{2} \leq z \leq \frac{-m_1 + \sqrt{m_1^2 - 3m_2}}{2} < 0
\end{equation}

This contradicts our original assumption that $z_0 \geq 0$, which means that there does not exist any $z_0 \geq 0$ such that $\frac{dh(z_0)}{dz_0} > 0$, or the equation does not have any positive real roots.

The proof is complete.
\end{proof}

This theorem has implied that there exists no $\omega$ such that $i\omega$ is an eigenvalue of the characteristic function \eqref{characeq-delay}. As a result, from Rouche's theorem \cite{dieudonne1960modernanalysis}, the real parts of the eigenvalues of \eqref{characeq-delay} are negative for all $\tau \geq 0$. Summarizing all the above analysis, we have the following theorem

\begin{theorem} \label{positive-equi-asymptotically-stable-delay}
Suppose that
\begin{enumerate}
    \item $a_1 > 0, a_3 + a_5 > 0, a_1(a_2+a_4) - (a_3+a_5) > 0$;
    \item $m_3 \geq 0$ and $m_2 > 0$.
\end{enumerate}
Then, the infected steady state $\bar{E}$ of the delay model \eqref{system-delay} is absolutely stable; that is, $\bar{E}$ is asymptotically stable for all $\tau \geq 0$.
\end{theorem}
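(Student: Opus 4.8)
The plan is to combine continuous dependence of the characteristic roots on $\tau$ with the exclusion of any imaginary-axis crossing. First I would record that at $\tau = 0$ the characteristic equation \eqref{characeq-delay} collapses to $\lambda^3 + a_1\lambda^2 + (a_2+a_4)\lambda + (a_3+a_5) = 0$, which is exactly \eqref{characeq}; under assumption (i) the Routh--Hurwitz criterion (cf. Theorem \ref{asymptotically-stable}) then guarantees that every root lies in the open left half-plane. Writing $\Delta(\lambda,\tau) = \lambda^3 + a_1\lambda^2 + a_2\lambda + a_5 + e^{-\lambda\tau}(a_3 + a_4\lambda)$, this is an entire function of $\lambda$ depending continuously (indeed analytically) on $\tau$, so as $\tau$ increases from $0$ stability can be lost only if some root reaches the imaginary axis; i.e. it suffices to show that $\Delta(i\omega,\tau) \neq 0$ for all $\omega \in \mathbb{R}$ and all $\tau \geq 0$.

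Second, I would exclude imaginary roots. The value $\lambda = 0$ is ruled out since $\Delta(0,\tau) = a_3 + a_5 > 0$ by assumption (i). For $\omega \neq 0$ (without loss of generality $\omega > 0$), substituting $\lambda = i\omega$ into \eqref{characeq-delay} and separating real and imaginary parts gives the system \eqref{system-realim}; squaring each equation and adding eliminates the trigonometric terms and yields the sixth-degree relation \eqref{eq-omega}, equivalently $h(\omega^2) = 0$ with $h(z) = z^3 + m_1 z^2 + m_2 z + m_3$. By assumption (ii) we have $m_3 \geq 0$ and $m_2 > 0$, so the theorem proved immediately above (no positive real root of \eqref{eq-omega} when $m_3 \geq 0$, $m_2 > 0$) shows that $h$ has no positive real root; hence there is no real $\omega$ with $\Delta(i\omega,\tau) = 0$, for any $\tau$.

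Third, I would assemble the conclusion. Since \eqref{characeq-delay} is of retarded type --- the principal term $\lambda^3$ carries no delay and $|e^{-\lambda\tau}| \leq 1$ on $\{\mathrm{Re}\,\lambda \geq 0\}$ --- there is a uniform-in-$\tau$ bound confining all roots with nonnegative real part to a fixed bounded region, so the number of such roots is finite and, by continuity together with the absence of imaginary roots, is a constant function of $\tau$. At $\tau = 0$ that number is $0$, hence it is $0$ for every $\tau \geq 0$: all roots of \eqref{characeq-delay} have negative real part for all $\tau \geq 0$, so $\bar{E}$ is absolutely stable.

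The main obstacle is this last step --- rigorously justifying that ``no imaginary roots for all $\tau$'' together with ``all roots stable at $\tau = 0$'' implies ``all roots stable for all $\tau$''. One must establish an a priori bound $|\mathrm{Re}\,\lambda| \leq C$, uniform in $\tau$, for every root with $\mathrm{Re}\,\lambda \geq 0$, by comparing $|\lambda|^3$ against $|a_1\lambda^2 + a_2\lambda + a_5| + |a_3 + a_4\lambda|$; this, via Rouch\'e's theorem and the continuous dependence of the spectrum on $\tau$, forbids roots from migrating in from infinity and pins the count of unstable roots to zero. Everything else --- the derivation of \eqref{system-realim}, \eqref{eq-omega}, \eqref{eq-z} and the sign bookkeeping for $m_2$ and $m_3$ --- is routine algebra already set up in the text.
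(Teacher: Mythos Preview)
Your proposal is correct and follows essentially the same route as the paper: establish stability at $\tau = 0$ via Routh--Hurwitz (condition (i)), rule out purely imaginary roots of \eqref{characeq-delay} by reducing to \eqref{eq-z} and invoking the preceding theorem under condition (ii), and then appeal to continuous dependence and Rouch\'e's theorem to conclude that no root can cross into the closed right half-plane for any $\tau \geq 0$. Your treatment of the last step --- explicitly noting the retarded structure, the uniform bound on unstable roots, and the exclusion of $\lambda = 0$ via $a_3 + a_5 > 0$ --- is in fact more careful than the paper's, which simply cites Rouch\'e without spelling out why roots cannot enter from infinity.
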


\begin{remark}
The Theorem \ref{positive-equi-asymptotically-stable-delay} indicates that if the parameters satisfy both of the conditions, the equilibrium $\bar{E}$ of \eqref{system-delay} is asymptotically stable regardless of the value of the delay (independent asymptotic stability). However, we also need to note that if any of the conditions in Theorem \ref{positive-equi-asymptotically-stable-delay} is violated (particularly the inequalities in Point 2), the stability of the equilibrium will then depend on the delay value; and when the delay value varies, the equilibrium can lose stability, leading to oscillations 
\end{remark}

For example, if 
\begin{enumerate}
    \item If $m_3 < 0$: From equation \eqref{eq-z}, we would have that 
    \begin{equation}
        h(0) < 0, \quad \lim_{z \to + \infty} = +\infty,
    \end{equation}
    
    which means that \eqref{eq-z} has at least one positive real root, denoted by $\omega_0$.
    \item If $m_2 < 0$, we would have that 
    \begin{equation}
        \frac{-m_1 + \sqrt{m_1^2- 3m_2}}{2} = \frac{-3m_2}{m_1 + \sqrt{m_1^2 + 3m_2}} > 0
    \end{equation}
    
    which means that the equation \eqref{eq-z} has one positive real root $\omega_0$. 
\end{enumerate}

These two cases implies that the characteristic equation \eqref{characeq-delay} has a pair of purely imaginary roots $\pm i \omega_0$.

Next, we would focus on the bifurcation analysis, using the delay term $\tau$ as the bifurcation parameter, in light that the solutions of \eqref{characeq-delay} as function of this parameter. 

Let $\lambda(\tau) = \mu(\tau) + i \omega (\tau)$ be the eigenvalue of \eqref{system-realim} such that for some initial values of the bifurcation parameter $\tau_0$, we would have $\mu(\tau_0) = 0, \quad \omega(\tau_0) = \omega_0$. From the system \eqref{system-realim}, we would have:

\begin{equation}
    \tau_j = \frac{1}{\omega_0} \arccos \left( \frac{a_4 \omega_0^4 + (a_1a_3 - a_2a_4) \omega_0^2 - a_3a_5}{a_3^2 + a_4^2 \omega_0^2}\right) + \frac{2j\pi}{\omega_0}
\end{equation}

Moreover, we can verify the following transversal condition:

\begin{equation}
    \frac{d}{d\tau} \Re(\lambda(\tau)) \|_{\tau = \tau_0} = \frac{d}{d\tau} \mu(\tau) \|_{\tau = \tau_0} > 0
\end{equation}

holds. By continuity, the real part of $\lambda(\tau)$ becomes positive when $\tau > \tau_0$ and the steady state becomes unstable. Moreover, a Hopf bifurcation occurs when $\tau$ passes through the critical value $\tau_0$ (see \cite{hassard1981horfbifurcation}). 

To apply the Hopf bifurcation theorem stated in Marsden and McCracken \cite{marsden1976hopf}, we state the following theorem

\begin{theorem}
Suppose that $\omega_0$ is the largest positive simple root of \eqref{eq-omega}. Then, $i\omega (\tau_0) = i\omega_0$ is a simple root of \eqref{eq-omega}, and $\eta(\tau) + i\omega(\tau)$ is differentiable with respect to $\tau$ in a neighborhood of $\tau = \tau_0$
\end{theorem}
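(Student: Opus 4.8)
The plan is to separate the statement into two parts and prove them in order: first, that the purely imaginary number $i\omega_0$ is a \emph{simple} root of the transcendental characteristic equation \eqref{characeq-delay} at $\tau=\tau_0$ (this is the assertion actually needed when invoking the Hopf bifurcation theorem of Marsden and McCracken); and second, that the eigenvalue branch through $i\omega_0$ is differentiable in $\tau$ near $\tau_0$. The second part will follow from the first by a single application of the analytic implicit function theorem, so essentially all the work is in the first part.

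I would set up notation by writing $\Delta(\lambda,\tau)=P(\lambda)+e^{-\lambda\tau}Q(\lambda)$ with $P(\lambda)=\lambda^3+a_1\lambda^2+a_2\lambda+a_5$ and $Q(\lambda)=a_4\lambda+a_3$, the $a_i$ as in \eqref{define-as}; $\Delta$ is entire in $\lambda$ and real-analytic in $\tau$. A direct computation shows $|P(i\omega)|^2=(a_5-a_1\omega^2)^2+(a_2\omega-\omega^3)^2$ and $|Q(i\omega)|^2=a_3^2+a_4^2\omega^2$, so that \eqref{eq-omega}, and hence \eqref{eq-z} with $z=\omega^2$, is exactly $h(z)=|P(i\sqrt z)|^2-|Q(i\sqrt z)|^2=0$. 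Since $\omega_0>0$, the map $\omega\mapsto\omega^2$ is a local diffeomorphism near $\omega_0$, so $\omega_0$ is a simple root of \eqref{eq-omega} if and only if $z_0:=\omega_0^2$ is a simple root of $h$, i.e.\ $h'(z_0)\neq0$. By construction of $\tau_0$ (the $\arccos$ formula together with \eqref{system-realim}) one has $\Delta(i\omega_0,\tau_0)=0$; I may assume $Q(i\omega_0)\neq0$, since $Q(i\omega_0)=0$ forces $P(i\omega_0)=0$ and hence $a_3=a_4=0$, a degenerate case in which $h=|P(i\sqrt z)|^2\ge0$ has only double positive roots and is therefore excluded by the simplicity hypothesis.

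For the simplicity of $i\omega_0$ I would show directly that $\Delta'(i\omega_0,\tau_0)\neq0$. Substituting $e^{-i\omega_0\tau_0}=-P(i\omega_0)/Q(i\omega_0)$ into $\Delta'(\lambda,\tau)=P'(\lambda)+e^{-\lambda\tau}\bigl(Q'(\lambda)-\tau Q(\lambda)\bigr)$, multiplying by $\overline{P(i\omega_0)}$, and using $|P(i\omega_0)|=|Q(i\omega_0)|$ gives the key identity
\[
\Delta'(i\omega_0,\tau_0)\,\overline{P(i\omega_0)}=\bigl(P'\overline P-Q'\overline Q\bigr)\big|_{i\omega_0}+\tau_0\,|Q(i\omega_0)|^2 .
\]
Taking imaginary parts and using $\tfrac{d}{d\omega}|P(i\omega)|^2=-2\,\mathrm{Im}\bigl(P'(i\omega)\overline{P(i\omega)}\bigr)$ together with $z=\omega^2$ yields $\mathrm{Im}\bigl[\Delta'(i\omega_0,\tau_0)\overline{P(i\omega_0)}\bigr]=-\omega_0\,h'(z_0)$. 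Hence $h'(z_0)\neq0$ forces $\Delta'(i\omega_0,\tau_0)\neq0$, so $i\omega_0$ is a simple root of \eqref{characeq-delay} at $\tau=\tau_0$. For the differentiability claim, since $\Delta(i\omega_0,\tau_0)=0$ and $\partial_\lambda\Delta(i\omega_0,\tau_0)\neq0$, the holomorphic implicit function theorem applied to $\Delta$ provides a neighborhood of $\tau_0$ and a unique branch $\tau\mapsto\lambda(\tau)$, real-analytic and in particular differentiable, with $\lambda(\tau_0)=i\omega_0$ and $\Delta(\lambda(\tau),\tau)\equiv0$; writing $\lambda(\tau)=\eta(\tau)+i\omega(\tau)$ gives differentiability of $\eta$ and $\omega$ near $\tau_0$ with $\eta(\tau_0)=0$, $\omega(\tau_0)=\omega_0$, and differentiating the identity $\Delta(\lambda(\tau),\tau)=0$ then recovers the transversality condition quoted above.

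The main obstacle is the bookkeeping in the displayed computation, namely converting ``simple root of the real polynomial \eqref{eq-omega}'' into ``$\Delta'(i\omega_0,\tau_0)\neq0$''. The delicate points are getting the real/imaginary decomposition right so that $\mathrm{Im}[\Delta'\overline P]=-\omega_0 h'(z_0)$, and ruling out the degenerate configurations $P(i\omega_0)=Q(i\omega_0)=0$ and $\omega_0=0$, both of which are incompatible with $\omega_0$ being a simple positive root. Everything else — the reduction \eqref{eq-omega}$\to$\eqref{eq-z}, and the passage from simplicity of the imaginary eigenvalue to a differentiable eigenvalue branch — is routine.
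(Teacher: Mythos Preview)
Your argument is correct and, in fact, substantially more complete than what the paper offers. The paper does not give a formal proof of this theorem at all: immediately after the statement it simply writes ``we admit that $i\omega_0$ is a simple root \ldots; as a result, using the analytic version of the implicit function theorem mentioned in Chow and Hale (1982), we have that $\eta(\tau)+i\omega(\tau)$ is well-defined and analytic in a neighborhood of $\tau=\tau_0$.'' In other words, the paper takes simplicity of $i\omega_0$ as a root of the transcendental equation \eqref{characeq-delay} for granted and then invokes the implicit function theorem --- exactly your second step --- without ever justifying the first.

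What you do differently, and what is genuinely valuable, is to supply the missing bridge: the hypothesis is that $\omega_0$ is a simple root of the \emph{polynomial} \eqref{eq-omega}, while the implicit function theorem requires $\partial_\lambda\Delta(i\omega_0,\tau_0)\neq0$, i.e.\ simplicity as a root of the \emph{transcendental} equation \eqref{characeq-delay}. Your identity $\mathrm{Im}\bigl[\Delta'(i\omega_0,\tau_0)\overline{P(i\omega_0)}\bigr]=-\omega_0 h'(z_0)$ is precisely the link between the two notions, and the computation you outline (substituting $e^{-i\omega_0\tau_0}=-P/Q$, using $|P(i\omega_0)|=|Q(i\omega_0)|$, and differentiating $|P(i\omega)|^2-|Q(i\omega)|^2=h(\omega^2)$) is correct. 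Your handling of the degenerate case $Q(i\omega_0)=0$ is also apt. So your proposal follows the same overall architecture as the paper (simplicity $\Rightarrow$ analytic IFT $\Rightarrow$ differentiable branch) but actually carries out the step the paper merely asserts.
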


After previous reasoning, we admit that $i\omega_0$ is a simple root of \eqref{eq-omega}, which is an analytic equation; as a result, using the analytic version of the implicit function theorem mentioned in Chow and Hale (1982) \cite{chow1982bifurcation}, we have that $\eta(\tau) + i \omega(\tau)$ is well-defined and analytic in neighborhood of $\tau = \tau_0$

To establish the Hopf bifurcation at $\tau = \tau_0$, we need to show that 
\begin{equation}
    \frac{d\Re(\lambda(\tau))}{d\tau} \|_{\tau = \tau_0} > 0.
\end{equation}

In order to prove this inequality, we first start with a lemma and its respective proof.

\begin{lemma} \label{lemma-largest_positive_root}
Suppose that $z_1, z_2, z_3$ are the roots of $h(z) = z^3 + m_1 z^2 + m_2 z + m_3 = 0 \quad (m_2 < 0)$, and $z_3 \in \mathbb{R}^+$ is the largest positive simple root, then
\begin{equation}
    \frac{dh(z)}{dz} \|_{z = z_3} > 0.
\end{equation}
\end{lemma}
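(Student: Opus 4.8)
The plan is to use only two elementary facts: a monic cubic diverges to $+\infty$, and $z_3$, being the largest \emph{positive} root, is in fact the largest \emph{real} root of $h$. First I would note that $\lim_{z \to +\infty} h(z) = +\infty$, and that since $z_3 > 0$, any real root of $h$ strictly exceeding $z_3$ would itself be positive, contradicting maximality of $z_3$ among the positive roots; hence $h$ has no zero in $(z_3, +\infty)$. Combining this with $h(z_3) = 0$, the continuity of $h$, and $h(z) \to +\infty$, the intermediate value theorem gives $h(z) > 0$ for every $z > z_3$. Passing to the one-sided derivative at $z_3$, for $z > z_3$ we have
\[
\frac{h(z) - h(z_3)}{z - z_3} = \frac{h(z)}{z - z_3} > 0,
\]
so letting $z \to z_3^+$ yields $h'(z_3) \ge 0$. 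Finally, because $z_3$ is a \emph{simple} root, $h'(z_3) \neq 0$, and therefore $h'(z_3) > 0$, which is exactly the assertion of the lemma.

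An alternative route, which makes the hypothesis $m_2 < 0$ explicit, is to argue via the critical points of $h$. Since $h'(z) = 3z^2 + 2m_1 z + m_2$, the product of the two roots of $h'$ equals $m_2/3 < 0$, so $h'$ has exactly one positive zero $\zeta_+$ and one negative zero $\zeta_-$; consequently $h$ is strictly increasing on $(-\infty, \zeta_-)$, strictly decreasing on $(\zeta_-, \zeta_+)$, and strictly increasing on $(\zeta_+, +\infty)$. If one had $z_3 < \zeta_+$, then $z_3 \in (\zeta_-, \zeta_+)$, where $h$ is strictly decreasing, so $h(z) < 0$ just to the right of $z_3$ and hence a further root lies in $(z_3, +\infty)$, a contradiction; if $z_3 = \zeta_+$, then $h(z_3) = h'(z_3) = 0$ makes $z_3$ a multiple root, contradicting simplicity. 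Hence $z_3 > \zeta_+$, and since $h' > 0$ throughout $(\zeta_+, +\infty)$ we again conclude $h'(z_3) > 0$.

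There is no serious obstacle here; the statement is essentially a sign-counting observation about cubics. The only point that requires a little care is the reduction from ``largest positive root'' to ``largest real root'', which is what licenses controlling the sign of $h$ on all of $(z_3, +\infty)$, together with the invocation of simplicity of $z_3$ to upgrade the weak inequality $h'(z_3) \ge 0$ to the strict one. I would present the first argument as the main proof, as it is shortest and self-contained.
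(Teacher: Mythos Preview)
Your primary argument is correct and is essentially the paper's own proof, which also combines $h(z)\to+\infty$ with the intermediate value theorem to produce a larger root and reach a contradiction; you simply run the same idea directly rather than by contradiction, and you handle the borderline case $h'(z_3)=0$ more carefully by invoking simplicity explicitly (the paper's step from ``$h'(z_3)\le 0$'' to ``$h(\bar z)<0$ for some $\bar z>z_3$'' tacitly assumes this). Your alternative via the critical points of $h'$, which actually uses the hypothesis $m_2<0$, is a genuinely different and equally valid route that the paper does not pursue.
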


\begin{proof}
We will proceed to prove the lemma above with contradiction.

Assume that the largest positive simple root $z_3$ of the equation $h(z) = 0$ and 
\begin{equation}
    \frac{dh(z)}{dz} \|_{z = z_3} \leq 0.
\end{equation}

As a result, there exists a $\bar{z}  \in \mathbb{R}^+, \bar{z} > z_3$ such that $h(\bar{z}) < 0$. 

According to the Intermediate Value Theorem, for any $h \in \mathbb{R}, h \in [h(\bar{z}, + \infty)$, there always exists $z_4 \in [\bar{z}, + \infty)$. Taking $h = 0$, we would have 
\begin{equation}
    \begin{cases}
        h(z_4) = 0 \\
        z_4 > \bar{z} > z_3
    \end{cases} ,
\end{equation}

which means that $z_4$ is the highest positive simple root of $h(z) = 0$, not $z_3$. This contradicts our original assumption.

In conclusion, if $z_3$ is the largest positive simple root,

\begin{equation}
    \frac{dh(z)}{dz} \|_{z = z_3} > 0.
\end{equation}

The proof is complete.
\end{proof}

From the equation \eqref{characeq-delay}, derivating both sides with respect to $\tau$, we obtain

\begin{equation}
    \begin{aligned}
    \left( 3 \lambda^2 + 2 a_1 \lambda + a_2 \right) \frac{d\lambda}{d\tau} = & \left[ -\tau \exp(-\lambda \tau) (-a_3 - a_4 \lambda) + \exp(-\lambda \tau) (-a_4) \right] \frac{d\lambda}{d\tau} \\
    & - \lambda \exp(-\lambda \tau) (-a_3 - a_4 \lambda).
    \end{aligned}
\end{equation}

This gives us

\begin{equation}
    \begin{aligned}
    \left( \frac{d\lambda}{d\tau} \right)^{-1} & = \frac{3\lambda^2 + 2a_1 \lambda + a_2 + \tau \exp(-\lambda \tau) (-a_3 - a_4 \lambda) - \exp(-\lambda \tau) (-a_4)}{- \lambda \exp(-\lambda \tau) (-a_3 - a_4 \lambda)} \\
    & = \frac{3 \lambda^2 + 2a_1 \lambda + a_2}{-\lambda \exp(-\lambda \tau) (-a_3 - a_4 \lambda)} + \frac{a_4}{\lambda (a_3 + a_4 \lambda)} - \frac{\tau}{\lambda} \\
    & = \frac{2 \lambda^3 + a_1 \lambda^2 - a_5}{-\lambda^2(\lambda^3 + a_1 \lambda^2 + a_2 \lambda + a_5)} + \frac{-a_3}{\lambda^2 (a_3 + a_4 \lambda)} - \frac{\tau}{\lambda}
    \end{aligned}
\end{equation}

Thus, 

\begin{equation}
    \begin{aligned}
    \Sign \left\{ \frac{d(\Re(\lambda))}{d \tau} \right\} & = \Sign \left\{ \Re \left( \frac{d \lambda}{d \tau} \right)^{-1} \right\} \\
    & = \Sign \left\{ \Re \left[ \frac{2 \lambda^3 + a_1 \lambda^2 - a_5}{-\lambda^2 ( \lambda^3 + a_1 \lambda^2 + a_2 \lambda + a_5)}\right]_{\lambda = i \omega_0} + \Re \left[ \frac{-a_3}{\lambda^2(a_3 + a_4 \lambda)}\right]_{\lambda = i \omega_0} \right\} \\
    & = \Sign \left\{\Re \left[ \frac{-2\omega_0^3i - a_1 \omega_0^2 - a_5}{\omega_0^2(-\omega_0^3 i - a_1 \omega_0^2 + a_2 \omega_0 i + a_5)} \right] + \Re \left[ \frac{-a_3}{-\omega_0^2 (a_3 + a_4 \omega_0 i)}\right]\right\} \\
    & = \Sign \left\{ \frac{2 \omega_0^6 + (a_1^2 - 2a_2) \omega_0^4 - a_5^2}{\omega_0^2[(a_2\omega_0 - \omega_0^3)^2 + (a_5 - a_1 \omega_0^2)^2]} +  \frac{a_3^2}{\omega_0^2(a_4^2 \omega_0^2 + a_3^2)}\right\} \\
    & = \Sign \left\{ \frac{3\omega_0^4 + 2(a_1^2 - 2a_2) \omega_0^2 + (a_2^2 - 2a_1a_5 - a_4^2)}{(a_2 \omega_0 - \omega_0^3)^2 + (a_5 - a_1 \omega_0^2)^2} \right\}
    \end{aligned}
\end{equation}

Since 

\begin{equation}
    h(z) = z^3 + m_1z^2 + m_2z + m_3,
\end{equation}

we would have

\begin{equation}
    \frac{dh(z)}{dz} = 3z^2 + 2m_1z + m_2 = 3z^2 + 2(a_1^2 - 2a_2)z + (a_2^2 - 2a_1a_5 - a_4^2)
\end{equation}

As we have assumed that $\omega_0$ is the largest positive simple root of the equation \eqref{eq-omega}, from Lemma \ref{lemma-largest_positive_root}, we get

\begin{equation}
    \frac{dh(z)}{dz} \|_{z = \omega_0^2} > 0.
\end{equation}

Hence, 

\begin{equation}
    \frac{3\omega_0^4 + 2(a_1^2 - 2a_2) \omega_0^2 + (a_2^2 - 2a_1a_5 - a_4^2)}{(a_2 \omega_0 - \omega_0^3)^2 + (a_5 - a_1 \omega_0^2)^2} = \frac{\frac{dh(z)}{dz} \|_{z = \omega_0^2}}{(a_2 \omega_0 - \omega_0^3)^2 + (a_5 - a_1 \omega_0^2)^2} > 0
\end{equation}

or 

\begin{equation}
    \Sign \left\{ \frac{d(\Re(\lambda))}{d \tau} \right\} = \Sign \left\{ \frac{3\omega_0^4 + 2(a_1^2 - 2a_2) \omega_0^2 + (a_2^2 - 2a_1a_5 - a_4^2)}{(a_2 \omega_0 - \omega_0^3)^2 + (a_5 - a_1 \omega_0^2)^2} \right\} = 1
\end{equation}

i.e

\begin{equation}
    \frac{d(\Re(\lambda))}{d \tau} > 0
\end{equation}

The Hopf bifurcation analysis above can be summarized in the following theorem.

\begin{theorem} \label{bifurcation}

Suppose that 

\begin{equation}
    a_1 > 0, \quad a_3 + a_5 > 0, \quad a_1(a_2 + a_4) - (a_3 + a_5) > 0
\end{equation}

and 

\begin{equation}
    R_0 > 1
\end{equation}

If 

\begin{equation}
    m_3 < 0 \quad \lor \quad m_3 \geq 0, \quad m_2 < 0,
\end{equation}

the infected steady state $\bar{E}$ of the delay model \eqref{system-delay} is asymptotically stable when $\tau < \tau_0$ and unstable when $\tau > \tau_0$, where

\begin{equation}
    \tau_0 = \frac{1}{\omega_0} \arccos \left( \frac{a_4 \omega_0^4 + (a_1a_3 - a_2a_4) \omega_0^2 - a_3a_5}{a_3^2 + a_4^2 \omega_0^2}\right)
\end{equation}

When $\tau = \tau_0$, a Hopf bifurcation occurs; that is, a family of periodic solutions bifurcates from $\bar{E}$ as $\tau$ passes through the critical value $\tau_0$.
\end{theorem}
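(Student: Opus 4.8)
The plan is to assemble the ingredients already in hand — the Routh–Hurwitz analysis at $\tau=0$, the characterization of purely imaginary roots through the cubic $h(z)$, and the transversality computation performed above — and then close with the Hopf bifurcation theorem of Marsden and McCracken.

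First I would observe that at $\tau=0$ the characteristic equation \eqref{characeq-delay} collapses to a cubic whose coefficients satisfy exactly the Routh–Hurwitz inequalities assumed in the hypothesis, so every root has negative real part and $\bar{E}$ is asymptotically stable. Because the roots of \eqref{characeq-delay} depend continuously on $\tau$ (and, for a transcendental characteristic equation of this form, only finitely many lie in any right half-plane $\Re\lambda \geq -\epsilon$, by an argument via Rouch\'e's theorem as in \cite{dieudonne1960modernanalysis}), stability can be lost only when a root first meets the imaginary axis. So I would search for the smallest $\tau>0$ for which $\lambda=i\omega$ $(\omega>0)$ solves \eqref{characeq-delay}. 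Substituting $\lambda=i\omega$ and separating real and imaginary parts gives \eqref{system-realim}; squaring and adding yields \eqref{eq-omega}, equivalently $h(z)=0$ with $z=\omega^2$. Under either hypothesis $m_3<0$, or $m_3\geq 0$ together with $m_2<0$, the cubic $h$ has a positive real root (as noted immediately after Theorem \ref{positive-equi-asymptotically-stable-delay}); I take $\omega_0$ with $\omega_0^2$ equal to the largest positive simple root of $h$, so that $\pm i\omega_0$ is a purely imaginary root pair, and solving \eqref{system-realim} for $\tau$ produces the family $\tau_j$, with $\tau_0$ (the $j=0$ term displayed in the statement) the least admissible crossing value.

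Next I would invoke the transversality computation already carried out: implicit differentiation of \eqref{characeq-delay} in $\tau$, evaluation at $\lambda=i\omega_0$, and Lemma \ref{lemma-largest_positive_root} (which applies precisely because $\omega_0^2$ is the \emph{largest} positive simple root of $h$) give $\Sign\{\frac{d}{d\tau}\Re\lambda\}|_{\tau=\tau_0}=1>0$. Therefore on $[0,\tau_0)$ no root has yet reached the imaginary axis, so all roots keep negative real part and $\bar{E}$ remains asymptotically stable; at $\tau=\tau_0$ the conjugate pair $\pm i\omega_0$ crosses transversally into the open right half-plane, so for $\tau$ slightly above $\tau_0$ there is an eigenvalue with positive real part and $\bar{E}$ is unstable. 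Since $\lambda(\tau)$ is analytic near $\tau_0$, the crossing pair is simple, and the transversality inequality holds, the Hopf bifurcation theorem of Marsden and McCracken \cite{marsden1976hopf} applies and yields a one-parameter family of periodic orbits bifurcating from $\bar{E}$ as $\tau$ passes through $\tau_0$.

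The step I expect to be the main obstacle is the bookkeeping that makes $\tau_0$ genuinely the \emph{first} crossing: one must check that, ranging over every positive root of $h$ and every branch of $\arccos$, no smaller positive $\tau$ produces a purely imaginary root, and one must take $\omega_0$ to be the largest root of $h$ precisely so that Lemma \ref{lemma-largest_positive_root} returns the sign needed for the monotone stability switch — a different choice of root would leave open the possibility of an eigenvalue re-entering the left half-plane and would break the argument that $\bar{E}$ stays stable throughout $[0,\tau_0)$.
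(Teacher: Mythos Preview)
Your proposal is correct and follows essentially the same route as the paper: the paper's ``proof'' is precisely the preceding analysis that it summarizes into this theorem --- Routh--Hurwitz stability at $\tau=0$, the reduction of purely imaginary roots to the cubic $h(z)$ via \eqref{system-realim} and \eqref{eq-omega}, the existence of a positive root under the hypotheses on $m_2,m_3$, the transversality computation using implicit differentiation of \eqref{characeq-delay} together with Lemma~\ref{lemma-largest_positive_root}, and the appeal to the Hopf theorem of Marsden and McCracken. Your closing caveat about verifying that $\tau_0$ is genuinely the \emph{first} crossing is well placed; the paper does not address this point any more carefully than you do.
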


\section{Numerical simulation}

After providing all the analytical tools and qualitatively analysing the system for patterns on its dynamics, in this section ,we will perform some numerical analysis on the model to verify the previous results. 

\subsection{Simulation tools}

The numerical simulation is conducted on the programming language \texttt{Julia} through the package \texttt{DifferentialEquation.jl}, A Performant and Feature-Rich Ecosystem for Solving Differential Equations in Julia by Rackauckas and Nie (2017) \cite{rackauckas2017juliade}.

In order to avoid any stiffness in the ODE/DDE models, the algorithm for the Method of Steps in \texttt{Julia} is set to \texttt{Rosenbrock23}, which is the same as the classic ODE solver \texttt{ode23s} in \texttt{MATLAB}.

For the complete version of the \texttt{Julia} notebooks for simulation, please refer to the \texttt{Github} repository at \url{https://github.com/hoanganhngo610/DDE-HIV-NGO2020etal.}

\subsection{Simulation results}

\subsubsection{Simulation of the ODE model}

\begin{table}[H]
\centering
    \begin{tabular}{  l  l  l  } 
\hline
Parameters and Variables & & Values \\
\hline
\textit{Dependent variables} & & \\
$T$ & Uninfected CD4\textsuperscript{$+$} T-cell population size & $250$ mm\textsuperscript{$-3$}\\
$I$ & Infected CD4\textsuperscript{$+$} T-cell density & $50$ mm\textsuperscript{$-3$} \\
$V$ & Initial density of HIV RNA & $160$ mm\textsuperscript{$-3$} \\
\textit{Parameters and Constants} & & \\
$s$ & Source term for uninfected CD4\textsuperscript{$+$} T-cells & $5$ day\textsuperscript{$-1$} mm\textsuperscript{$-3$} \\
$d$ & Natural death rate of CD4\textsuperscript{$+$} T-cells & 0.01 day\textsuperscript{-1} \\
$a$ & Growth rate of CD4\textsuperscript{$+$} T-cell population & $0.8$ day\textsuperscript{$-1$} \\
$T_{\max}$ & Maximal population level of CD4\textsuperscript{$+$} T-cells & $1500$ mm\textsuperscript{$-3$}\\ 
$\beta$ & Rate CD4\textsuperscript{$+$} T-cells became infected with virus & $2.4 \times 10^{-4}$ mm\textsuperscript{$-3$} \\
$\alpha$ & Saturated mass-action term & $0.001$ \\
$\rho$ & Rate of cure & $0.01$ day\textsuperscript{$-1$} \\
$\delta$ & Blanket death rate of infected CD4\textsuperscript{$+$} T-cells & $0.3$ day\textsuperscript{$-1$}\\
$q$ & Reproduction rate of the infected CD4\textsuperscript{$+$} T-cells & $500$ mm\textsuperscript{$-3$} day\textsuperscript{$-1$} \\
$c$ & Death rate of free virus & $8$ day\textsuperscript{$-1$} \\ 
\hline
\end{tabular}
\caption{\label{tab:table_1} Preliminary values of variables and parameters for viral spread.}
\end{table}

\begin{table}[H]
\centering
    \begin{tabular}{  l  l  l  l  l } 
\hline
Parameters & Original scenario & Scenario \#2 & Scenario \#3 & Scenario \#4 \\
$s$ & $5$ & $-$ & $-$ & $-$\\
$d$ & $0.01$ & $-$ & $-$ & $-$ \\
$a$ & $0.8$ & $8$ & $-$ & $-$ \\
$T_{\max}$ & $1500$ & $-$ & $-$ & $-$ \\
$\beta$ & $2.4 \times 10^{-4}$ & $-$ & $0.0024$ & $0.0024$ \\
$\alpha$ & $0.001$ & $0.0001$ & $0.000001$ & $0.000001$ \\
$\rho$ & $0.01$ & $0.01$ & $-$ & $-$ \\
$\delta$ & $0.3$ & $5$ & $-$ & $-$ \\
$q$ & $500$ & $-$ & $2.5$ & $2.5$ \\
$c$ & $8$ & $1.3$ & $3$ & $1.3$ \\

\hline

\end{tabular}
\caption{\label{tab:table_2} Values of parameters for viral spread in different scenarios.}
\end{table}

Within the range of parameters that are proven to be realistic in medical research, we investigate the behavior of the model within 4 different scenarios.

\begin{itemize}
    \item \textbf{The original scenario:} In this scenario, the condition $1$, $2$ and $3$ in Theorem \ref{globally-asymptotically-stable} are satisfied. This means that, the positive equilibrium of the system \eqref{system} is globally asymptotically stable.
    
    \begin{figure}[H]
    \centering
    \caption{The ODE model is locally asymptotically stable with parameters in the \textbf{original scenario}}
    \includegraphics[scale=0.05]{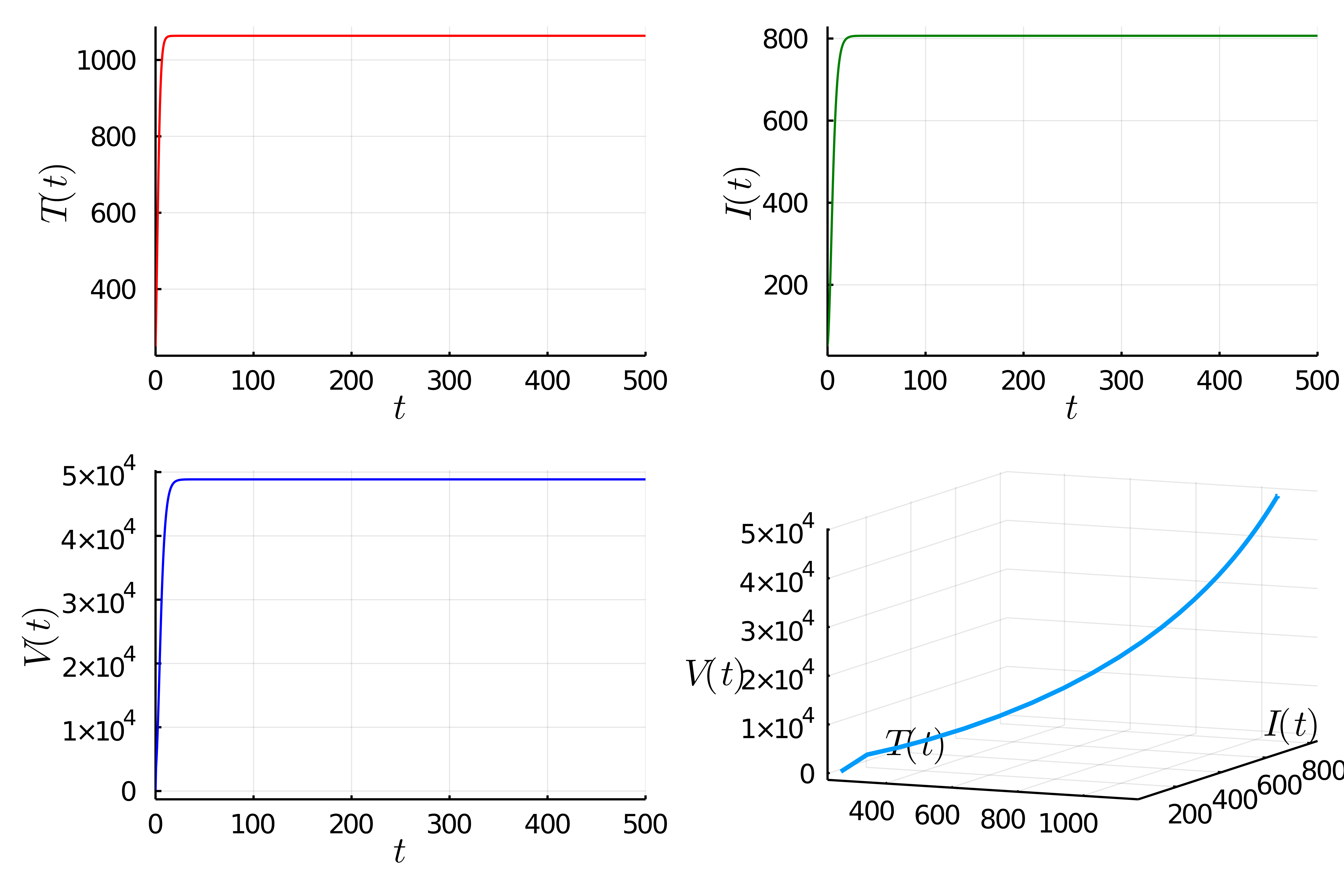}
    \label{fig:fig1}
    \end{figure}
    
    \item \textbf{Scenario \#2:} In this scenario, the conditions $1$, $2$ and $4$ in Theorem \ref{globally-asymptotically-stable} are satisfied. This means that, the positive equilibrium of the system \eqref{system} is also globally asymptotically stable.
    
    \begin{figure}[H]
    \centering
    \caption{The ODE model is locally asymptotically stable with parameters in \textbf{Scenario \#2}}
    \includegraphics[scale=0.05]{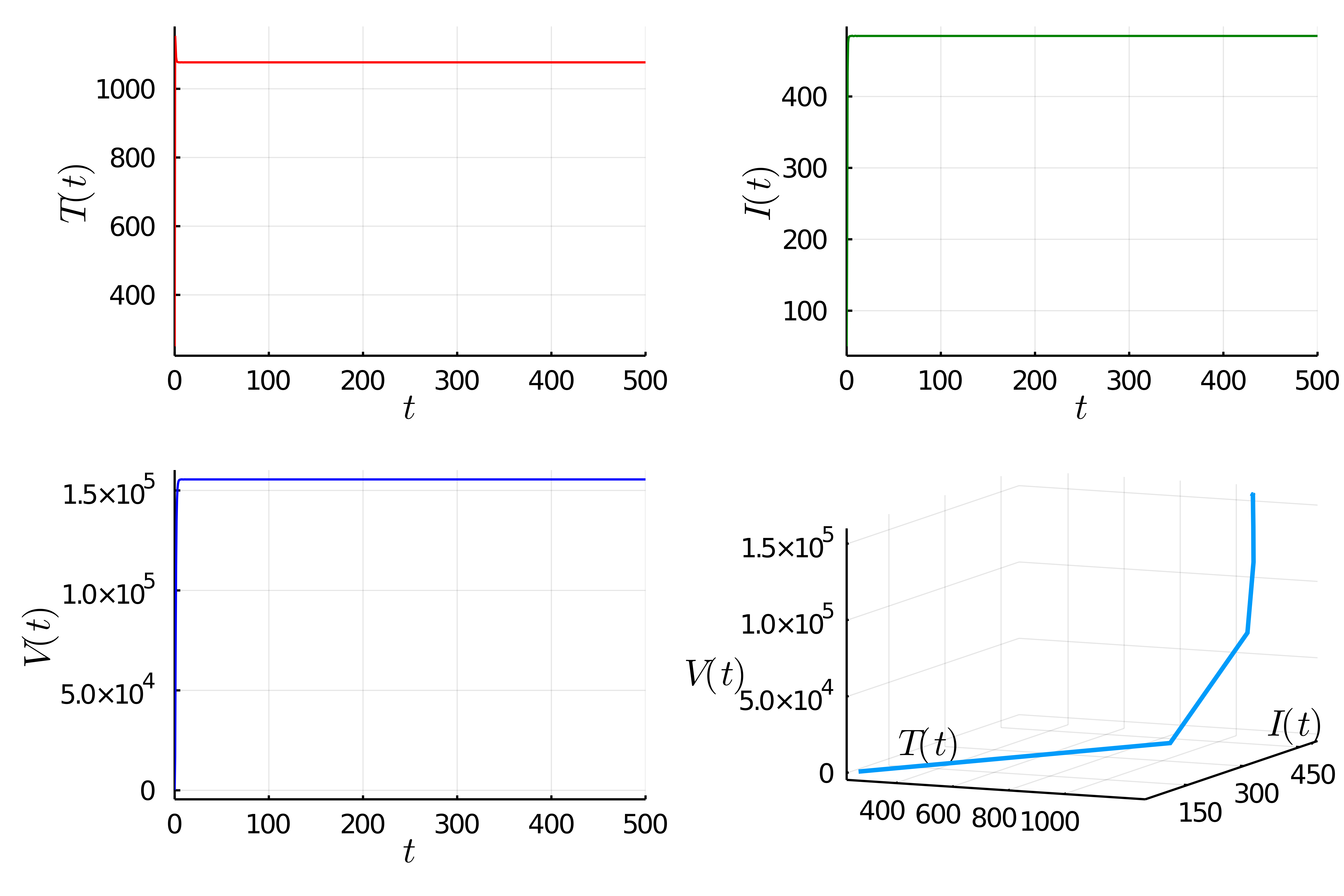}
    \label{fig:fig2}
    \end{figure}
    
    \item \textbf{Scenario \#3:} In this scenario, the conditions $1$ and $2$ of Theorem \ref{asymptotically-stable} is satisfied. This means that, the positive equilibrium of the system \eqref{system} is locally asymptotically stable.
    
    \begin{figure}[H]
    \centering
    \caption{The ODE model is locally asymptotically stable with parameters in \textbf{Scenario \#3}}
    \includegraphics[scale=0.05]{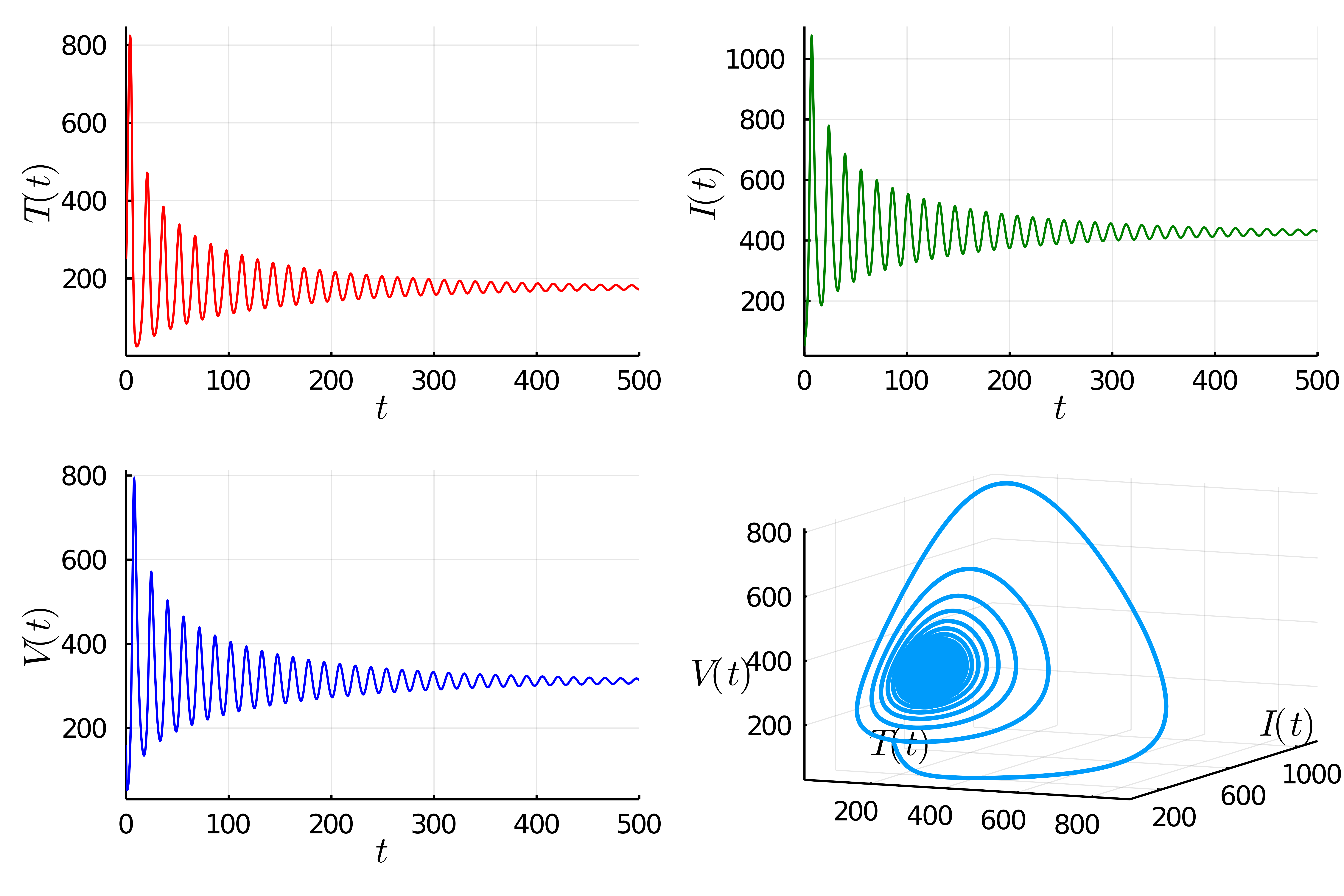}
    \label{fig:fig3}
    \end{figure}
    
    \item \textbf{Scenario \#4:} In this scenario, the conditions $1$ and $2$ of Theorem \ref{orbitally-asymptotically-stable} is satisfied. This means that, the positive equilibrium of the system \eqref{system} is orbitally asymptotically stable.
    
    \begin{figure}[H]
    \centering
    \caption{The ODE model is orbitally asymptotically stable with parameters in \textbf{Scenario \#4}}
    \includegraphics[scale=0.05]{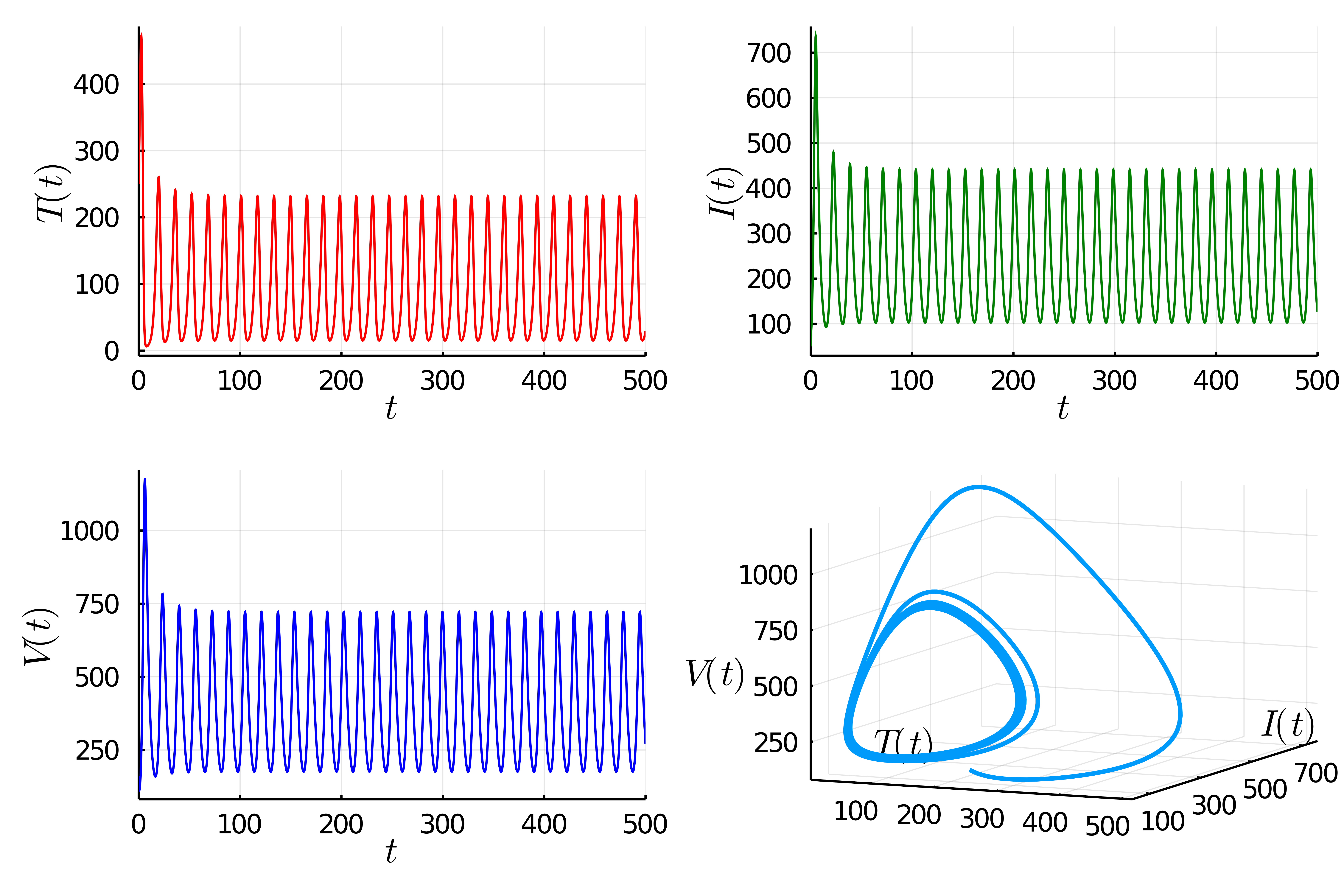}
    \label{fig:fig4}
    \end{figure}
    
\end{itemize}

\subsubsection{Simulation of the DDE model}

\begin{table}[H]
\centering
    \begin{tabular}{  l  l  l  l  l l } 
\hline
Parameters (DDE) & Original scenario (ODE) & Scenario \#1 & Scenario \#2 & Scenario \#3 & Scenario \#4 \\
$s$ & $5$ & $-$ & $-$ & $-$ & $-$\\
$d$ & $0.01$ & $-$ & $-$ & $-$ & $-$ \\
$a$ & $0.8$ & $-$ & $5$ & $5$ & $5$ \\
$T_{\max}$ & $1500$ & $-$ & $-$ & $-$ & $-$ \\
$\beta$ & $2.4 \times 10^{-4}$ & $-$ & $-$ & $-$ & $-$\\
$\alpha$ & $0.001$ & $0.000001$ & $0.000001$ & $0.000001$ & $0.000001$ \\
$\rho$ & $0.01$ & $-$ & $-$ & $-$ & $0.3$ \\
$\delta$ & $0.3$ & $-$ & $-$ & $-$ & $-$ \\
$q$ & $500$ & $-$ & $-$ & $-$ & $-$ \\
$c$ & $8$ & $-$ & $-$ & $-$ & $-$ \\
$\tau$ & N/A & $0.4$ & $10$ & $5$ & $5$ \\
\hline

\end{tabular}
\caption{\label{tab:table_3} Values of parameters for viral spread in different scenarios.}
\end{table}

First of all, instead of keeping $\alpha = 0.001$, we modify this parameter into $\alpha = 0.000001$ so that we can observe different behaviors while modifying other parameters.

\begin{itemize}
    \item \textbf{Scenario \#1 (DDE):} When $\tau = 0.4$ and all other variables are kept the same as the original scenario in the ODE setting (apart from $\alpha$), $T(t)$, $I(t)$ and $V(t)$ all converges to their positive equilibrium. We say that, in this setting, the positive equilibrium $\bar{E}$ is globally asymptotically stable.
    
    \begin{figure}[H]
    \centering
    \caption{The DDE model is globally asymptotically stable with parameters in \textbf{Scenario \#1}, with the delay term $\tau = 0.4$.}
    \includegraphics[scale=0.05]{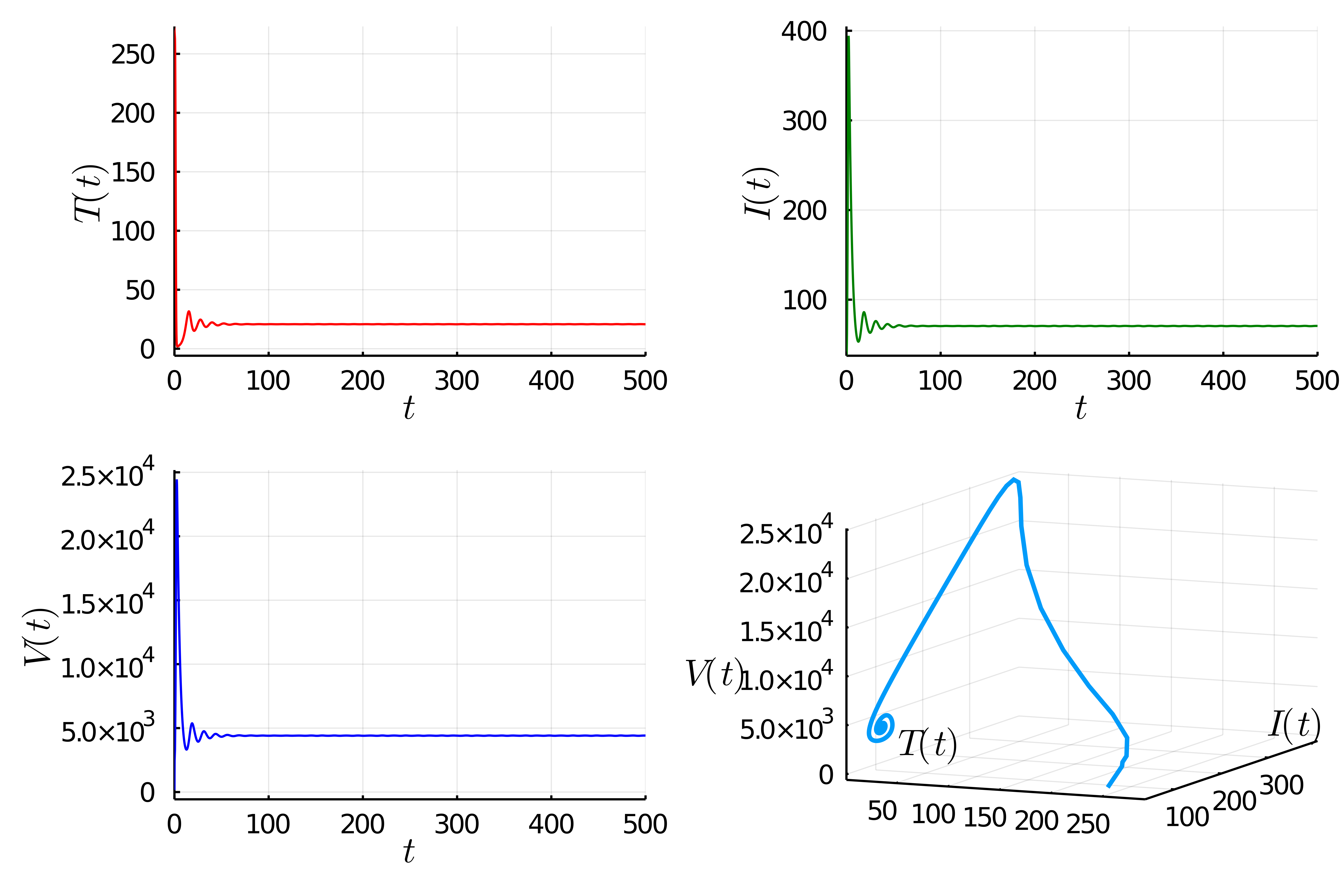}
    \label{fig:fig1-DDE}
    \end{figure}
    
    \item \textbf{Scenario \#2 (DDE):} When we modify $a = 5$ and $\tau = 10$, the parameters, in this setting, satisfy the conditions of Theorem \ref{bifurcation}. This means that, the positive equilibrium $\bar{E}$ is orbitally asymptotically stable, or in other words, there exists a positive periodic solution for all the components of the system.
    
    \begin{figure}[H]
    \centering
    \caption{The DDE model is globally asymptotically stable with parameters in \textbf{Scenario \#2}, with $a = 5$ and the delay term $\tau = 10$.}
    \includegraphics[scale=0.05]{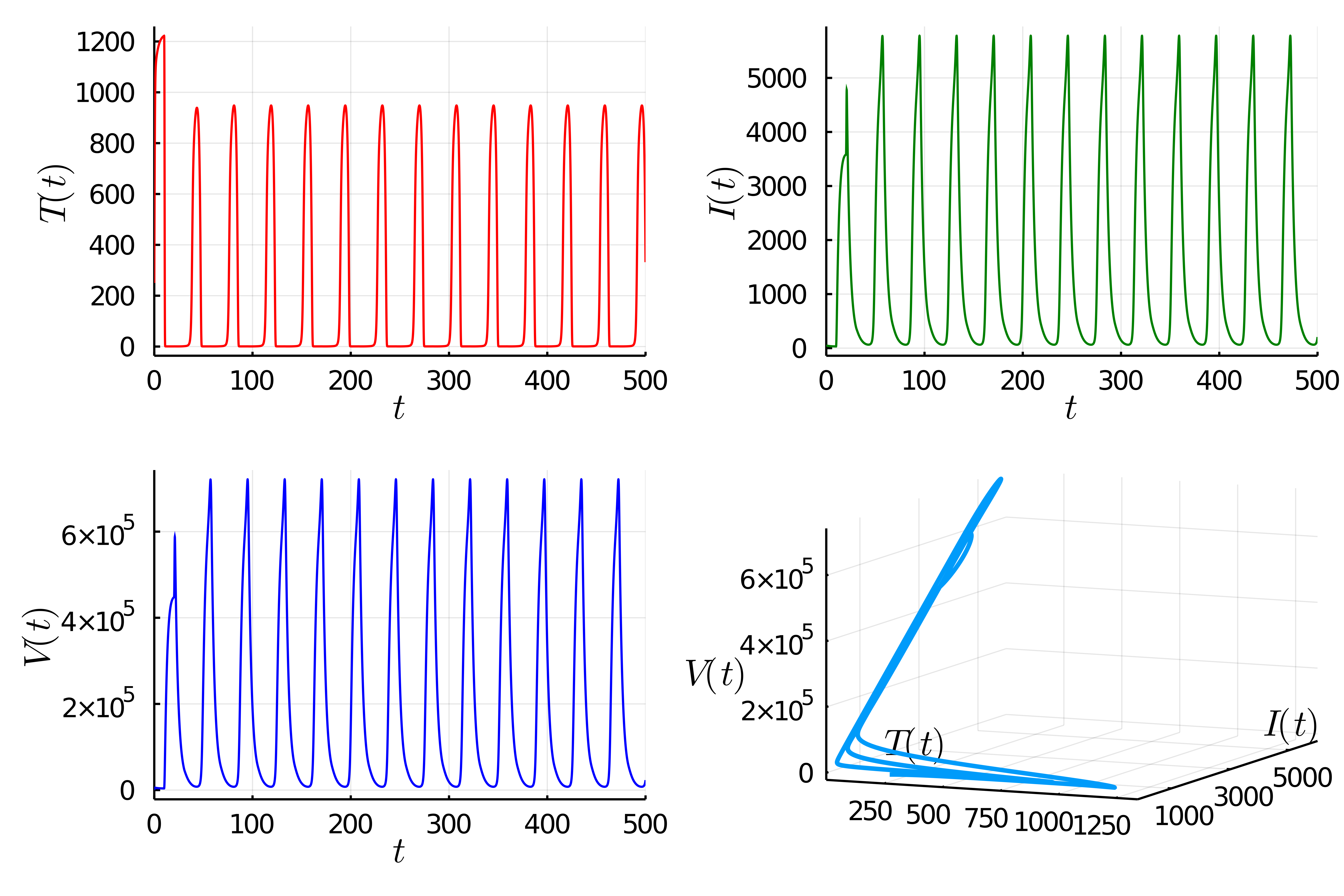}
    \label{fig:fig2-DDE}
    \end{figure}
    
     \item From the last two scenarios, \textbf{Scenario \#3 (DDE)} and \textbf{Scenario \#4 (DDE)}, we can draw a conclusion that the solution of the system would return to stability when the cure rate $\rho$ is increased. For example, if we select $\rho = 0.3$ instead of $\rho = 0.01$ with all other parameters kept identical, the system would admit a lower global asymptotic stability. We can conclude that $\rho$ is an important parameter in the sense that increasing it helps us control the disease.
    
    \begin{figure}[H]
    \centering
    \caption{The DDE model is globally asymptotically stable with parameters in \textbf{Scenario \#3} and \textbf{Scenario \#4}, with $\rho = 0.01$ and $\rho = 0.3$, respectively. These graphs show that the cure rate is an important parameter in controlling the disease.}
    \includegraphics[scale=0.05]{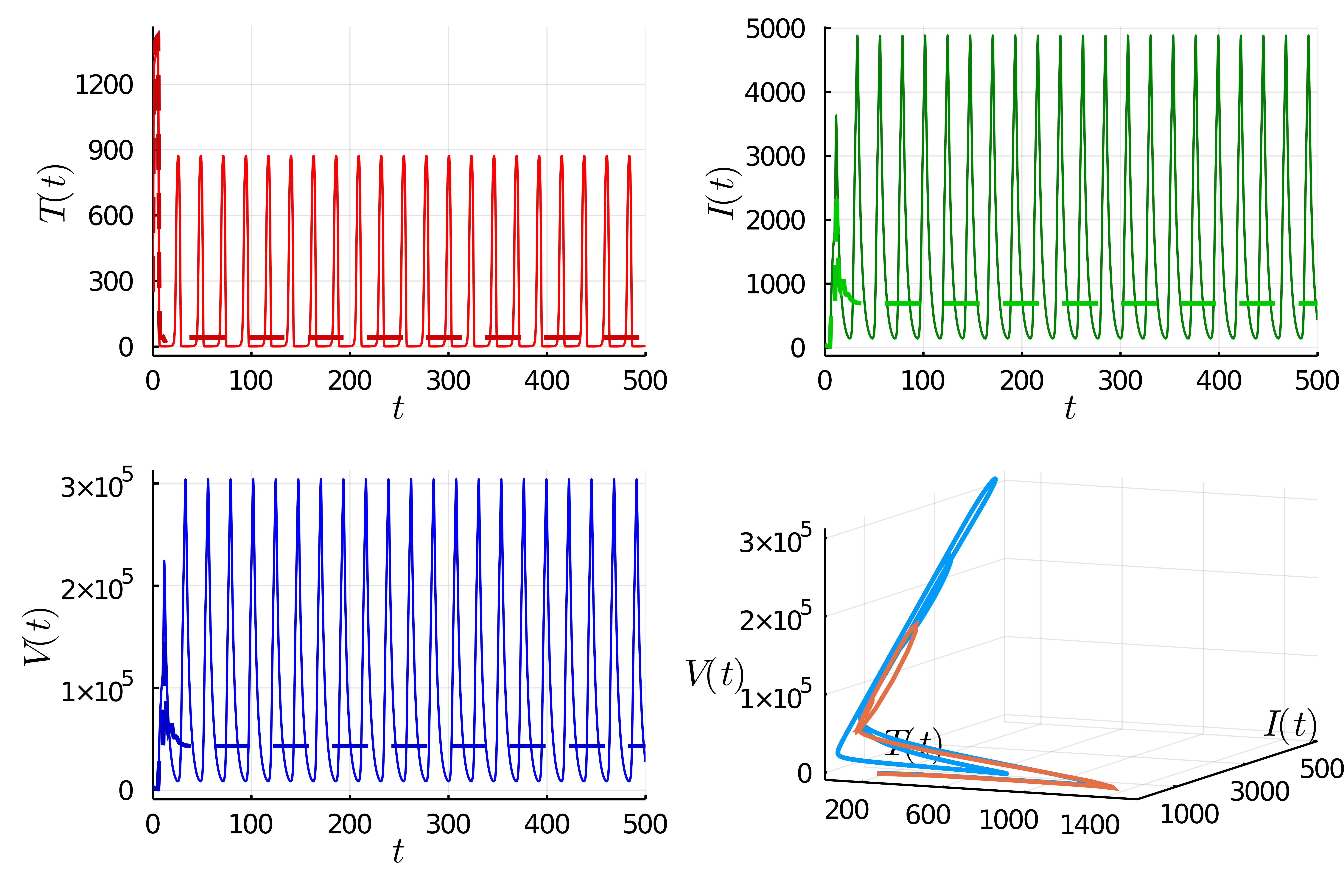}
    \label{fig:fig3-DDE}
    \end{figure}
    
\end{itemize}

\appendix
\section{List of macros for formatting text, figures and tables}

\begin{theorem}[Gronwall, 1919] Let $I$ denote an interval of the real line of the form $[a, \inf)$ or $[a,b]$lr $[a,b)$ with $a < b$ Let $\beta$ and $u$ be real-valued continuous functions defined on $I$. If $u$ is a differentiable in the interior $I^0$ of $I$ (the interval $I$ without the end points $a$ and possibly $b$) and satisfies the differential inequality
\begin{equation}
    u'(t) \leq \beta(t) u(t),  t \in I^0
\end{equation}
then $u$ is bounded by the solution of the corresponding differential equation $\nu'(t) = \beta(t) \nu(t)$:
\begin{equation}
    u(t) \leq u(a) \exp \left( \int_a^t \beta(s) ds \right)
\end{equation}
\end{theorem}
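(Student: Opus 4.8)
The plan is to use the classical integrating-factor argument, which matches the differential form of the hypothesis directly. First I would introduce the auxiliary function
$v(t) = \exp\left(-\int_a^t \beta(s)\,ds\right)$,
which is well defined and strictly positive on all of $I$ since $\beta$ is continuous there, and which is $C^1$ on $I$ with $v'(t) = -\beta(t)v(t)$ on $I^0$ and $v(a) = 1$ by the fundamental theorem of calculus.

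Next I would set $w(t) = v(t)u(t)$ and differentiate on $I^0$. By the product rule, $w'(t) = v'(t)u(t) + v(t)u'(t) = v(t)\left(u'(t) - \beta(t)u(t)\right)$. Since $v(t) > 0$ and the differential inequality gives $u'(t) - \beta(t)u(t) \le 0$ on $I^0$, we obtain $w'(t) \le 0$ there. As $w$ is continuous on $I$ and differentiable with non-positive derivative on the interior, it is non-increasing on $I$; in particular $w(t) \le w(a) = v(a)u(a) = u(a)$ for every $t \in I$.

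Finally I would divide through by $v(t) > 0$ to conclude
$u(t) \le \dfrac{u(a)}{v(t)} = u(a)\exp\left(\int_a^t \beta(s)\,ds\right)$,
which is exactly the asserted bound.

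I do not expect a real obstacle here: the statement is the classical Grönwall inequality and the proof is elementary. The only points needing a word of care are (a) that $v$ is strictly positive and continuously differentiable on the whole interval, which is immediate from continuity of $\beta$, and (b) that the sign of $w'$ on the open interior $I^0$ already forces $w$ to be non-increasing on the closed interval, which follows from the mean value theorem together with continuity of $w$ at the endpoint $a$. An alternative would be the integral-form proof — integrating the inequality and iterating — but the integrating-factor route is the shortest and fits the hypothesis as stated.
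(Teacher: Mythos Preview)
Your argument is correct: the integrating-factor substitution $w(t)=u(t)\exp\!\left(-\int_a^t\beta(s)\,ds\right)$ yields $w'\le 0$ on $I^0$, and continuity at the endpoints then gives $w(t)\le w(a)=u(a)$, which is the claim. The paper itself does not supply a proof of this theorem; it is simply quoted in the appendix as the classical Gronwall inequality, so there is no alternative approach to compare against.
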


\begin{theorem}[Lyapunov's stability]
Let a function $V(\mathbf{X})$ be continuously differentiable in a neighbourhood $U$ of the origin. The function $V(\mathbf{X})$ is called the Lyapunov function for an autonomous system

\begin{equation}
    \mathbf{X}' = f(\mathbf{X})
\end{equation}

if the following conditions are met:
\begin{enumerate}
    \item $V(\mathbf{X}) > 0$ for all $\mathbf{X} \in U \setminus \{ 0 \}$;
    \item V(0) = 0;
    \item $\frac{dV}{dt} \leq 0$ for all $\mathbf{X} \in U$.
\end{enumerate}
Then, if in a neighborhood $U$ of the zero solution $\mathbf{X} = 0$ of an autonomous system there is a Lyapunov function $V(\mathbf{X})$ with a negative definite derivative $\frac{dV}{dt}$ for all $\mathbf{X} \in U \setminus \{ 0 \}$, then the equilibrium point $\mathbf{X} = 0$ of the system is asymptotically stable. 
\end{theorem}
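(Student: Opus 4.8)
The plan is to establish the two assertions in order: first plain (Lyapunov) stability of the origin, and then, invoking the extra hypothesis that $\frac{dV}{dt}$ is negative definite on $U \setminus \{0\}$, upgrade to asymptotic stability. Throughout I would fix a closed ball $\bar{B}_\epsilon = \{ \mathbf{X} : \norm{\mathbf{X}} \leq \epsilon \}$ small enough to lie inside $U$ and work with a solution $\mathbf{X}(t)$ of $\mathbf{X}' = f(\mathbf{X})$ with initial data near the origin. The observation used repeatedly is that along such a solution $\frac{d}{dt} V(\mathbf{X}(t)) = \nabla V(\mathbf{X}(t)) \cdot f(\mathbf{X}(t)) \leq 0$, so $t \mapsto V(\mathbf{X}(t))$ is nonincreasing as long as the trajectory remains in $U$.

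For stability, given $\epsilon > 0$ I would set $m = \min_{\norm{\mathbf{X}} = \epsilon} V(\mathbf{X})$; this minimum is attained and strictly positive because the sphere $\norm{\mathbf{X}} = \epsilon$ is compact and $V$ is continuous and strictly positive on $U \setminus \{0\}$. Since $V$ is continuous with $V(0) = 0$, there is a $\delta \in (0,\epsilon)$ with $V(\mathbf{X}) < m$ whenever $\norm{\mathbf{X}} < \delta$. If $\norm{\mathbf{X}(0)} < \delta$, then $V(\mathbf{X}(t))$ starts below $m$ and cannot increase, so the trajectory can never reach $\norm{\mathbf{X}} = \epsilon$; being continuous, it stays in $B_\epsilon$ for all $t \geq 0$. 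This yields stability, and incidentally global existence of the solution for $t \geq 0$, since it is confined to a compact subset of $U$.

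For asymptotic stability I would argue by contradiction. With initial data in $B_\delta$ as above, $V(\mathbf{X}(t))$ is nonincreasing and bounded below by $0$, hence converges to some $c \geq 0$. Suppose $c > 0$. By continuity of $V$ at the origin choose $r > 0$ with $V(\mathbf{X}) < c$ whenever $\norm{\mathbf{X}} \leq r$; since $V(\mathbf{X}(t)) \geq c$ for all $t$, the trajectory remains in the compact annulus $A = \{ r \leq \norm{\mathbf{X}} \leq \epsilon \}$. On $A$ the negative-definite function $\frac{dV}{dt}$ attains a maximum $-k$ with $k > 0$, so $\frac{d}{dt} V(\mathbf{X}(t)) \leq -k$ for all $t \geq 0$, whence $V(\mathbf{X}(t)) \leq V(\mathbf{X}(0)) - kt \to -\infty$, contradicting $V \geq 0$. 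Therefore $c = 0$, and since $V$ is positive definite, $V(\mathbf{X}(t)) \to 0$ forces $\mathbf{X}(t) \to 0$. Together with the stability already shown, the origin is asymptotically stable.

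The compactness-and-continuity estimates are routine; the one step deserving care, and the only real obstacle, is the passage from $V(\mathbf{X}(t)) \to c > 0$ to the confinement of the trajectory in a compact annulus bounded away from the origin. This is exactly where the strict positive definiteness of $V$ near $0$ (not merely $V \geq 0$) is essential, both to produce the inner radius $r$ and, at the end, to deduce $\mathbf{X}(t) \to 0$ from $V(\mathbf{X}(t)) \to 0$.
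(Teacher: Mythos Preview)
Your argument is the standard and correct proof of Lyapunov's asymptotic stability theorem: the sphere-minimum trick for stability, followed by the annulus-confinement contradiction for convergence. There is nothing to fault in it.

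However, there is nothing to compare it against: the paper does not prove this theorem. It appears in the appendix as one of several classical results (alongside Gronwall's inequality, the Perron--Frobenius theorem, the implicit function theorem, and the Poincar\'e--Bendixson theorem) that are merely \emph{stated} for reference and invoked elsewhere in the analysis. So your proposal is not a different route from the paper's---it is the only route on the table, and it is the textbook one.
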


\begin{theorem}[Perron - Frobenius theorem]  \cite{gantmacher1959theorymatrices} Let $A$ be a irreducible Metzler matrix (A Metzler matrix is a matrix whose all of its off-diagonal elements are non-negative). Then, $\lambda_M$, the eigenvalue of $A$ of largest real part is real, and the elements of its associated eigenvector $v_M$ are positive. Moreover, any eigenvector of $A$ with non-negative elements belongs the the span of $v_M$.
\end{theorem}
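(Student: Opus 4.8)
The plan is to reduce the statement to the classical Perron--Frobenius theorem for nonnegative irreducible matrices by a spectral shift, and then to derive that case from a Collatz--Wielandt argument; alternatively one may simply invoke Gantmacher \cite{gantmacher1959theorymatrices}, but it is worth recording the structure of the argument. First I would remove the sign restriction on the diagonal: since $A$ is Metzler, choose a scalar $s$ with $s > \max_i |A_{ii}|$ and set $B = A + sI$. Then $B \ge 0$ entrywise, $B$ has exactly the same off-diagonal zero pattern as $A$ and hence the same (strongly connected) associated directed graph, so $B$ is irreducible; and $\mu$ is an eigenvalue of $A$ with eigenvector $v$ if and only if $\mu + s$ is an eigenvalue of $B$ with the same eigenvector, the shift preserving the ordering of real parts. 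Thus it suffices to prove, for a nonnegative irreducible $n \times n$ matrix $B$, that there is a real eigenvalue $r$ equal to its spectral radius $\rho(B)$ with a strictly positive eigenvector, and that $r$ is the only eigenvalue admitting a nonnegative eigenvector up to scalar multiples; the eigenvalue of largest real part is then automatically $r$, since any $\mu$ with $\Re\mu > r$ would satisfy $|\mu| > r = \rho(B)$.

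Next I would produce the positive eigenvector through the Collatz--Wielandt functional
\[
 r \;=\; \sup_{x \ge 0,\ x \ne 0}\ \min_{i:\,x_i>0}\ \frac{(Bx)_i}{x_i}.
\]
This functional is invariant under positive scaling, so $x$ may be restricted to the unit simplex; it is only lower semicontinuous there, but replacing $x$ by $y = (I+B)^{n-1}x$, which is strictly positive because $B$ is irreducible, does not decrease its value, so the supremum is attained at some $x^\star > 0$ where the functional is continuous. A one-coordinate perturbation argument then forces $Bx^\star = r x^\star$: if $(Bx^\star)_j > r x^\star_j$ for some $j$, a small increase of $x^\star_j$ would strictly raise the functional, a contradiction. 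This gives the positive eigenvector $v_M = x^\star$. Moreover, for any eigenvalue $\mu$ of $B$ with eigenvector $z$, taking moduli componentwise gives $|\mu|\,|z_i| \le (B|z|)_i$, whence $|\mu| \le r$ by applying the supremum to $x = |z| \ge 0$; since $\Re(\mu) \le |\mu| \le r$ and $r$ itself is an eigenvalue, $r$ is real, equals $\rho(B)$, and is the eigenvalue of largest real part. Undoing the shift, $\lambda_M = r - s$ is real with positive eigenvector $v_M$.

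For the last assertion, let $w \ge 0$, $w \ne 0$ satisfy $Bw = \mu w$; then $\mu \in \mathbb{R}$ because $w$ is a nonzero real vector, and $(I+B)^{n-1}w = (1+\mu)^{n-1}w > 0$ forces $w > 0$. Taking a positive left Perron eigenvector $u$ of $B$, obtained by applying the previous step to $B^{T}$ (whose graph is the reversed one, still strongly connected), we get $r\,u^{T}w = u^{T}Bw = \mu\,u^{T}w$ with $u^{T}w > 0$, hence $\mu = r$. Finally, with $t = \min_i w_i/(v_M)_i > 0$, the vector $w - t v_M \ge 0$ lies in the $r$-eigenspace and has a zero coordinate; were it nonzero, $(I+B)^{n-1}(w - t v_M)$ would be strictly positive, a contradiction, so $w = t v_M$. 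Transporting back to $A$, which has the same eigenvectors, yields the claim.

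\textbf{Main obstacle.} The only genuinely nontrivial step is the construction of the positive eigenvector in the second paragraph: because the Collatz--Wielandt functional is merely lower semicontinuous on the simplex, attainment of the supremum must be forced by smoothing $x$ into the strictly positive orthant via $(I+B)^{n-1}$, and this is precisely the point where irreducibility is indispensable. Everything else --- the diagonal shift, the spectral-radius comparison, and the uniqueness argument --- is routine bookkeeping.
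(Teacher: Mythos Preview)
The paper does not actually prove this theorem: it is listed in the appendix as a classical result quoted from Gantmacher \cite{gantmacher1959theorymatrices}, with no argument given. So there is no ``paper's proof'' to compare your proposal against; you have supplied a full proof where the authors simply cite one.

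Your argument is the standard and correct route: shift the Metzler matrix to a nonnegative irreducible matrix, apply the Collatz--Wielandt variational characterisation to produce the Perron root and a strictly positive eigenvector, and then use a left Perron vector of $B^{T}$ to pin down uniqueness. The reductions and the uniqueness paragraph are clean.

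One step deserves tightening. In the sentence ``if $(Bx^\star)_j > r x^\star_j$ for some $j$, a small increase of $x^\star_j$ would strictly raise the functional,'' the increase is only \emph{weak}: for indices $i\neq j$ with $B_{ij}=0$ the ratio $(Bx)_i/x_i$ is unchanged, so the minimum need not go up strictly. The clean fix is the one you already have in your toolkit: set $z = Bx^\star - r x^\star \ge 0$; if $z\neq 0$ then $(I+B)^{n-1}z>0$ by irreducibility, and with $y=(I+B)^{n-1}x^\star>0$ one gets $By - ry = (I+B)^{n-1}z>0$, so the functional at $y$ strictly exceeds $r$, contradicting maximality. Replacing the perturbation line by this two-line smoothing argument closes the only real gap. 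A minor omission in the last paragraph is that $\mu\ge 0$ (needed for $(1+\mu)^{n-1}>0$) follows immediately from $Bw=\mu w$ with $B\ge 0$, $w\ge 0$, $w\neq 0$; it is worth stating.
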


\begin{theorem}[Implicit Function Theorem (Chow and Hale, 1982)]
Suppose that 
\begin{itemize}
    \item $X, Y, Z$ are Banach spaces,
    \item $F: U \times V \to Z$ is continuously differentiable,
    \item $F(x_0, y_0) = 0$ and $D_xF(x_0, y_0)$ has a bounded inverse.
\end{itemize}

Then, there exists a neighborhood $U_1 \times V_1 \in U \times V$ of $(x_0, y_0)$ and a function $f:V_1 \to U_1, f(y_0) = x_0$ such that 

\begin{equation}
    F(x,y) = 0 \text{ for } (x,y) \in U_1 \times V_1 \text{ iff } x = f(y)
\end{equation}
If $F \in C^k(U \times V, Z), k \geq 1$ or analytic in a neighborhood $(x_0, y_0)$, then $f \in C^k(V_1, X)$ or is analytic in a neighborhood of $y_0$.

\end{theorem}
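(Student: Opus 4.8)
The plan is to prove the theorem by the classical reduction to the Banach contraction mapping principle, treating $y$ as a parameter. Write $A = D_x F(x_0, y_0) \in \mathcal{L}(X, Z)$, which by hypothesis has a bounded inverse $A^{-1} \in \mathcal{L}(Z, X)$. For each fixed $y$ near $y_0$ define the map
\begin{equation}
    T_y(x) = x - A^{-1} F(x, y),
\end{equation}
and observe that, since $A^{-1}$ is injective, $x$ is a fixed point of $T_y$ if and only if $F(x,y) = 0$. Thus solving the equation is equivalent to producing a fixed point of $T_y$, and the whole argument reduces to showing that $T_y$ is a uniform contraction on a small ball whose radius and the parameter neighbourhood are chosen appropriately.

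First I would establish the contraction estimate. Differentiating gives $D_x T_y(x) = I - A^{-1} D_x F(x,y) = A^{-1}\bigl(A - D_x F(x,y)\bigr)$. Since $F$ is $C^1$, the map $D_x F$ is continuous, so there exist $r, \delta > 0$ such that $\norm{D_x F(x,y) - A} \le 1/(2\norm{A^{-1}})$ whenever $\norm{x - x_0} \le r$ and $\norm{y - y_0} \le \delta$; hence $\norm{D_x T_y(x)} \le 1/2$ on that set, and the mean value inequality makes $T_y$ a contraction with constant $1/2$. Next I would verify the self-mapping property: since $F(x_0, y_0) = 0$ and $F$ is continuous in $y$, after possibly shrinking $\delta$ we obtain $\norm{T_y(x_0) - x_0} = \norm{A^{-1} F(x_0, y)} \le r/2$, which together with the contraction bound shows that $T_y$ maps the closed ball $\overline{B}(x_0, r)$ into itself. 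The Banach fixed point theorem then furnishes a unique fixed point, which I denote $f(y)$; uniqueness on the ball gives exactly the asserted equivalence $F(x,y) = 0 \iff x = f(y)$ on $U_1 \times V_1 := B(x_0, r) \times B(y_0, \delta)$, and $f(y_0) = x_0$ because $x_0$ solves the equation at $y_0$.

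With $f$ constructed, I would upgrade its regularity in two stages. Continuity of $f$ follows from the uniform contraction: subtracting the fixed-point identities for $y$ and $y'$ and using the contraction constant $1/2$ bounds $\norm{f(y) - f(y')}$ by a constant times $\norm{F(f(y'),y) - F(f(y'),y')}$, which is small by continuity of $F$. For differentiability I would use that the invertible operators form an open subset of $\mathcal{L}(X,Z)$ on which inversion is smooth; since $D_x F(f(y),y) \to A$ as $y \to y_0$, the operator $D_x F(f(y),y)$ remains invertible on a neighbourhood, and differentiating the identity $F(f(y),y) \equiv 0$ yields the candidate derivative $Df(y) = -[D_x F(f(y),y)]^{-1} D_y F(f(y),y)$. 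I would confirm that this candidate is the genuine Fréchet derivative by a direct first-order estimate, and then deduce the $C^k$ statement by induction: the right-hand side is a composition of $f$, the $C^{k-1}$ maps $D_xF$ and $D_yF$, and the smooth inversion map, so $Df \in C^{k-1}$ forces $f \in C^k$.

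The main obstacle is the analytic case. The fixed-point construction and the $C^k$ bootstrap are routine, but promoting $f$ to an analytic function requires more than the real-differentiable machinery: one either invokes the analytic inverse/implicit function theorem in Banach spaces, proved via complexification together with a Cauchy (majorant) estimate on the convergent power-series expansion of $F$, or adapts the contraction scheme to holomorphic maps and appeals to the fact that a locally bounded, locally uniform limit of holomorphic maps is again holomorphic. I would isolate this as the delicate step and defer to Chow and Hale for the analytic refinement, since the analytic conclusion is precisely what is needed earlier in the paper to guarantee that the eigenvalue branch $\eta(\tau) + i\omega(\tau)$ is analytic near $\tau = \tau_0$.
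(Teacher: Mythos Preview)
Your proposal is a correct and standard contraction-mapping proof of the Banach-space implicit function theorem, and it matches the argument given in Chow and Hale (1982). However, there is nothing in this paper to compare it against: the theorem appears only in the appendix as a quoted reference result, stated without proof and attributed to \cite{chow1982bifurcation}. The paper merely invokes the analytic version once, to justify that the eigenvalue branch $\eta(\tau)+i\omega(\tau)$ is analytic near $\tau=\tau_0$, and does not attempt to reprove it.
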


\begin{theorem}[Poincaré - Bendixson theorem] \cite{wiki:poincare-bendixson}

Given a differentiable real dynamical system defined on an open subset of the plane, every non-empty compact $\omega$-limit set of an orbit, which contains only finitely many fixed points, is either

\begin{itemize}
    \item a fixed point,
    \item a periodic orbit, or
    \item a connected set composed of a finite number of fixed points together with homoclinic and heteroclinic orbits connecting these.
\end{itemize}

Moreover, there is at most one orbit connecting different fixed points in the same direction. However, there could be countably many homoclinic orbits connecting one fixed point.

\end{theorem}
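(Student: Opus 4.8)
The statement is the classical planar Poincar\'e--Bendixson theorem, so the plan is to reproduce the standard argument, whose only genuinely two-dimensional ingredient is the Jordan curve theorem. First I would record the elementary structural facts about the $\omega$-limit set of an orbit with precompact forward trajectory: $\omega(x_0)$ is non-empty, compact, connected, and invariant under the flow. The proof then proceeds by a case analysis according to whether $\omega(x_0)$ contains a point that is not a fixed point and, if so, whether the orbit through such a point is periodic.

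The technical engine is the notion of a transversal (local section). At any regular point $p$ (i.e.\ $f(p) \neq 0$) the flow-box theorem provides a short segment $\Sigma$ through $p$ that the vector field crosses transversally, always in the same direction. Using the Jordan curve theorem one shows that if a single orbit meets $\Sigma$ at the successive points $x_1, x_2, x_3, \dots$ (ordered by time), then these points are monotone along $\Sigma$: the arc of orbit from $x_k$ to $x_{k+1}$ together with the subsegment of $\Sigma$ between them bounds a region that the flow can only enter or only leave, which determines on which side $x_{k+2}$ must lie. The immediate corollary is the crossing lemma --- \emph{any $\omega$-limit set meets a transversal $\Sigma$ in at most one point} --- since the points of $\gamma^{+}(x_0) \cap \Sigma$ form a monotone sequence and hence accumulate at only one point of $\Sigma$.

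Now the case analysis. Suppose $\omega(x_0)$ contains a regular point $p$; by invariance the whole orbit $\gamma(p)$ lies in $\omega(x_0)$, so $\omega(p) \subseteq \omega(x_0)$ is again non-empty and compact. If $\omega(p)$ contains a regular point $q$, take a transversal $\Sigma$ at $q$; the orbit $\gamma^{+}(p)$ meets $\Sigma$ infinitely often, all these intersection points lie in $\omega(x_0)$, and by the crossing lemma they must all coincide, forcing $\gamma(p)$ to be a periodic orbit $\Gamma$. A flow-box argument near $\Gamma$ then shows $\Gamma$ is both open and closed in $\omega(x_0)$, so connectedness gives $\omega(x_0) = \Gamma$. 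The remaining possibility is that every regular orbit contained in $\omega(x_0)$ has both its forward and backward limit sets consisting only of fixed points; since there are only finitely many fixed points and $\omega(x_0)$ is connected and compact, $\omega(x_0)$ is then a finite union of fixed points together with the regular orbits joining them, i.e.\ a homoclinic/heteroclinic graphic. If $\omega(x_0)$ contains no regular point at all, connectedness and finiteness force it to be a single fixed point. This exhausts the three alternatives.

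Finally, the ``moreover'' clauses. That there is at most one orbit joining two distinct fixed points in the same direction, and that only countably many homoclinic loops can be attached to a single fixed point, follow from the same transversal/Jordan-curve reasoning: two connecting orbits running the same way would, together with arcs near the two fixed points, bound a disc across whose boundary the flow direction is incompatible with the crossing behaviour on a transversal placed across one of them; and a transversal placed near a fixed point meets at most one point of each homoclinic loop, so at most countably many loops can occur. I expect the main obstacle to be exactly this Jordan-curve bookkeeping --- choosing the separating curves carefully and, in the presence of fixed points, ruling out more intricate recurrence --- which is precisely where planarity and the finiteness-of-fixed-points hypothesis are indispensable.
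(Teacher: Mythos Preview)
Your outline is a faithful sketch of the standard proof of the planar Poincar\'e--Bendixson theorem, but there is nothing in the paper to compare it against: the theorem appears only in the appendix as a quoted background result (with a citation), and the authors do not give, or attempt, any proof of it. It is used merely to justify the remark that three-dimensional competitive systems enjoy the Poincar\'e--Bendixson property. So your proposal is not wrong, but it answers a question the paper never poses; if your goal is to match the paper, simply citing the theorem is all that is required.
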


Next, we will give the definition of an additive compound matrix and consider the particular case when it's a square matrix \cite{song2007hivmodel}. A survey of properties of additive compound matrices, along with their connections to differential equations have been investigated in \cite{muldowney1990compoundmat, li1995globalstabilityseir} 

We will start with  the definition of the $k$-th exterior power (or multiplicative compound) of an $n \times m$ matrix.

\begin{definition} [Multiplicative compound of a matrix]
Let $A$ be an $n \times m$ matrix of real or complex numbers. Let $a_{i_1, i_2, ..., i_k, j_1, j_2, ..., j_k}$ be the minor of $A$ determined by the rows $(i_1, ..., i_k)$ and the columns $(j_1, ..., j_k)$, $1 \leq i_1 < i_2 < ... < i_k \leq n, 1 \leq j_1 < j_2 < ... < j_k \leq m$. The $k$-th multiplicative compound matrix $A^{(k)}$ of $A$ is the $\binom{n}{k} \times \binom{m}{k}$ matrix whose entries, written in lexicographic order, are $a_{i_1, ..., i_k, j_1, ..., j_k}$.
\end{definition}

In particular, when $A$ is an $n \times k$ matrix with columns $a_1, a_2, ..., a_k$, $A^{(k)}$ is the exterior product $a_1 \lor a_2 \lor ... \lor a_k$.

In the case $m = n$, the additive compound matrices are defined as follows.

\begin{definition}
Let $A$ be an $n \times n$ matrix. The $k$-th additive compound $A^{[k]}$ of $A$ is the $\binom{n}{k} \times \binom{n}{k}$ matrix given by

\begin{equation} \label{compound-matrix}
    A{[k]} = D(I + hA) \|_{h = 0}
\end{equation}

If $B = A^{[k]}$, the following formula for $b_{i,j}$ can be deduced from the equation \eqref{compound-matrix}, For any integer $i = 1, ..., \binom{n}{k}$, let $(i) = (i_1, i_2, ..., i_k)$ be the $i$-th member in the lexicographic ordering of all $k$-tuples of integers such that $1 \leq i_1 < i_2 < ... < i_k \leq n$. Then,

\begin{equation}
    b_{i,j} = 
    \begin{cases}
        \begin{aligned}
            a_{i_1, i_1} + ... + a_{i_k, i_k} \quad & \text{ if } (i) = (j) \\
            (-1)^{r+s} a_{i_s, j_r} \quad & \text{ if exactly one entry $i_s$ in $(i)$ does not occur in $(j)$ and $j_r$ does not occur in $(i)$, } \\
            0 \quad & \text{ if $(i)$ differs from $(j)$ in two or more entries.}
        \end{aligned}
    \end{cases}
\end{equation}
\end{definition}

In the extreme cases when $k = 1$ and $k = n$, we would have that $A^{[1]} = A$ and $A^{[n]} = \text{tr}(A)$. For $n = 3$ ,which is the case that we are considering in this paper, we would have the matrices $A^{[k]}, k = 0, 1, 2$ as follows:

\begin{equation}
    A^{[1]} = A, \quad 
    A^{[2]} = \left(
    \begin{matrix} 
    a_{11} + a_{22} & a_{23} & -a_{13} \\
    a_{32} & a_{11} + a_{33} & a_{12} \\
    -a_{31} & a_{21} & a_{22} + a_{33}, 
    \end{matrix} \right), \quad
    A^{[3]} = a_{11} + a_{22} + a_{33}
\end{equation}

\section*{Conflicts of Interest}
The authors declare that there are no conflicts of interest regarding the publication of this paper.

\section*{Acknowledgement}

The authors would like to thank Nguyen Tran Hai Yen, undergraduate student at the 
Faculty of Biology and Biotechnology, Ho Chi Minh University of Science, VNU - HCM, Class of 2022 for providing valuable biological insights and ideas to support this research.




\nocite{*}

\printbibliography
\end{document}